\documentclass[a4paper,onecolumn,unpublished]{quantumarticle}
\pdfoutput=1
\usepackage{hyperref}
\usepackage{amssymb}
\usepackage{amsmath}
\usepackage{amsthm}
\usepackage{tikzit}
\usepackage[ruled,algosection]{algorithm2e}

\usepackage{silence}
\input{quantum.tikzdefs}
\tikzstyle{gate}=[shape=rectangle, text height=1.5ex, text depth=0.25ex, yshift=0.5mm, fill=white, draw=black, minimum height=3mm, yshift=-0.5mm, minimum width=3mm, font={\small}, tikzit category=circuit, inner sep=2pt]
\tikzstyle{big gate}=[shape=rectangle, text height=1.5ex, text depth=0.25ex, yshift=0.5mm, fill=white, draw=black, minimum height=10mm, yshift=-0.5mm, minimum width=5mm, font={\small}, tikzit category=circuit]
\tikzstyle{Z dot}=[inner sep=0mm, minimum size=2mm, shape=circle, draw=black, fill={rgb,255: red,221; green,255; blue,221}, tikzit category=zx]
\tikzstyle{Z phase dot}=[minimum size=5mm, font={\footnotesize\boldmath}, shape=rectangle, rounded corners=2mm, inner sep=0.2mm, outer sep=-2mm, scale=0.8, tikzit shape=circle, draw=black, fill={rgb,255: red,221; green,255; blue,221}, tikzit draw=blue, tikzit category=zx]
\tikzstyle{X dot}=[Z dot, shape=circle, draw=black, fill={rgb,255: red,255; green,136; blue,136}, tikzit category=zx]
\tikzstyle{X phase dot}=[Z phase dot, tikzit shape=circle, tikzit draw=blue, fill={rgb,255: red,255; green,136; blue,136}, font={\footnotesize\boldmath}, tikzit category=zx]
\tikzstyle{hadamard}=[fill=yellow, draw=black, shape=rectangle, inner sep=0.6mm, minimum height=1.5mm, minimum width=1.5mm, tikzit category=zx]
\tikzstyle{paulibox}=[fill={rgb,255: red,221; green,221; blue,255}, draw=black, shape=rectangle, inner sep=0.6mm, minimum height=5mm, minimum width=5mm, font={\footnotesize}, text height=1.5ex, text depth=0.25ex, tikzit category=zx]
\tikzstyle{vertex}=[inner sep=0mm, minimum size=1mm, shape=circle, draw=black, fill=black, tikzit category=misc]
\tikzstyle{vertex set}=[inner sep=0mm, minimum size=1mm, shape=circle, draw=black, fill=white, font={\footnotesize\boldmath}, tikzit category=misc]
\tikzstyle{small black dot}=[fill=black, draw=black, shape=circle, inner sep=0pt, minimum width=1.2mm, tikzit category=circuit]
\tikzstyle{cnot ctrl}=[fill=black, draw=black, shape=circle, inner sep=0pt, minimum width=1.2mm, tikzit category=circuit]
\tikzstyle{cnot targ}=[fill=white, draw=white, shape=circle, tikzit category=circuit, label={center:$\oplus$}, inner sep=0pt, minimum width=2.1mm, tikzit fill={rgb,255: red,102; green,204; blue,255}, tikzit draw=black]
\tikzstyle{ket}=[fill=white, draw=black, shape=regular polygon, regular polygon sides=3, regular polygon rotate=-30, scale=0.7, inner sep=1pt, tikzit category=circuit, tikzit shape=rectangle, tikzit fill=green]
\tikzstyle{bra}=[fill=white, draw=black, shape=regular polygon, regular polygon sides=3, regular polygon rotate=30, scale=0.7, inner sep=1pt, tikzit category=circuit, tikzit shape=rectangle, tikzit fill=red]
\tikzstyle{scalar}=[shape=rectangle, text height=1.5ex, text depth=0.25ex, yshift=0.5mm, fill=white, draw=black, minimum height=5mm, yshift=-0.5mm, minimum width=5mm, font={\small}]
\tikzstyle{clabel}=[fill=white, draw=none, shape=rectangle, tikzit fill={rgb,255: red,56; green,255; blue,242}, font={\footnotesize}, inner sep=1pt, tikzit category=labels]
\tikzstyle{empty diagram}=[draw={gray!40!white}, dashed, shape=rectangle, minimum width=1cm, minimum height=1cm, tikzit category=misc]
\tikzstyle{amap}=[fill=white, draw=black, shape=NEbox, tikzit category=asymmetric, tikzit fill=yellow, tikzit shape=rectangle]
\tikzstyle{amap conj}=[fill=white, draw=black, shape=NWbox, tikzit category=asymmetric, tikzit fill=green, tikzit shape=rectangle]
\tikzstyle{amap adj}=[fill=white, draw=black, shape=SEbox, tikzit category=asymmetric, tikzit fill=red, tikzit shape=rectangle]
\tikzstyle{amap trans}=[fill=white, draw=black, shape=SWbox, tikzit category=asymmetric, tikzit fill=orange, tikzit shape=rectangle]
\tikzstyle{astate}=[fill=white, draw=black, shape=NEtriangle, tikzit category=asymmetric, tikzit shape=circle, tikzit fill=yellow]
\tikzstyle{astate conj}=[fill=white, draw=black, shape=NWtriangle, tikzit category=asymmetric, tikzit shape=circle, tikzit fill=green]
\tikzstyle{astate adj}=[fill=white, draw=black, shape=SEtriangle, tikzit category=asymmetric, tikzit shape=circle, tikzit fill=red]
\tikzstyle{astate trans}=[fill=white, draw=black, shape=SWtriangle, tikzit category=asymmetric, tikzit shape=circle, tikzit fill=orange]
\tikzstyle{white dot}=[inner sep=0mm, minimum size=2mm, shape=circle, draw=black, fill={rgb,255: red,250; green,250; blue,250}]
\tikzstyle{white phase dot}=[minimum size=5mm, font={\footnotesize\boldmath}, shape=rectangle, rounded corners=2mm, inner sep=0.2mm, outer sep=-2mm, scale=0.8, tikzit shape=circle, draw=black, fill={rgb,255: red,250; green,250; blue,250}, tikzit draw=blue]
\tikzstyle{hbox}=[shape=rectangle, text height=2mm, fill={rgb,255: red,255; green,235; blue,61}, draw=black, minimum height=2mm, minimum width=2mm, font={\small}, tikzit category=zh, inner sep=0pt, rounded corners=0.5mm]
\tikzstyle{Z dot (zh)}=[inner sep=0mm, minimum size=2mm, shape=circle, draw=black, fill={rgb,255: red,250; green,250; blue,250}, tikzit category=zh]
\tikzstyle{X dot (zh)}=[Z dot, shape=circle, draw=black, fill={rgb,255: red,193; green,193; blue,193}, tikzit category=zh]
\tikzstyle{triangle}=[fill={rgb,255: red,255; green,136; blue,136}, draw=black, shape=isosceles triangle, isosceles triangle apex angle=60, minimum size=2.5mm, inner sep=0mm]
\tikzstyle{labelled hbox}=[shape=rectangle, text height=1.75ex, text depth=0.5ex, fill={rgb,255: red,255; green,235; blue,61}, draw=black, minimum height=3mm, minimum width=4mm, font={\small}, tikzit category=zh, inner sep=1.3pt, rounded corners=0.5mm]
\tikzstyle{Z phase dot (zh)}=[Z phase dot, tikzit shape=circle, tikzit draw=blue, fill={rgb,255: red,250; green,250; blue,250}, font={\footnotesize\boldmath}, tikzit category=zh]
\tikzstyle{X phase dot (zh)}=[Z phase dot, tikzit shape=circle, tikzit draw=blue, fill={rgb,255: red,193; green,193; blue,193}, font={\footnotesize\boldmath}, tikzit category=zh]
\tikzstyle{W node}=[fill=black, draw=black, shape=regular polygon, regular polygon sides=3, minimum size=2mm]
\tikzstyle{Z dot (zw)}=[fill=white, draw=black, shape=circle, minimum width=1.2mm, inner sep=0pt]
\tikzstyle{Z phase dot XL}=[Z phase dot, fill={rgb,255: red,250; green,250; blue,250}, draw=black, shape=circle, tikzit draw={rgb,255: red,191; green,0; blue,64}, tikzit shape=circle, font={\large\boldmath}, inner sep=0.0mm]
\tikzstyle{gn}=[fill=green, draw=black, shape=circle, tikzit category=ZX, tikzit fill=green, tikzit draw=black, tikzit shape=circle, inner sep=2pt]
\tikzstyle{rn}=[fill=red, draw=black, shape=circle, tikzit fill=red, tikzit draw=black, tikzit category=ZX, tikzit shape=circle, inner sep=2pt]
\tikzstyle{divide}=[regular polygon, regular polygon sides=3, shape border rotate=90, draw=black, fill=gray, inner sep=1.6pt, tikzit category=scal, rounded corners=0.8mm]
\tikzstyle{black}=[fill=black, draw=black, shape=circle, tikzit fill=black, tikzit draw=black, tikzit shape=circle, tikzit category=IH, inner sep=2pt]
\tikzstyle{gather}=[fill=gray, draw=black, tikzit category=scal, rounded corners=0.8mm, regular polygon, regular polygon sides=3, shape border rotate=-90, inner sep=1.6pt]
\tikzstyle{ggen}=[fill=white, draw=black, shape=rectangle, rounded corners=2mm, line width=1pt, tikzit draw=red, tikzit category=scal]
\tikzstyle{white}=[fill=white, draw=black, shape=circle, inner sep=2pt, tikzit category=IH]
\tikzstyle{mbox}=[fill=white, draw=black, rounded rectangle, rounded rectangle west arc=none, tikzit category=scal, tikzit shape=rectangle]
\tikzstyle{A}=[fill=white, shape=circle, tikzit category=scal, inner sep=1pt]
\tikzstyle{ggreen}=[fill=green, draw=black, shape=circle, tikzit category=SZX, tikzit fill=green, tikzit draw=black, line width=1pt, inner sep=2pt]
\tikzstyle{gred}=[fill=red, draw=black, shape=circle, rounded corners=2mm, tikzit category=SZX, inner sep=2pt, tikzit fill=red, line width=1pt]
\tikzstyle{ghad}=[fill=yellow, draw=black, shape=rectangle, tikzit category=SZX, tikzit shape=rectangle, tikzit fill=yellow, inner sep=2pt, line width=1pt]
\tikzstyle{boxm}=[fill=white, draw=black, rounded rectangle, tikzit category=scal, tikzit shape=rectangle, rounded rectangle east arc=none]
\tikzstyle{box}=[fill=white, draw=black, shape=rectangle]
\tikzstyle{had}=[fill=yellow, draw=black, shape=rectangle, tikzit category=ZX, tikzit fill=yellow, tikzit draw=black, inner sep=2pt]
\tikzstyle{gwhite}=[fill=white, draw=black, shape=circle, tikzit fill=white, tikzit shape=circle, line width=1 pt, inner sep=2 pt, tikzit draw=red]
\tikzstyle{gblack}=[fill=black, draw=black, shape=circle, tikzit fill=black, tikzit shape=circle, line width=1 pt, inner sep=2 pt, tikzit draw=red]
\tikzstyle{antipode}=[fill=red, draw=black, shape=rectangle, tikzit fill=red, tikzit draw=black, tikzit shape=rectangle, inner sep=2pt]
\tikzstyle{diamond}=[fill=white, draw=black, shape=diamond, inner sep=2pt]
\tikzstyle{mongr}=[fill=green, draw=green, shape=circle, inner sep=2pt]
\tikzstyle{monbl}=[fill=blue, draw=black, shape=circle, inner sep=2pt]
\tikzstyle{bg}=[inner sep=0.7mm, minimum width=0pt, minimum height=0pt, fill=green, draw=white, very thick, shape=circle]
\tikzstyle{br}=[inner sep=0.7mm, minimum width=0pt, minimum height=0pt, fill=red, draw=white, very thick, shape=circle]
\tikzstyle{rmat}=[draw, signal, fill=gray, signal to=east, signal from=west, inner sep=1pt, minimum height=6pt]
\tikzstyle{lmat}=[draw, signal, fill=gray, signal to=west, signal from=east, inner sep=1pt, minimum height=6pt]
\tikzstyle{umat}=[draw, signal, fill=gray, signal to=north, signal from=south, inner sep=1pt, minimum width=6pt]
\tikzstyle{dmat}=[draw, signal, fill=gray, signal to=south, signal from=north, inner sep=1pt, minimum width=6pt]
\tikzstyle{new style 0}=[fill=white, draw=black, shape=circle, inner sep=2pt, minimum width=5mm]

\tikzstyle{simple}=[-]
\tikzstyle{hadamard edge}=[-, dashed, dash pattern=on 2pt off 0.5pt, thick, draw={rgb,255: red,68; green,136; blue,255}]
\tikzstyle{box edge}=[-, dashed, dash pattern=on 2pt off 0.5pt, thick, draw={rgb,255: red,203; green,192; blue,225}]
\tikzstyle{brace edge}=[-, tikzit draw=blue, decorate, decoration={brace,amplitude=1mm,raise=-1mm}]
\tikzstyle{diredge}=[->, thick]
\tikzstyle{double edge}=[-, double, shorten <=-1mm, shorten >=-1mm, double distance=2pt]
\tikzstyle{gray edge}=[-, {gray!60!white}]
\tikzstyle{pointer edge}=[->, very thick, gray]
\tikzstyle{boldedge}=[-, line width=1.0pt, shorten <=-0.17mm, shorten >=-0.17mm]
\tikzstyle{bidir edge}=[<->, very thick, draw={rgb,255: red,191; green,191; blue,191}]
\tikzstyle{purple edge}=[->, thick, draw={rgb,255: red,225; green,117; blue,216}]
\tikzstyle{green edge}=[->, thick, draw={rgb,255: red,167; green,231; blue,137}]
\tikzstyle{orange edge}=[->, thick, draw={rgb,255: red,245; green,170; blue,63}]
\tikzstyle{blue edge}=[->, thick, draw={rgb,255: red,68; green,136; blue,255}]
\tikzstyle{any edge}=[->, thick, draw=cyan]
\tikzstyle{red edge}=[->, thick, draw={rgb,255: red,255; green,136; blue,136}]
\tikzstyle{bidiredge}=[<->, thick]
\tikzstyle{dashed diredge}=[->, dashed, dash pattern=on 1pt off 0.5pt]
\tikzstyle{bidashed diredge}=[<->, dashed, dash pattern=on 1pt off 0.5pt]
\tikzstyle{gray fill}=[-, fill={rgb,255: red,234; green,234; blue,234}, draw=black]
\tikzstyle{white fill}=[-, fill=white]
\tikzstyle{arrow}=[->]
\tikzstyle{very thick}=[-, line width=1pt, tikzit draw=red]
\tikzstyle{pointille}=[dashed, -]
\tikzstyle{red}=[-, draw=red]
\tikzstyle{blue}=[-, draw=blue]
\tikzstyle{green}=[-, draw=green]
\tikzstyle{arrow}=[->]
\tikzstyle{strike}=[-, tikzit draw={rgb,255: red,191; green,0; blue,64}, strike through]
\tikzstyle{strike'}=[-, tikzit draw=cyan, strike bend]
\tikzstyle{dashed no arrow}=[-, dashed, dash pattern=on 1pt off 0.5pt]

\newtheorem{lemma}{Lemma}[section]
\newtheorem{theorem}{Theorem}[section]
\newtheorem{definition}{Definition}[section]
\newtheorem*{problem}{Problem}
\newtheorem*{theorem*}{Theorem}

\title{Clifford Circuit Synthesis for Distributed Quantum \newline Architectures with Arbitrary Network Topology}
\author{Tuomas Laakkonen {\small (\href{mailto:tsrl@mit.edu}{tsrl@mit.edu})}}
\affiliation{Massachusetts Institute of Technology}

\begin{document}

\maketitle

\begin{abstract}
    \noindent To achieve large-scale fault-tolerant quantum computation, it may be easier to combine many small sets of qubits than to construct a single large set. For example via quantum error correction with block codes, or distributed quantum processors utilizing shared entanglement. In these regimes, the time or error budget of the overall quantum computation may be dominated by non-local operations. Hence, it is worthwhile to minimize the number of these operations. We consider the case where both non-local and local connectivity may be arbitrarily restricted, and give an asymptotically optimal synthesis method for distributed CNOT and Clifford circuits, based on block-matrix Gaussian elimination. We extend this to all Clifford+R\textsubscript{Z} circuits by generalizing the Pauli exponential circuit representation; this naturally integrates with existing methods for optimizing T-count. As an application, we show how to implement CNOT circuits in a CSS code encoding $n$ logical qubits in $k$ blocks using $O(nk)$ inter-block transversal CNOTs and intra-block Pauli measurements.
\end{abstract}

\section{Introduction}

In preparation for the availability of fault-tolerant quantum computers, it is important to study how quantum circuits can be best compiled for these systems. This is a highly non-trivial process which will include many levels, from error-minimizing physical circuit synthesis, to fault-tolerant implementations of operations for error-correcting codes, to the efficient synthesis of logical-level application circuits. In this work we will look at a particular problem that sits between these last two levels, which is to compile circuits for \emph{distributed} quantum architectures. This is motivated by the following reasoning: to achieve large-scale fault-tolerant quantum computation, it may be easier to combine many small sets of qubits than to construct a single large set. 

Candidates for architectures which have this structure include those based on distributed quantum computing \cite{caleffi2024distributed}, which has been recently demonstrated practically \cite{main2025distributed}, where physically distinct quantum processors are networked together via shared entanglement, and those based on error correction with block codes, where groups of logical qubits are encoded together, and there is a distinction between operations performed within a code block versus those that are performed between code blocks. One recent proposal for an early fault-tolerant quantum processor is the bicycle architecture \cite{yoder2025tour}, which combines both of these by utilizing blocks of the bivariate bicycle code that are networked together with Bell pairs. In systems with qubits partitioned in this way, which we call \emph{distributed}, the quantum gates that operate between partitions may require special handling, and in many cases should be avoided where possible (for instance, in the bicycle architecture these operations have error rates two orders of magnitude higher than other operations \cite[Table 2]{yoder2025tour}). We call such gates \emph{non-local}, and aim to optimize circuits so as to minimize them. 

\subsection{Prior Work}

Previous approaches to optimizing distributed circuits have focused mainly on the scheduling and placement of qubits and gates. In such schemes, the circuits itself is not drastically modified, but instead quantum teleportation of both qubits, gates, and subcircuits are performed using Bell pairs. This is known by various names, for example `tele-gate' and `tele-data' in \cite{vanmeter2006distributed}, and `embedding' in \cite{andresmartinez2024distributing}. The aim of this approach is then to reduce the number of Bell pairs necessary, and an important part of this is deciding on how the qubits should be placed, and given such a placement, when and to where gates or qubits should be teleported. A variety of techniques have been proposed to solve this, for example using (hyper)graph partitioning to optimize qubit placement \cite{andresmartinez2024distributing,wu2023entanglement}, covering problems to optimize where to teleport gates \cite{wu2023entanglement,sundaram2022distribution}, and Steiner trees to handle architectures where not all partitions can communicate \cite{andresmartinez2024distributing}. 

By contrast, our approach is fundamentally different, and aims to generate circuits with \emph{fewer} non-local gates in the first instance, rather than to provide an optimal implementation of the non-local gates already present in a circuit. The non-local gates used in our representation can be implemented using a teleportation protocol with a single Bell pair, but we do not attempt to optimize it further than this. Therefore, we do not aim to replace these approaches but to complement them; our algorithms can be easily combined with existing distribution methods.

\subsection{Our Approach}

\begin{figure}[t]
    \resizebox{\textwidth}{!}{$$\input{./figures/cnot-example-overview.tikz}$$}
    \caption{An example of how the a CNOT circuit is optimized using the methods presented in this paper. We have an architecture with three partitions of two qubits, and the partitions have linear nearest neighbor connectivity. \emph{(A)} We start with a CNOT circuit that is converted into our circuit representation using Algorithm \ref{alg:localcliffpush}. Grey boxes represent generalized CNOT gates, and are the only non-local gates remaining. Note that at this point there may be non-local gates that do not respect the connectivity constraints. \emph{(B)} The non-local portion of the circuit is converted into a block parity matrix. \emph{(C)} Using \textsc{BlockRowCol}, a new circuit is synthesized from the parity matrix. This reduces the number of non-local gates while also respecting the connectivity constraints. \emph{(D)} Using Algorithm \ref{alg:resynth}, this is converted back into a circuit of CNOT gates, without increasing the number of non-local gates.}\label{fig:overview-cnot}
\end{figure}

In this work, we tackle the following problem for CNOT, Clifford, and Clifford+R\textsubscript{Z} circuits:
\begin{problem}
    Given a circuit $C$, find an equivalent circuit $C_\mathrm{opt} \equiv C$ which has fewer non-local gates than $C$ and respects the connectivity constraints of the architecture.
\end{problem}

The way that we will accomplish this is to pick a circuit representation which abstracts away \emph{all} local Clifford gates, leaving us with only non-local or non-Clifford gates, and see how existing algorithms for circuit resynthesis can be adapted to operate directly on this representation. In particular, we generalize the Pauli exponential representation of circuits \cite[Section 7.3]{kissingerwetering2024book}, and consider circuits made of two types of gates: \emph{Pauli exponentials} (below left) which will encode non-Clifford operations, and \emph{generalized CNOT gates} (below right) which will encode non-local operations. These are multi-qubit gates parameterized by two Pauli strings (tensor products of Pauli matrices) $A$ and $B$, and an angle $\theta$ (drawn in the style of \cite{litinski2019game}):
$$ P(A, \theta) = e^{-i\frac{\theta}{2}A} = \begin{tikzpicture}
	\begin{pgfonlayer}{nodelayer}
		\node [style=none] (0) at (-0.375, 1.25) {};
		\node [style=none] (1) at (0.625, 1.25) {};
		\node [style=none] (2) at (-0.375, -1.25) {};
		\node [style=none] (3) at (0.625, -1.25) {};
		\node [style=none] (5) at (0.125, 0) {$A$};
		\node [style=none] (6) at (-0.375, 0.75) {};
		\node [style=none] (7) at (-0.375, -0.75) {};
		\node [style=none] (8) at (0.625, 0.75) {};
		\node [style=none] (9) at (0.625, -0.75) {};
		\node [style=none] (10) at (0.85, 0) {$\theta$};
		\node [style=none] (11) at (0.625, 0.375) {};
		\node [style=none] (12) at (0.625, -0.375) {};
		\node [style=none] (13) at (1.125, 0) {};
		\node [style=none] (14) at (1.75, -0.75) {};
		\node [style=none] (15) at (-1, -0.75) {};
		\node [style=none] (16) at (-1, 0.75) {};
		\node [style=none] (17) at (1.75, 0.75) {};
		\node [style=none] (18) at (1.5, 0.25) {$\vdots$};
		\node [style=none] (19) at (-0.75, 0.25) {$\vdots$};
	\end{pgfonlayer}
	\begin{pgfonlayer}{edgelayer}
		\draw [style=gray fill] (1.center)
			 to (3.center)
			 to (2.center)
			 to (0.center)
			 to cycle;
		\draw [style=gray fill] (12.center)
			 to [in=-90, out=0, looseness=1.50] (13.center)
			 to [in=0, out=90, looseness=1.50] (11.center);
		\draw (14.center) to (9.center);
		\draw (15.center) to (7.center);
		\draw (16.center) to (6.center);
		\draw (17.center) to (8.center);
		\draw (11.center) to (12.center);
	\end{pgfonlayer}
\end{tikzpicture} \qquad C(A, B) = e^{i\frac{\pi}{4}(A\otimes I + I \otimes B - A \otimes B - I \otimes I)} = \begin{tikzpicture}
	\begin{pgfonlayer}{nodelayer}
		\node [style=none] (0) at (-1.25, 1.5) {};
		\node [style=none] (1) at (-1.25, 0.5) {};
		\node [style=none] (2) at (-0.875, 1.2) {$\vdots$};
		\node [style=none] (3) at (-0.5, 1.75) {};
		\node [style=none] (4) at (0.5, 1.75) {};
		\node [style=none] (5) at (-0.5, 0.25) {};
		\node [style=none] (6) at (0.5, 0.25) {};
		\node [style=none] (7) at (0, 1) {$A$};
		\node [style=none] (8) at (1.25, 1.5) {};
		\node [style=none] (9) at (1.25, 0.5) {};
		\node [style=none] (10) at (0.875, 1.2) {$\vdots$};
		\node [style=none] (11) at (-0.5, 1.5) {};
		\node [style=none] (12) at (-0.5, 0.5) {};
		\node [style=none] (13) at (0.5, 0.5) {};
		\node [style=none] (14) at (0.5, 1.5) {};
		\node [style=none] (45) at (0, 0.25) {};
		\node [style=none] (46) at (-1.25, -1.5) {};
		\node [style=none] (47) at (-1.25, -0.5) {};
		\node [style=none] (48) at (-0.875, -0.8) {$\vdots$};
		\node [style=none] (49) at (-0.5, -1.75) {};
		\node [style=none] (50) at (0.5, -1.75) {};
		\node [style=none] (51) at (-0.5, -0.25) {};
		\node [style=none] (52) at (0.5, -0.25) {};
		\node [style=none] (53) at (0, -1) {$B$};
		\node [style=none] (54) at (1.25, -1.5) {};
		\node [style=none] (55) at (1.25, -0.5) {};
		\node [style=none] (56) at (0.875, -0.8) {$\vdots$};
		\node [style=none] (57) at (-0.5, -1.5) {};
		\node [style=none] (58) at (-0.5, -0.5) {};
		\node [style=none] (59) at (0.5, -0.5) {};
		\node [style=none] (60) at (0.5, -1.5) {};
		\node [style=none] (61) at (0, -0.25) {};
	\end{pgfonlayer}
	\begin{pgfonlayer}{edgelayer}
		\draw [style=gray fill] (5.center)
			 to (3.center)
			 to (4.center)
			 to (6.center)
			 to cycle;
		\draw (11.center) to (0.center);
		\draw (12.center) to (1.center);
		\draw (13.center) to (9.center);
		\draw (14.center) to (8.center);
		\draw [style=gray fill] (52.center)
			 to (51.center)
			 to (49.center)
			 to (50.center)
			 to cycle;
		\draw (57.center) to (46.center);
		\draw (58.center) to (47.center);
		\draw (59.center) to (55.center);
		\draw (60.center) to (54.center);
		\draw (45.center) to (61.center);
	\end{pgfonlayer}
\end{tikzpicture} $$
In Section \ref{sec:representation}, we will show how to convert any Clifford+R\textsubscript{Z} circuit into this representation. With this in hand, we can study how these gates act when applied to parity matrices, stabilizer tableaux, and phase polynomials \cite[Sections 4.1, 6.4, 7.1]{kissingerwetering2024book}\cite{aaronson2004improved}, and use this to generalize three existing algorithms to output generalized CNOT gates directly:
\begin{itemize}
    \item For CNOT circuits, we show how \textsc{RowCol} \cite{wu2023optimization} can be made to operate on block matrices, and how non-local operations connect with low-rank submatrices, giving an algorithm we call \textsc{BlockRowCol} (Section \ref{sec:cnot}).
    \item For Clifford circuits, we adapt the method of \cite{winderl2023architecture} by showing how stabilizer tableaux can be disentangled into per-partition sub-tableaux, leading to an algorithm we call \textsc{DistRowCol} (Section \ref{sec:clifford}).
    \item For Clifford+R\textsubscript{Z} circuits, we give a modification of the standard Pauli exponential resynthesis method \cite{nash2020quantum}, with heuristics adapted from \cite{vandaele2022phase}, and generalize the phase-folding optimization \cite[Section 7.4.3]{kissingerwetering2024book} (Section \ref{sec:clifft}).
\end{itemize}
Our goal with this is not just to provide specific optimized routines for circuit resynthesis in distributed settings, but also to convince you that essentially \emph{any} existing method based on parity matrices, stabilizer tableaux, or phase polynomials ought to be easily adaptable to use these block-wise operations.

\subsection{Overview of Results}

Our most important theoretical result is the following:
\begin{theorem*}
    \textsc{BlockRowCol} and \textsc{DistRowCol} use at most $2n(k - 1)$ non-local gates to synthesize a CNOT or Clifford circuit on $n$ qubits divided into $k$ equal-sized partitions, regardless of the connectivity between partitions. Moreover, this is asymptotically optimal whenever $k = o(n / \log n)$. For $k = 2$, with any CNOT circuit as input, \textsc{BlockRowCol} uses at most twice as many non-local gates as the optimal equivalent CNOT circuit.
\end{theorem*}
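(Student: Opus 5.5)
The theorem has three claims, and I would prove them separately: the upper bound $2n(k-1)$, asymptotic optimality for $k = o(n/\log n)$, and the factor-two approximation for $k=2$.

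\emph{Upper bound.} Write the CNOT circuit as an invertible parity matrix $M \in GF(2)^{n\times n}$, viewed as a $k\times k$ block matrix with blocks $M_{ij}$ of size $m = n/k$. The key lemma I would use is that a single generalized CNOT $C(A,B)$ between partitions $i$ and $j$, combined with free local CNOTs, acts on the block matrix as a rank-one update: it adds $u v^T$ into the $(i,j)$ block, together with the corresponding row operation on block row $i$. Consequently a block row operation ``add $L\cdot(\text{block row } j)$ to block row $i$'' costs $\mathrm{rank}(L)\le m$ non-local gates.

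\textsc{BlockRowCol} then follows \textsc{RowCol} on the block level. It repeatedly picks a non-cut vertex $v$ of the partition connectivity graph, clears block column $v$ and block row $v$ using a Steiner tree, makes the diagonal block invertible with local operations, and removes $v$. Each eliminated partition needs two sweeps, one for the column and one for the row. I expect the main obstacle to be making each sweep cost only $m$ non-local gates in total rather than $m$ per tree edge. I would handle this by using local invertible operations to bring the pivot block into a rank-revealing form, and by arguing that the non-local work along the Steiner tree can be charged to the at most $m$ independent rows of partition $v$. Done this way, each sweep costs $m$ and each eliminated partition costs $2m$. Over $k-1$ partitions this gives $2m(k-1)\cdot$ (per-partition charge), and I must account carefully so that the total is $2n(k-1)$; the looseness in the statement (the bound is a factor $k$ weaker than $2m(k-1)$) suggests that routing along Steiner trees with charge at most $n$ per sweep is what is intended, which is easier to prove. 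For Clifford circuits, \textsc{DistRowCol} works the same way on the tableau: disentangling partition $v$ needs one sweep for X-type rows and one for Z-type rows, each of the same rank-bounded cost.

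\emph{Optimality.} I would use a counting argument. Fix the placement of $t$ non-local gates. Between consecutive non-local gates, the local circuits contribute at most $\prod |GL(m)|^k$, up to Clifford factors. Each non-local gate has at most $k^2 4^{2m}$ choices. The number of circuits is then at most $|G_{\text{loc}}|^{t+1}\,(k^2 16^m)^t$, whereas $|GL(n)| \approx 2^{n^2}$. The local group contributes at most $2^{k m^2} = 2^{n^2/k}$, so I need roughly $t(n^2/k + O(n/k) + 2\log k) \ge n^2 - n^2/k$. This only gives $t=\Omega(k)$, which is too weak. To fix this I would absorb the local parts: quotient by local operations on both sides, and note that the $t$ rank-one updates determine the double coset, so the number of double cosets is at most $(k^2 16^m)^t$. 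Since that count is about $2^{n^2 - n^2/k}$, we get $t \ge n^2(1-1/k)/(4m+2\log k) = \Omega(nk)$ when $\log k = o(n/k)$, i.e.\ $k = o(n/\log n)$. This step needs care and is the second obstacle.

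\emph{Factor two for $k=2$.} Any circuit that realizes $M$ with $t$ non-local gates changes the off-diagonal block $M_{12}$ by at most rank one per gate, so $t \ge \mathrm{rank}(M_{12})$, and similarly for $M_{21}$. It therefore suffices to show that \textsc{BlockRowCol} uses at most $\mathrm{rank}(M_{12})+\mathrm{rank}(M_{21}) \le 2\max(\cdot)$ non-local gates, i.e.\ that clearing each off-diagonal block costs exactly its rank. That follows from the rank-one lemma once the first sweep is shown not to increase the rank of the other block.
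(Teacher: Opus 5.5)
Your proposal has genuine gaps in all three parts; the most concrete one is in the factor-two argument for $k=2$.

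That argument rests on a false claim: \textsc{BlockRowCol} cannot, in general, clear both off-diagonal blocks with only $r_{AB}+r_{BA}$ gates, where $r_{AB}=\mathrm{rank}(M_{AB})$ and $r_{BA}=\mathrm{rank}(M_{BA})$. For $m=1$ and $M$ the SWAP matrix, both off-diagonal ranks are $1$, yet SWAP needs three non-local CNOTs. The failing step is the one you wave through, the pivot step. Before $M_{BA}$ can be cleared you need $\mathrm{rsp}(M_{BA})\subseteq\mathrm{rsp}(M_{AA})$, i.e.\ $M_{AA}$ invertible. Local operations cannot achieve this: they multiply block rows (or block columns) by invertible matrices, so they preserve every block rank. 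The same mistake appears in your upper-bound sketch, where you make the diagonal block invertible `with local operations'.

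The paper instead spends up to $r_{BA}$ ZX-type gates from $B$ to $A$ (Lemma~\ref{thm:blockmakepivot}). Each of these is a rank-one update of block row $A$ and can raise $\mathrm{rank}(M_{AB})$ by one. One ordering therefore costs at most
\begin{itemize}
\item $r_{BA}$ for the fill-in,
\item $r_{BA}$ for eliminating $M_{BA}$,
\item $r_{AB}+r_{BA}$ for clearing the corresponding block of $(M^{-1})^T$, whose rank equals the current rank of $M_{AB}$ by Lemma~\ref{thm:zeroblockinv}.
\end{itemize}
That totals $3r_{BA}+r_{AB}$. You are missing two ingredients that close the argument. First, use the stronger lower bound $t\ge r_{AB}+r_{BA}$: a ZX-type gate changes only one block row, hence only one off-diagonal block, by rank at most one. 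Second, rerun with $A$ and $B$ swapped and keep the cheaper result, giving $3\min\{r_{AB},r_{BA}\}+\max\{r_{AB},r_{BA}\}\le 2(r_{AB}+r_{BA})$. With your weaker lower bound $\max\{r_{AB},r_{BA}\}$, even the correct upper bound would only yield a factor of $4$.

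Your upper-bound accounting also needs repair. The target of $m$ non-local gates per sweep is impossible. A gate from $a$ to $b$ modifies only block row $b$, so a block column with $k'-1$ off-diagonal blocks of rank $m$ needs at least $(k'-1)m$ gates to clear. A total of $2m(k-1)=O(n)$ would also contradict your own $\Omega(nk)$ lower bound. Your fallback of at most $n$ per sweep is false for $k\ge 3$ as well: on a line with the pivot at one end, fill-in and elimination can each cost $m$ on all $k-1$ tree edges, i.e.\ $2(k-1)m>n$.

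The correct count is per tree edge. With $k'$ partitions left, the Steiner tree has at most $k'-1$ edges. Each edge carries at most one fill-in and one elimination block row addition, each of rank at most $m$, so one \textsc{BlockElimCol} costs at most $2(k'-1)m$. There are two per partition, and summing over $k'=k,\dots,2$ gives $4m\binom{k}{2}=2n(k-1)$. For \textsc{DistRowCol} there is no rank structure to appeal to. The same total comes from spending at most $2(k'-1)$ generalized CNOTs on each of the $2m$ stabilizer and destabilizer rows of the partition being disentangled.

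Finally, in the counting argument, quotienting by local operations on both sides loses too much. The number of double cosets is only guaranteed to be at least $|GL(n)|/|GL(m)^k|^2\approx 2^{n^2-2n^2/k}$, which is vacuous at $k=2$; it is not $\approx 2^{n^2-n^2/k}$ as you claim. Instead, push all local gates to one end using $U^\dagger C(A,B)U=C(U^\dagger AU,B)$. Then every implementation is a word of $t$ generalized CNOTs followed by a single local operation. Two CNOT-circuit unitaries in the same one-sided coset differ by a local permutation unitary, i.e.\ a block-diagonal parity matrix. Hence $\#I_n\le(k(k-1)16^m)^t\,\#I_m^k$, which is exactly the inequality your final bound uses.
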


In practice, we benchmark our algorithms against the implementation of \cite{andresmartinez2024distributing} given in the \texttt{pytket-dqc} Python package (Section \ref{sec:benchmarks}). We find that for CNOT and Clifford circuits our algorithms usually have fewer or comparable numbers of non-local gates to the circuits obtained by \texttt{pytket-dqc} both on its own and when combined with a standard resynthesis method \cite{wu2023optimization,nash2020quantum,vandenberg2020simple}. For Clifford+T our method can be significantly worse for some architectures, but our results suggest that this is due to the general inefficiency of any kind of circuit resynthesis in this setting.

We also see that because our method is simple and based on tableaux, which are memory-efficient and lend themselves to efficient implementation, our methods are fast enough to scale to large instances. For instance, we successfully ran \textsc{BlockRowCol} on instances with up to $512$ qubits and $128$ partitions, with circuits of more than $2 \cdot 10^5$ gates. Our implementation is available publicly \cite{github}, and is \emph{not} particularly well-optimized; we expect that with more effort, it would be possible to reach the scale of thousands of qubits, which is the scale required for interesting quantum algorithms \cite{gidney2025how}. Finally, to provide further motivation for why these techniques may be interesting, we discuss in Section \ref{sec:discussion} some applications to practical architectures, including:
\begin{enumerate}
    \item We show how to synthesize a logical CNOT circuit in a CSS code using inter-block transversal CNOTs and intra-block Pauli measurements. In particular, for the recently introduced \emph{phantom codes} \cite{koh2026entangling}, this has a complexity that scales quadratically in the number of code blocks, not the number of logical qubits.
    \item We consider the problem of classically simulating quantum circuits with tensor networks, and show that a variation on \textsc{BlockRowCol} can enable asymptotically improved application of CNOT circuits to a tree tensor network \cite{seitz2023simulating}.
\end{enumerate}
We note that the methods presented here are preliminary, and focus on how this abstracted representation of non-local gates can be applied to existing methods. There are many practical details that could be optimized heuristically to provide better results, or extensions that could be made. We conclude in Section \ref{sec:future} by discussing some of these.

\clearpage

\tableofcontents 

\clearpage

\section{Circuit Representation}
\label{sec:representation}

In this section, we define a representation of distributed Clifford+R\textsubscript{Z} circuits that abstracts away all local Clifford operations. This is a generalization of the usual Pauli exponential representation which has been used in various compilation procedures (e.g \cite{nash2020quantum}). We will represent a circuit as a combination of two types of operations: Pauli exponentials that act locally, and generalized CNOT gates that act non-locally. These then are organized in a directed acyclic graph (DAG) which captures the structure of the circuit modulo any possible pairwise commutation of gates. The defining motivation of this representation is that we will show that any generalized CNOT gate can be implemented with \emph{exactly one non-local CNOT gate}.  

Throughout, we will assume that we are compiling for a distributed quantum computer where the qubits are divided into $P$ \emph{partitions}. Each partition $i$ contains some number of qubits, and we write $q_{ij}$ for the $j$th qubit of partition $i$. We call a gate \emph{non-local} if it acts on some qubits $q_{i_1j_1}$ and $q_{i_2j_2}$ where $i_1 \neq \pm I_2$ (i.e not within the same partition), and \emph{local} otherwise. For any circuit $C$ acting on $\{q_{i,j}\}$ we will define in the following subsections its representation, given by
$$ C \equiv (\mathcal{D}(C_N), C_L) $$
where $\mathcal{D}(C_N)$ is a DAG that encodes the non-local and non-Clifford operations, and $C_L$ is a circuit containing only local Clifford operations.

\subsection{Pauli Exponentials}

Before defining the generalized gates that we will use to build our circuit representation, we will first set up our notation for the usual Pauli exponential representation. Let $\mathcal{P} = \{ i^k M \mid k \in \mathbb{Z}, M \in \{I, X, Y, Z\}\}$ be the group generated by the Pauli matrices:
$$ I = \begin{pmatrix}1&0\\0&1\end{pmatrix},\quad X = \begin{pmatrix}0&1\\1&0\end{pmatrix},\quad Y = \begin{pmatrix}0&-i\\i&0\end{pmatrix},\quad Z = \begin{pmatrix}1&0\\0&-1\end{pmatrix} $$
Furthermore, let $\mathcal{P}_n = \{ P_1 \otimes \cdots \otimes P_n \mid P_1, \dots, P_n \in \mathcal{P} \}$ be the group of $n$-fold tensor products of elements of $\mathcal{P}$. The $n$-qubit Clifford group $\mathcal{C}_n$ is defined \cite[Section 3.5]{gottesman2009introduction} as the normalizer of $\mathcal{P}_n$ within the $2^n$-dimensional unitary group $U(2^n)$:
$$ \mathcal{C}_n = \{ V \in U(2^n) \mid \forall P \in \mathcal{P}_n, ~V P V^\dagger \in \mathcal{P}_n \} $$ 
Up to a global phase factor, this group is equivalent to the circuits consisting of CNOT, Hadamard, and $S$-gates, hence we will call elements of $\mathcal{C}_n$ Clifford operations or circuits.

Now note that $\mathcal{P}$ consists of four copies of the Pauli matrices, multiplied by the phases $\pm 1$ and $\pm i$; in the same way, $\mathcal{P}_n$ consists of four copies of all $n$-fold tensor products of Pauli matrices. An $i$-phase will appear when multiplying any pair of elements that do not commute, e.g $XY = iZ$. However, a crucial property of $\mathcal{C}_n$ is that, when an element $V \in \mathcal{C}_n$ acts on $P \in \mathcal{P}_n$ as $VPV^\dagger = Q \in \mathcal{P}_n$, an $i$-phase is \emph{never} introduced. That is, for Pauli matrices $P_1, \dots, P_n$, we have
$$ V(P_1 \otimes \cdots \otimes P_n)V^\dagger = \pm Q_1 \otimes \cdots Q_n $$
for some Pauli matrices $Q_1, \dots, Q_n$. This property is used often when constructing stabilizer tableaux \cite{aaronson2004improved}, which track a set of Pauli matrices and signs $\pm 1$. 

This property then motivates us to define a smaller group $\mathcal{P}^\pm = \{ \pm I, \pm X, \pm Y, \pm Z \}$ of Pauli matrices with only $\pm 1$ phases. This is not a group with respect to standard matrix multiplication, but we instead define \emph{phaseless multiplication} as follows:
$$ \overline{PQ} = \begin{cases}
    PQ & \text{if } PQ = QP \\
    -iPQ & \text{if } PQ \neq QP
\end{cases} $$
As before, we can define the set $\mathcal{P}^\pm_n = \{ \pm M_1 \otimes \cdots \otimes M_n \mid M_i \in \{I, X, Y, Z\} \}$ of $n$-fold tensor products of $\mathcal{P}^\pm$, which is a group under phaseless multiplication. This is the smallest group for which $\mathcal{C}_n$ is the normalizer; if we were to remove the $\pm 1$ signs and consider only tensor products of Pauli matrices, this would no longer be true (e.g $HYH = -Y$). We will call the elements of $\mathcal{P}^\pm_n$ \emph{($n$-qubit) Pauli strings}. For Pauli strings with a single non-identity qubit we will write $P_k$ to indicate a $P$ matrix on the $k$th qubit and identity elsewhere.

Given a Pauli string $A$ and an angle $\theta$, we define the \emph{Pauli exponential} $P(A, \theta)$:
$$ P(A, \theta) = e^{-i\frac{\theta}{2}A}$$
We will write $P_i(A, \theta)$ when we want to mark that $A$ acts on the $i$th partition. Note that any circuit $C$ written \cite[Section 7.4.2]{kissingerwetering2024book} in the Clifford+T gateset can be presented as a product of Pauli exponentials $C = P(A_1, \frac{\pi}{4}k_1) \cdots P(A_m, \frac{\pi}{4}k_m)$ where each angle $\theta$ is an integer multiple of $\frac{\pi}{4}$ (or a multiple of $\frac{\pi}{2}$ in the case of Clifford circuits). In circuit diagrams we draw these operations in the style of \cite{litinski2019game}:
$$ P(\pm A_1 \otimes \cdots \otimes A_n, \theta) ~~=~~ \begin{tikzpicture}
	\begin{pgfonlayer}{nodelayer}
		\node [style=none] (0) at (0, 1) {};
		\node [style=none] (1) at (0, -1) {};
		\node [style=none] (2) at (0.5, 0.25) {$\vdots$};
		\node [style=none] (3) at (1, 1.5) {};
		\node [style=none] (4) at (2.25, 1.5) {};
		\node [style=none] (5) at (1, -1.5) {};
		\node [style=none] (6) at (2.25, -1.5) {};
		\node [style=none] (7) at (1.625, 1) {$A_1$};
		\node [style=none] (9) at (3.75, 1) {};
		\node [style=none] (10) at (3.75, -1) {};
		\node [style=none] (11) at (3.25, 0.25) {$\vdots$};
		\node [style=none] (12) at (1, 1) {};
		\node [style=none] (13) at (1, -1) {};
		\node [style=none] (14) at (2.25, -1) {};
		\node [style=none] (15) at (2.25, 1) {};
		\node [style=none] (16) at (1.625, -1) {$A_n$};
		\node [style=none] (17) at (1.625, 0.25) {$\vdots$};
		\node [style=none] (18) at (2.55, -0.375) {$\theta$};
		\node [style=none] (19) at (2.25, 0.75) {};
		\node [style=none] (20) at (2.85, 0.5) {};
		\node [style=none] (21) at (2.85, -0.5) {};
		\node [style=none] (22) at (2.25, -0.75) {};
		\node [style=none] (23) at (2.55, 0.375) {$\pm$};
		\node [style=none] (24) at (2.25, 0) {};
		\node [style=none] (25) at (2.85, 0) {};
	\end{pgfonlayer}
	\begin{pgfonlayer}{edgelayer}
		\draw [style=gray fill] (5.center)
			 to (3.center)
			 to (4.center)
			 to (6.center)
			 to cycle;
		\draw (12.center) to (0.center);
		\draw (13.center) to (1.center);
		\draw (14.center) to (10.center);
		\draw (15.center) to (9.center);
		\draw [style=gray fill] (19.center)
			 to (22.center)
			 to [in=-90, out=0, looseness=1.25] (21.center)
			 to (20.center)
			 to [in=0, out=90, looseness=1.25] cycle;
		\draw [style=gray fill] (25.center) to (24.center);
	\end{pgfonlayer}
\end{tikzpicture} $$
Pauli exponentials satisfy some general relations, including:
\begin{align*}
    P(A, \theta_1)P(A, \theta_2) &= P(A, \theta_1 + \theta_2) \\
    P(-A, \theta) &= P(A, -\theta) \\
    P(A, 0) &= I 
\end{align*}
And they commute whenever the corresponding Pauli strings commute:
$$ P(A_1, \theta_1)P(A_2, \theta_2) = P(A_2, \theta_2)P(A_1, \theta_1) \quad\iff\quad A_1A_2 = A_2A_1 $$
The properties have been used to optimize Clifford+T circuits via the phase-folding \cite[Section 7.4.3]{kissingerwetering2024book} and phase-teleportation \cite{kissinger2020reducing,zhang2019optimizing} methods.

\subsection{Generalized CNOT Gates}

Now we define generalized CNOT gates as follows. Let $A \in \mathcal{P}^\pm_n$ and $B \in \mathcal{P}^\pm_m$ be Pauli strings acting on two separate partitions of sizes $n$ and $m$, then we define the generalized CNOT gate $C(A, B)$ as
$$ C(A, B) = e^{i\frac{\pi}{4} (A \land B)} \qquad A \land B = A \otimes I + I \otimes B - A \otimes B - I \otimes I$$
which, since all of the terms of $A \land B$ commute, can also be written as:
\begin{align*}
C(A, B) &= \frac{e^{-i\frac{\pi}{4}}}{2\sqrt{2}}\left[I\otimes I + iA \otimes I\right]\left[I\otimes I + iI \otimes B\right]\left[I \otimes I - iA \otimes B\right] \\ 
&= \frac{I \otimes I + A \otimes I + I \otimes B - A \otimes B}{2} = I \otimes I + \frac{A \land B}{2} 
\end{align*}
If $A$ and $B$ act on the $i$th and $j$th partitions, then we will write $C_{ij}(A, B)$. When $A$ and $B$ are single-qubit, this generalizes the CNOT gate - for example, we have:
$$ C(Z, Z) = CZ \quad C(Z, X) = CNOT \quad C(X, Z) = SWAP \cdot CNOT \cdot SWAP$$
These operations have been defined previously (up to a global phase) in \cite{grier2022classification}, and they are called $A$-controlled-$B$ gates in \cite{litinski2019game}. In circuit diagrams we will draw these operations again in the style of \cite{litinski2019game}:
$$ C(A, B) ~~=~~ \begin{tikzpicture}
	\begin{pgfonlayer}{nodelayer}
		\node [style=none] (0) at (-1.25, 1.5) {};
		\node [style=none] (1) at (-1.25, 0.5) {};
		\node [style=none] (2) at (-0.875, 1.2) {$\vdots$};
		\node [style=none] (3) at (-0.5, 1.75) {};
		\node [style=none] (4) at (0.5, 1.75) {};
		\node [style=none] (5) at (-0.5, 0.25) {};
		\node [style=none] (6) at (0.5, 0.25) {};
		\node [style=none] (7) at (0, 1) {$A$};
		\node [style=none] (8) at (1.25, 1.5) {};
		\node [style=none] (9) at (1.25, 0.5) {};
		\node [style=none] (10) at (0.875, 1.2) {$\vdots$};
		\node [style=none] (11) at (-0.5, 1.5) {};
		\node [style=none] (12) at (-0.5, 0.5) {};
		\node [style=none] (13) at (0.5, 0.5) {};
		\node [style=none] (14) at (0.5, 1.5) {};
		\node [style=none] (45) at (0, 0.25) {};
		\node [style=none] (46) at (-1.25, -1.5) {};
		\node [style=none] (47) at (-1.25, -0.5) {};
		\node [style=none] (48) at (-0.875, -0.8) {$\vdots$};
		\node [style=none] (49) at (-0.5, -1.75) {};
		\node [style=none] (50) at (0.5, -1.75) {};
		\node [style=none] (51) at (-0.5, -0.25) {};
		\node [style=none] (52) at (0.5, -0.25) {};
		\node [style=none] (53) at (0, -1) {$B$};
		\node [style=none] (54) at (1.25, -1.5) {};
		\node [style=none] (55) at (1.25, -0.5) {};
		\node [style=none] (56) at (0.875, -0.8) {$\vdots$};
		\node [style=none] (57) at (-0.5, -1.5) {};
		\node [style=none] (58) at (-0.5, -0.5) {};
		\node [style=none] (59) at (0.5, -0.5) {};
		\node [style=none] (60) at (0.5, -1.5) {};
		\node [style=none] (61) at (0, -0.25) {};
	\end{pgfonlayer}
	\begin{pgfonlayer}{edgelayer}
		\draw [style=gray fill] (5.center)
			 to (3.center)
			 to (4.center)
			 to (6.center)
			 to cycle;
		\draw (11.center) to (0.center);
		\draw (12.center) to (1.center);
		\draw (13.center) to (9.center);
		\draw (14.center) to (8.center);
		\draw [style=gray fill] (52.center)
			 to (51.center)
			 to (49.center)
			 to (50.center)
			 to cycle;
		\draw (57.center) to (46.center);
		\draw (58.center) to (47.center);
		\draw (59.center) to (55.center);
		\draw (60.center) to (54.center);
		\draw (45.center) to (61.center);
	\end{pgfonlayer}
\end{tikzpicture} $$

The most important property of generalized CNOT gates is that they can be implemented using a single non-local CNOT operation. Given $C(A, B)$ we can always find \cite{vandenberg2020simple} local Clifford operations $U$ and $V$ such that $UAU^\dagger = I^{\otimes n-1} \otimes Z$ and $VBV^\dagger = X \otimes I^{\otimes m - 1}$. Then we have
\ctikzfig{gen-cnot-into-cz}
where only the CNOT operation is non-local. Therefore, we will aim to minimize the number of generalized CNOT gates in our circuits, since this minimizes the number of non-local gates.

We will now discuss some other useful properties of generalized CNOT gates. Since $C(A, B)$ gates are Clifford, they conjugate Pauli strings to Pauli strings, and in particular they act as
$$ C(A, B) \cdot [P \otimes Q] \cdot C(A, B) = \begin{cases}
    P \otimes Q & \text{if } AP = PA \land BQ = QB \\
    P \otimes \overline{QB} & \text{if } AP \neq PA \land BQ = QB \\
    \overline{PA} \otimes Q & \text{if } AP = PA \land BQ \neq QB \\
    \overline{PA} \otimes \overline{QB} & \text{if } AP \neq PA \land BQ \neq QB
\end{cases}$$
so we can interpret $C(A, B)$ as applying $B$ to the second partition conditioned on the first partition anticommuting with $A$, and vice-versa. In Sections \ref{sec:cnot} and \ref{sec:clifford} we will use this to determine how they act on parity matrices and stabilizer tableaux. Additionally, they commute when the Pauli strings in each partition commute individually
\begin{align*}
    C(A, B) \cdot C(P, Q) = C(P, Q) \cdot C(A, B) &\iff AP = PA \land BQ = QB \\
    [C(A, B) \otimes I] \cdot [I \otimes C(P, Q)] = [I \otimes C(P, Q)] \cdot [C(A, B) \otimes I] &\iff BP = PB
\end{align*}
and they commute with Pauli exponentials if the Pauli strings on the shared partition commute:
$$C(A, B) \cdot [P(C, \theta) \otimes I] = [P(C, \theta) \otimes I] \otimes C(A, B) \iff AC = CA $$
Any generalized CNOT gate where (at least) one Pauli string is the identity can be removed or simplified:
$$ C(A, I) = I \otimes I \qquad C(A, -I) = P(A, \pi) \otimes I = A \otimes I$$
Finally, any two generalized CNOT gates on the same partitions which are the same on one partition can be combined into a single generalized CNOT gate:
$$ C(A, B_1) \cdot C(A, B_2) = \begin{cases}
    C(A, \overline{B_1B_2}) & \text{if } B_1B_2 = B_2B_1 \\
    [P(A, \frac{\pi}{2}) \otimes I] \cdot C(A, \overline{B_1B_2}) & \text{if } B_1B_2 \neq B_2B_1
\end{cases}$$
The case where $B_1$ and $B_2$ commute can be thought of as a extension of the fact that CNOT is self-inverse, while the non-commuting case is a extension of the following identity
\ctikzfig{cnot-cz-combine}
which is written in terms of generalized CNOT gates $C(Z, Z) \cdot C(Z, X) = [P(Z, \frac{\pi}{2}) \otimes I] \cdot C(Z, Y)$. In Section \ref{sec:gencnotfold} we will use this to optimize the number of non-local gates with a method similar to phase-folding.

\subsection{Circuits up to Commutation via DAGs}

Directed acyclic graphs are commonly used to represent generic quantum circuits \cite{meijer2023comparison} by associating to each gate a vertex and connecting adjacent gates which share qubits via edges. This has the advantage of being invariant to trivial commutations of the gates. Similar to \cite{zhang2019optimizing} we will represent circuits of Pauli exponentials and generalized CNOT gates with a looser variant of this that captures all the pairwise commutation relations of the gates. In particular, suppose we have a circuit 
$$C = U_m U_{m-1} \cdots U_1 $$ 
comprised of $m$ Pauli exponential and generalized CNOT gates, then we define the commutation DAG of $C$ as the directed graph $\mathcal{D}(C) = (V, E)$ of $m$ vertices where:
$$V = \{ U_1, U_2, \dots, U_m \} \qquad E = \{ U_{i} \to U_{j} \mid i < j \text{ and } U_iU_{j} \neq U_{j}U_i \}$$
Whether each pair of gates commutes can be decided easily by the relations given in the previous subsection. We will now show that arranging the gates of $C$ into any topological order of $\mathcal{D}(C)$ yields a circuit that is equal to $C$. Hence, we can always compute $\mathcal{D}(C)$ and discard the original circuit $C$, since this keeps enough information to recover an equivalent circuit.

\begin{lemma}
    \label{thm:commdag}
    Suppose $\mathcal{T} = [U_{i_1}, U_{i_2}, \dots, U_{i_m}]$ is a topological order of $\mathcal{D}(C)$, then $U_{i_m} \cdots U_{i_1} = C$.
\end{lemma}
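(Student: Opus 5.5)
We prove the statement by induction on the number of inversions between the topological order $\mathcal{T}$ and the original order $[U_1, \dots, U_m]$. Here an inversion is a pair of gates whose relative order in $\mathcal{T}$ differs from their relative order in $C$. If there are no inversions, then $\mathcal{T}$ is the identity ordering and $U_{i_m}\cdots U_{i_1} = U_m \cdots U_1 = C$ trivially. For the inductive step, we use the standard fact that a permutation with at least one inversion has an inversion between two \emph{adjacent} positions. So there is some $t$ with $i_t > i_{t+1}$, i.e.\ the gate $U_{i_{t+1}}$ originally came before $U_{i_t}$, yet in $\mathcal{T}$ it appears immediately after it.

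The key observation is that these two adjacent gates must commute. Set $a = i_{t+1} < b = i_t$. If $U_aU_b \neq U_bU_a$, then by definition of $\mathcal{D}(C)$ the edge $U_a \to U_b$ is in $E$. A topological order would then have to place $U_a$ before $U_b$, contradicting that $U_b$ occurs at position $t$ and $U_a$ at position $t+1$. Hence $U_aU_b = U_bU_a$. We may therefore swap positions $t$ and $t+1$ in $\mathcal{T}$ without changing the product $U_{i_m}\cdots U_{i_1}$, since the swap only exchanges two adjacent commuting factors.

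It remains to check that the swapped sequence $\mathcal{T}'$ is still a topological order of $\mathcal{D}(C)$, so that the induction hypothesis applies. The only pair whose relative order changes is $(U_a, U_b)$, and there is no edge between them in either direction. There is no $U_a \to U_b$ edge because they commute, and no $U_b \to U_a$ edge because edges only go from lower to higher index. All other edge constraints are preserved. Moreover, the swap removes exactly one inversion, so $\mathcal{T}'$ has strictly fewer inversions. By induction, its product equals $C$, and hence so does the product for $\mathcal{T}$.

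The only slightly delicate point is the bookkeeping: ensuring that the product is written in the right order, with the first element applied first, and that the inversion count strictly decreases. This is the usual bubble-sort argument, so I expect no real obstacle. Note that the proof uses only that edges are present between \emph{all} non-commuting pairs, not merely between adjacent ones. This is exactly what the definition of $E$ provides, and it is why no transitivity argument is needed.
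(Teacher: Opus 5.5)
Your proof is correct and is essentially the paper's bubble-sort argument. You repeatedly swap an adjacent descent $i_t > i_{t+1}$, which must be a commuting pair, since otherwise $\mathcal{D}(C)$ would contain the edge $U_{i_{t+1}} \to U_{i_t}$. Your explicit invariant, that each intermediate sequence remains a topological order, is a slightly cleaner justification of a point the paper leaves implicit: bubble sort never creates new inversions, so every swapped pair was already inverted in $\mathcal{T}$ and hence commutes.
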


\begin{proof}
    We can perform a bubble sort to transform $C' = U_{i_m} \cdots U_{i_2}U_{i_1}$ into $C$: suppose that $i_b < i_a$ but $b > a$, then we must have that $U_{i_a}U_{i_b} = U_{i_b}U_{i_a}$. If this were not the case, then $\mathcal{D}(C)$ would contain an edge $U_{i_b} \to U_{i_a}$, and so $\mathcal{T}$ could not be a topological order of $\mathcal{D}(C)$. Now while there is any $k$ such that $i_{k+1} < i_k$, swap the values of $i_k$ and $i_{k+1}$ and swap $U_{i_k}$ and $U_{i_{k+1}}$ in $C'$. This must keep $C'$ the same at every step since each $U_{i_k}$ and $U_{i_{k+1}}$ commute. When there are no such $k$ remaining then we must have $i_j < i_{j + 1}$ for all $j$; since each $i_j$ is distinct, this implies that $i_j = j$ and hence $C' = C$. 
\end{proof}

Note that the converse is \emph{not} true: there exists equivalent permutations of the gates of $C$ that are not captured by pairwise commutation relations; this is explored further in \cite{mori2025nontrivial}. For example, we have the following identity
$$ P(-ZI, \theta) P(IZ, \theta) \cdot P(-YX, \theta) P(-XY, \theta) = P(-YX, \theta) P(-XY, \theta) \cdot P(-ZI, \theta) P(IZ, \theta)$$
but at the same time $P(IZ, \theta)\cdot P(-YX, \theta) \neq P(-YX, \theta) \cdot P(IZ, \theta)$. 

\subsection{Translating Clifford+R\texorpdfstring{\textsubscript{Z}}{Z} Circuits}

With this in place, we can describe how to obtain the representation $(\mathcal{D}(C_N), C_L)$ from any Clifford+R\textsubscript{Z} circuit $C$. Similarly to the usual Pauli exponential compilation, we will first `push' all the local Clifford gates to the end of the circuit to obtain $C = C_L \cdot C_N$ where $C_N$ contains only Pauli exponentials and generalized CNOT gates. To do this we first replace every non-Clifford or non-local gate with a Pauli exponential or generalized CNOT gate, and then make use of the following relations
\begin{align*}
    U^\dagger \cdot P(A, \theta) \cdot U &= P(U^\dagger AU, \theta) \\
    [U^\dagger \otimes I] \cdot C(A, B) \cdot [U \otimes I] &= C(U^\dagger A U, B)
\end{align*}
where $U$ is any local Clifford operation. After we have obtained $C_N$ and $C_L$, $\mathcal{D}(C_N)$ can be computed directly from its definition. This procedure is formalized in Algorithm \ref{alg:localcliffpush}, and an example is given in Figure \ref{fig:cliff-push-example}. In Section \ref{sec:resynth}, we discuss how the opposite transformation is performed.

\begin{algorithm}
\caption{A procedure to determine $(\mathcal{D}(C_N), C_L)$ given a Clifford+R\textsubscript{Z} circuit $C$.} \label{alg:localcliffpush}
\KwIn{A Clifford+R\textsubscript{Z} circuit $C = U_m\cdots U_2U_1$.}
\KwOut{A DAG $\mathcal{D}(C_N)$ and a circuit $C_L$ of local Clifford gates.}

$C_N \gets$ an empty circuit\;
$C_L \gets$ an empty circuit\;

\For{$k = m$ \KwTo $1$}{
    \uIf{$U_k = CNOT(q_{i_1j_1},q_{i_2j_2})$ is a non-local CNOT gate}{
        Prepend $C_{i_1i_2}(Z_{j_1}, X_{j_2})$ to $C_N$\;
    }\uElseIf{$U_k = R_Z(q_{ij}, \theta)$ is a non-Clifford R\textsubscript{Z} gate}{
        Prepend $P_{i}(Z_j, \theta)$ to $C_N$\;
    }\ElseIf{$U_k$ is a local Clifford gate}{
        \For{$V_l \in C_N$}{
            $V_l \gets U_k^\dagger V_l U_k$\;
        }
        Prepend $U_k$ to $C_L$\;
    }
}

$V \gets \{ V_l \mid V_l \in C_N \}$\;
$E \gets \{ V_i \to V_j \mid i < j \text{ and } V_iV_j \neq V_jV_i \}$\;
$\mathcal{D}(C_N) \gets (V, E)$\;
\Return{$(\mathcal{D}(C_N), C_L)$}
\end{algorithm}

\begin{figure}
    \resizebox{\textwidth}{!}{$$\input{./figures/translate-cliff-rz-example.tikz}$$}
    \caption{An example of translating a Clifford+T circuit into the representation defined in Section \ref{sec:representation}, following Algorithm \ref{alg:localcliffpush}. On the top left the input circuit $C$ is shown, consisting of two partitions of two qubits. First, we replace non-Clifford and non-local CNOT gates with Pauli exponentials and generalized CNOT gates, respectively. Then we `push' all of the local Clifford gates to the end of the circuit. This splits $C$ into a local Clifford part $C_L$ and a non-local or non-Clifford part $C_N$, from which we construct the representation $(\mathcal{D}(C_N), C_L)$.}
    \label{fig:cliff-push-example}
\end{figure}

\section{CNOT Circuit Synthesis}
\label{sec:cnot}

In this section, we introduce \textsc{BlockRowCol}, an algorithm for synthesizing distributed CNOT circuits using generalized CNOT gates. It is based on the \textsc{RowCol} algorithm for CNOT circuits introduced in \cite{wu2023optimization} and can synthesize circuits for distributed architectures with equal-sized partitions and arbitrarily restricted connectivity using the technique of Steiner trees introduced by \cite{kissinger2019cnot}. We will presume familiarity with the parity matrix formulation of CNOT circuit synthesis, see \cite[Section 4.1]{kissingerwetering2024book} for an introduction. The key insight is in three parts:
\begin{enumerate}
    \item The parity matrix corresponding to a distributed CNOT circuit can be seen as a block matrix where each block represents the interaction between two partitions.
    \item If we can reduce the parity matrix so that every off-diagonal block is zero, then local gates suffice to synthesize the remaining circuit.
    \item A subset of generalized CNOT gates, that we will call \emph{ZX-type}, act on the parity matrix by performing a rank-one update to a row of blocks.
\end{enumerate}
Combining these ideas, we show that Gaussian elimination can be adapted to perform elimination of the off-diagonal blocks directly. Whenever the number of partitions is small compared to the number of qubits, the resulting algorithm is asymptotically more efficient in terms of non-local gate count than Patel-Markov-Hayes' algorithm \cite{patel2003efficient} for standard CNOT circuit synthesis. 

We start in Section \ref{sec:rowcolinterp} by giving a reinterpretation and simplification of \textsc{RowCol} before showing in Section \ref{sec:blockrowcol} how to adapt this to construct \textsc{BlockRowCol}. Finally in Section \ref{sec:blockrowcol-optimality} we will see that it is optimal in certain settings.

\subsection{Reinterpreting RowCol}
\label{sec:rowcolinterp}

The \textsc{RowCol} algorithm for synthesizing CNOT circuits in constrained architectures was introduced in \cite{wu2023optimization}. The algorithm proceeds by disentangling a qubit from the remaining qubits, one at a time. In terms of the parity matrix, the disentangling operation corresponds to eliminating a column and the corresponding row.

While columns are eliminated using standard Gaussian elimination, eliminating rows in \textsc{RowCol} is more complicated, and requires setting up and solving a linear system for each row. In \cite{winderl2023architecture}, a variant of \textsc{RowCol} for Clifford circuits was introduced which sidesteps this complication, by relying on the both the stabilizer and destabilizer portions of the stabilizer tableau. In this section we will reformulate \textsc{RowCol} for CNOT circuits using this method. The advantage is that the final algorithm only performs standard Gaussian elimination, which can be adapted to distributed circuits, as we will see in in Section \ref{sec:blockrowcol}. 

To start, we will see how row operations applied to a parity matrix affect the transpose of the inverse matrix. The motivation here is that for any CNOT circuit with parity matrix $M$, the stabilizer tableau has the form:
$$ \left(\begin{array}{cc}
    (M^{-1})^T & 0 \\
    0 & M
\end{array}\right) $$

\begin{lemma}
    \label{thm:rowopsinvtransp}
    Applying a row operation $M_i \gets M_i \oplus M_j$ to a parity matrix $M$ applies the opposite operation $M'_j \gets M'_j \oplus M'_i$ to the inverse transpose $M' = (M^{-1})^T$. 
\end{lemma}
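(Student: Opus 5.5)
We will express the row operation as left multiplication by an elementary matrix and then compute its effect on the inverse transpose directly, working over $\mathbb{F}_2$. The row operation $M_i \gets M_i \oplus M_j$ (with $i \neq j$) replaces $M$ by $EM$, where $E = I + e_i e_j^T$ and $e_i$ denotes the $i$th standard basis column vector. Since $e_j^T e_i = 0$ for $i \neq j$, we have $E^2 = I + 2e_ie_j^T = I$ over $\mathbb{F}_2$, so $E^{-1} = E$.

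Next we compute the new inverse transpose:
$$((EM)^{-1})^T = (M^{-1}E^{-1})^T = (E^{-1})^T (M^{-1})^T = E^T M',$$
using $E^{-1} = E$. Now $E^T = I + e_j e_i^T$, and left multiplication by this matrix is exactly the row operation that adds row $i$ into row $j$. Hence $M'$ is transformed by $M'_j \gets M'_j \oplus M'_i$, which is the claimed opposite operation.

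The main point requiring care is the bookkeeping of indices: which of $e_ie_j^T$ or $e_je_i^T$ implements which row operation, and the fact that transposition swaps the roles of $i$ and $j$. There is no deeper obstacle. If desired, the statement can also be checked via the stabilizer tableau form given above, since CNOT acts on the $X$ and $Z$ parts in opposite directions. However, the algebraic computation above is self-contained, and we would present it as the proof.
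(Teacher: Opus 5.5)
Your proposal is correct and follows the paper's proof: both write the row operation as an elementary matrix $R_{ij}$ that is its own inverse over $\mathbb{F}_2$ and satisfies $R_{ij}^T = R_{ji}$. Both then compute $((R_{ij}M)^{-1})^T = R_{ji}M'$. Your explicit check that $E^2 = I$ via $e_j^T e_i = 0$ is a small detail the paper states without justification.
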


\begin{proof}
    Let $R_{ij}$ be the elementary matrix that represents adding row $j$ to row $i$ when multiplied to the left of a matrix. Then we have $R_{ij}^{-1} = R_{ij}$ and $R_{ij}^T = R_{ji}$. Thus $((R_{ij} \cdot M)^{-1})^T = (M^{-1} \cdot R_{ij}^{-1})^T = (M^{-1} \cdot R_{ij})^T = R_{ij}^T \cdot (M^{-1})^T = R_{ji} \cdot M'$.
\end{proof}

Next we see that block structure in $M$ will reflect block structure in $M'$: in particular, if a block is zero in $M$, then the block reflected about the diagonal in $M'$ will also be zero.
\begin{lemma}
    \label{thm:zeroblockinv}
    For compatible matrices $A$, $B$, and $C$ with $A$ and $C$ invertible, we have
    $$ M = \begin{pmatrix} A & B \\ 0 & C \end{pmatrix} \iff (M^{-1})^T = \begin{pmatrix} A' & 0 \\ B' & C' \end{pmatrix}$$
    where $A' = (A^{-1})^T$, $C' = (C^{-1})^T$, and $B' = C'B^TA'$.
\end{lemma}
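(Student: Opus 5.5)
The plan is to compute $M^{-1}$ explicitly by block back-substitution and then transpose it. Throughout, matrices are over $\mathbb{F}_2$, as for all parity matrices in this section, so signs can be ignored. That matters here, because over a general field the off-diagonal block would pick up a minus sign.

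\emph{Forward direction.} Suppose $M = \begin{pmatrix} A & B \\ 0 & C \end{pmatrix}$ with $A$ and $C$ invertible. Then $M$ is invertible, and I propose the candidate inverse
$$ N = \begin{pmatrix} A^{-1} & A^{-1}BC^{-1} \\ 0 & C^{-1} \end{pmatrix}. $$
To verify it, multiply out $MN$ blockwise. The diagonal blocks give $AA^{-1} = I$ and $CC^{-1} = I$, and the lower-left block is $0$. The upper-right block is $AA^{-1}BC^{-1} + BC^{-1} = 2BC^{-1} = 0$ over $\mathbb{F}_2$. Hence $MN = I$, and since the matrices are square, $N = M^{-1}$.

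Transposing uses $(XYZ)^T = Z^TY^TX^T$ and $(X^{-1})^T = (X^T)^{-1}$:
$$ (M^{-1})^T = \begin{pmatrix} (A^{-1})^T & 0 \\ (C^{-1})^T B^T (A^{-1})^T & (C^{-1})^T \end{pmatrix} = \begin{pmatrix} A' & 0 \\ C'B^TA' & C' \end{pmatrix}. $$
This is exactly the claimed form, with $B' = C'B^TA'$.

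\emph{Reverse direction.} Suppose $(M^{-1})^T = \begin{pmatrix} A' & 0 \\ B' & C' \end{pmatrix}$ with $A'$ and $C'$ invertible. Taking the transpose gives
$$ M^{-1} = \begin{pmatrix} A'^T & B'^T \\ 0 & C'^T \end{pmatrix}, $$
which is block upper-triangular with invertible diagonal blocks. Applying the inverse formula from the forward direction to this matrix shows that $M = (M^{-1})^{-1}$ is also block upper-triangular. Its diagonal blocks are $(A'^T)^{-1} = A$ and $(C'^T)^{-1} = C$, which uses that $X \mapsto (X^{-1})^T$ is an involution. Its lower-left block is $0$. Its upper-right block $B$ is then determined uniquely, and the relation $B' = C'B^TA'$ holds by the forward computation.

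\emph{Main obstacle.} Nothing here is deep. The only points that need care are keeping the order of factors correct when transposing the product $A^{-1}BC^{-1}$, and being explicit that the sign vanishes because we work over $\mathbb{F}_2$. If we did not, the block would be $B' = -C'B^TA'$.
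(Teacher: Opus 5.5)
Your proof is correct and follows essentially the same route as the paper, which likewise just checks by block multiplication that $M$ times $\begin{pmatrix} A^{-1} & A^{-1}BC^{-1} \\ 0 & C^{-1} \end{pmatrix}$ (the transpose of the claimed block matrix) is the identity. Two things you do are not in the paper: you treat the converse direction explicitly, and you point out that the cancellation $AA^{-1}BC^{-1} + BC^{-1} = 0$ needs characteristic $2$, which the paper relies on without saying so.
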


\begin{proof}
    By block multiplication:
    \begin{equation*}
        \begin{pmatrix} A & B \\ 0 & C \end{pmatrix} \begin{pmatrix} A' & 0 \\ B' & C' \end{pmatrix}^T = \begin{pmatrix} A & B \\ 0 & C \end{pmatrix} \begin{pmatrix} A^{-1} & A^{-1}BC^{-1} \\ 0 & C^{-1} \end{pmatrix} = \begin{pmatrix} I & 0 \\ 0 & I \end{pmatrix} \qedhere
    \end{equation*}
\end{proof}

From this follows the key result that eliminating a column of $M$ will eliminate a row of $(M^{-1})^T$ and vice-versa. Hence, to eliminate \emph{both} a row and column of a parity matrix $M$, which disentangles one qubit from the rest, we eliminate a column of $M$ and then eliminate the same column of $(M^{-1})^T$. Lemma \ref{thm:zeroblockinv} guarantees that these two operations do not interfere with each other. This idea is formalized in Algorithm \ref{alg:rowcolcnot}, which depends on the subroutine \textsc{ElimCol} that performs the Gaussian elimination step, keeping both $M$ and $(M^{-1})^T$ up to date using Lemma \ref{thm:rowopsinvtransp}. In Algorithm \ref{alg:elimcolalltoall} we give a straightforward implementation of \textsc{ElimCol} for all-to-all connected architectures. For an $n$-qubit parity matrix, Algorithm \ref{alg:rowcolcnot} runs in $O(n^3)$ time, and produces a circuit of at most $n(n-1)$ CNOT gates. 

\begin{algorithm}
    \caption{A simplified version of the \textsc{RowCol} CNOT circuit synthesis algorithm.} \label{alg:rowcolcnot}
    \KwIn{A parity matrix $M$ representing the desired CNOT circuit.}
    \KwOut{A CNOT circuit $C$.}
    
    $C \gets$ an empty circuit\;
    $M' \gets (M^{-1})^T$\;

    \For{each qubit $q$}{
        $C_q \gets \textsc{ElimCol}(M, M', q, flip=\mathrm{false})$\;
        Prepend $C_q$ to $C$\;
        $C'_q \gets \textsc{ElimCol}(M', M, q, flip=\mathrm{true})$\;
        Prepend $C'_q$ to $C$\;
    }

    \Return{$C$}
\end{algorithm}

\begin{algorithm}
    \caption{The \textsc{ElimCol} subroutine for all-to-all connected architectures.}\label{alg:elimcolalltoall}
    \KwIn{A parity matrix $M$, its inverse transpose $M' = (M^{-1})^T$, and a column $q$. A flag $flip$ that determines whether CNOT gates should be flipped.}
    \KwOut{A CNOT circuit $C$ which eliminates the $q$th column of $M$.}

    $C \gets$ an empty circuit\;
    
    \If{$M_{qq} = 0$}{
        Let $k = \min\{ j \mid j \neq q \land M_{jq} = 1 \}$ be the pivot of the $q$th column of $M$\;
        Add row $k$ to row $q$ in $M$\;
        Add row $q$ to row $k$ in $M'$\;
        \lIf{$\lnot flip$}{prepend $CNOT_{k,q}$ to $C$ \lElse*{prepend $CNOT_{q,k}$ to $C$}}
    }
    \For{all $j \neq q$ such that $M_{jq} = 1$}{
        Add row $q$ to row $j$ in $M$\;
        Add row $j$ to row $q$ in $M'$\;
        \lIf{$\lnot flip$}{prepend $CNOT_{q,j}$ to $C$ \lElse*{prepend $CNOT_{j,q}$ to $C$}}
    }
    \Return{$C$}
\end{algorithm}

Adapting this to the case of restricted connectivity can be done using Steiner trees, as in \cite{kissinger2019cnot}. Since we will use Steiner trees throughout the following sections, we define them formally here. 

\begin{definition}
    Given a connected undirected graph $G = (V, E)$ and a subset $S \subseteq V$, a Steiner tree $T = (V_T, E_T)$ for $S$ in $G$ is a connected subgraph of $G$ with the fewest edges such that $S \subseteq V_T$.
\end{definition}

While computing Steiner trees is NP-hard in general, polynomial-time algorithms exist \cite{kou1981fast,robins2000improved} for finding a $c$-approximation (that is, a tree $T \subseteq G$ such that $V_T \subseteq S$ with at most $c$ times as many edges as a Steiner tree) for various values of $c \leq 2$. Therefore, whenever we write in an algorithm to calculate a Steiner tree, we will assume that an approximation may be used instead. 

Steiner trees can be used to eliminate columns of $M$ while respecting the qubit connectivity as follows \cite{kissinger2019cnot}. Let $G_Q$ be the connectivity graph of the qubits $Q$ that have not yet been disentangled, and $S = \{j \mid M_{jq} = 1\} \cup \{ q \}$ the set of qubits that are non-zero in the $q$th column of $M$, together with the pivot qubit $q \in Q$. Then we calculate a Steiner tree $T = (V_T, E_T)$ of $S$ in $G_Q$, and define an orientation of $T$ with all edges pointing away from $q$. Our goal is to eliminate every non-zero entry $M_{jq}$ for $j \in V_T \setminus \{q\}$ by performing a row operation between $j$ and its parent in $T$. Since $T \subseteq G$ this will only use physically allowed CNOT gates, and since $S \subseteq V_T$ then at the end of this process we will have $M_{jq} = 0$ for every $j \neq q$, exactly as desired. The elimination proceeds in two steps:
\begin{enumerate}
    \item First we must ensure that every row $j$ is able to be eliminated by its parent $p_j$. In this scenario, that means that we need $M_{p_jq} = 1$ (we will generalize this in Section \ref{sec:blockrowcol}). Therefore, while there exists any $j$ such that $M_{p_jq} = 0$ and $M_{jq} = 1$, we add row $j$ to row $p_j$. Since every leaf $l$ of $T$ must have $M_{lq} = 1$ (otherwise $T$ would not be minimal), this will result in $M_{jq} = 1$ for every $j \in V_T$.
    \item Then we eliminate every row $j \in V_T$ except the pivot $q$ by performing a row addition from $p_j$ to $j$. In order to avoid eliminating a parent before we can eliminate its children, we take $j$ in order of decreasing depth, from the leaves to the root.
\end{enumerate}
In Algorithm \ref{alg:elimcolsteiner} we implement \textsc{ElimCol} for arbitrary qubit connectivity using this method. See \cite{kissinger2019cnot} for detailed examples of the Steiner tree procedure. Combined with Algorithm \ref{alg:rowcolcnot}, this is a complete method to synthesize CNOT circuits. For an $n$-qubit parity matrix it produces a circuit of at most $2n(n-1)$ CNOT gates, regardless of the connectivity. Note that the ordering of qubits to disentangle must be chosen to ensure that $G_Q$ remains connected throughout the algorithm; this can be accomplished by calculating the articulation points of $G_Q$ at each step and picking the next $q$ to be any vertex that is not an articulation point \cite{wu2023optimization}. 

\begin{algorithm}
    \caption{The \textsc{ElimCol} subroutine for arbitrary qubit connectivity.}\label{alg:elimcolsteiner}
    \KwIn{The qubit connectivity graph $G$, the set of entangled qubits $Q$, a parity matrix $M$, its inverse transpose $M' = (M^{-1})^T$, and a column $q \in Q$. A flag $flip$ that determines whether CNOT gates should be flipped.}
    \KwOut{A CNOT circuit $C$ which eliminates the $q$th column of $M$.}

    $G_Q \gets$ the subgraph of $G$ induced by $Q$\;
    $S \gets \{ j \mid M_{jq} = 1\} \cup \{q\}$\;
    $T = (V_T, E_T) \gets$ a Steiner tree of $S$ in $G_Q$\;
    
    $C \gets$ an empty circuit\;
    \For{each edge $p_j \to j \in E_T$ in post-order from $q$}{
        \If{$M_{p_jq} = 0$}{
            Add row $j$ to row $p_j$ in $M$\;
            Add row $p_j$ to row $j$ in $M'$\;
            \lIf{$\lnot flip$}{prepend $CNOT_{j,p_j}$ to $C$ \lElse*{prepend $CNOT_{p_j,j}$ to $C$}}
        }
    }

    \For{each edge $p_j \to j \in E_T$ in post-order from $q$}{
        Add row $p_j$ to row $j$ in $M$\;
        Add row $j$ to row $p_j$ in $M'$\;
        \lIf{$\lnot flip$}{prepend $CNOT_{p_j,j}$ to $C$ \lElse*{prepend $CNOT_{j,p_j}$ to $C$}}
    }

    \Return{$C$}
\end{algorithm}

\subsection{Block Row Operations}

We will now describe the framework in which \textsc{RowCol} can be adapted to \textsc{BlockRowCol}. Suppose that we are given a parity matrix $M$ acting on $n$ qubits. We wish to synthesize a circuit that implements $M$ for a distributed architecture of $k$ equal-sized partitions of $n/k$ qubits with limited connectivity between partitions. Suppose that the rows and columns of $M$ are ordered so that partitions are contiguous. Then we can view $M$ as a block matrix
$$ M = \begin{pmatrix}
    M_{11} & \cdots & M_{1k} \\
    \vdots &        & \vdots \\
    M_{k1} & \cdots & M_{kk}
\end{pmatrix}$$
where each $\frac{n}{k} \times \frac{n}{k}$ block $M_{BA}$ represents the action of partition $a$ on partition $b$. Note that if we apply Gaussian elimination to $M$ to produce a CNOT circuit, this will require only local CNOT gates if and only if $M_{BA} = 0$ for every $a \neq b$. Therefore, our goal will be to eliminate all the off-diagonal blocks of $M$. 

First, we will discuss how to generalize the notion of row operations to this scenario. Consider the row operation which adds row $q_{aj}$ to row $q_{bi}$. This acts on the blocks of $M$ as follows:
$$ \forall c: ~~(M_{bc})_i \gets (M_{bc})_i + (M_{ac})_j $$
If multiple row operations act on the same pair of partitions $a$ and $b$, we may batch them together. Let the coefficient matrix $R$ be a binary matrix where each value $R_{ij} = 1$ indicates that row $j$ in partition $a$ is to be added to row $i$ in partition $b$. Then the action on the blocks of $M$ is
$$ \forall c, i, j: ~~(M_{bc})_i \gets (M_{bc})_i + R_{ij}\cdot (M_{ac})_j $$
but because all the row operations commute, we can group them together and instead write it as:
$$ \forall c: ~~M_{bc} \gets M_{bc} + RM_{ac}$$
This is precisely the block-level analog of a row operation, so we call it a \emph{block row addition}.

Next, we will see how generalized CNOT gates act on $M$. Suppose we have a generalized CNOT gate $U$ of the form
$$ U = C_{ab}\left(\prod_i Z_i^{z_i}, ~~\prod_i X_i^{x_i}\right) $$
where each $\vec{z}, \vec{x} \in \mathbb{F}_2^{n/k}$. We call these \emph{ZX-type generalized CNOT gates}. 
\begin{lemma}
    A ZX-type generalized CNOT gate $C_{ab}(\prod_i Z_i^{z_i}, ~~\prod_i X_i^{x_i})$ acts on $M$ as a block row addition from partition $a$ to partition $b$ where the coefficient matrix $R = \vec{x}\vec{z}^T$ has rank one.
\end{lemma}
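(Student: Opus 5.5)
We prove the claim by decomposing the ZX-type gate into ordinary CNOT gates and tracking their combined action on the parity matrix. Write $U = C_{ab}(Z^{\vec z}, X^{\vec x})$ with $Z^{\vec z}=\prod_i Z_i^{z_i}$ and $X^{\vec x}=\prod_i X_i^{x_i}$. The key structural fact is that there are local CNOT circuits that collapse these strings onto single qubits.

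First we choose the local circuits. Pick $l$ with $z_l=1$; if $\vec z=0$ or $\vec x=0$, then $U=I$ by $C(A,I)=I\otimes I$, and $R=0$ has rank at most one, so that case is trivial. Let $V_a$ be the CNOT circuit on partition $a$ with targets $l$ and controls all $i\ne l$ with $z_i=1$. Pushing $Z$'s through CNOTs gives $V_a Z_l V_a^\dagger = Z^{\vec z}$. Similarly, pick $m$ with $x_m=1$ and let $V_b$ be the CNOT circuit on partition $b$ with control $m$ and targets all $i\neq m$ with $x_i=1$. Then $V_b X_m V_b^\dagger = X^{\vec x}$.

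Next we conjugate. By the local conjugation relation for generalized CNOT gates in Section~\ref{sec:representation},
\[
U = (V_a\otimes V_b)\, C_{ab}(Z_l, X_m)\, (V_a\otimes V_b)^\dagger ,
\]
and $C_{ab}(Z_l,X_m)$ is exactly $CNOT(q_{al},q_{bm})$ since $C(Z,X)=CNOT$.

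Then we read off the parity matrix. Since $U$ is a CNOT circuit, it acts on $M$ by left multiplication by its parity matrix $P_U = P_{V}\,E\,P_V^{-1}$. Here $E = I + e_{bm}e_{al}^T$ is the elementary row operation for the middle CNOT, and $P_V$ is block diagonal with blocks $P_{V_a},P_{V_b}$. Hence
\[
P_U = I + (P_{V_b}e_m)(e_l^T P_{V_a}^{-1}),
\]
whose only nonzero off-diagonal block is the $(b,a)$ block $R = (P_{V_b}e_m)(e_l^TP_{V_a}^{-1})$. Left multiplication by such a matrix is, by definition, the block row addition $M_{bc}\gets M_{bc}+RM_{ac}$ for all $c$.

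It remains to identify the two factors. The column $P_{V_b}e_m$ is the image of basis vector $m$ under the fan-out circuit, which is $\vec x$. The row $e_l^TP_{V_a}^{-1}$ is row $l$ of the inverse of the fan-in circuit. Since $V_a$ sets target $l$ to $\bigoplus_{z_i=1} q_i$, row $l$ of $P_{V_a}$ is $\vec z^T$. The circuit $V_a$ is self-inverse, because its CNOTs commute, so this row is also $\vec z^T$. Hence $R=\vec x\vec z^T$, which has rank one.

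The main obstacle is getting the orientation and transpose conventions right: which of $\vec z,\vec x$ becomes the row versus the column factor, and checking that the $Z$-string conjugation corresponds to the inverse-transpose action. The cross-check is Lemma~\ref{thm:rowopsinvtransp}: $Z$-type Paulis transform under $(M^{-1})^T$, which is why $\vec z$ appears as a row covector and $\vec x$ as a column vector. I would also sanity-check the single-qubit case $C(Z_j,X_i)=CNOT_{j\to i}$, which should give $R=e_ie_j^T$, matching the row addition of $q_{aj}$ into $q_{bi}$ described above.
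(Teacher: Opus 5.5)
Your proof is correct, but it takes a different route from the paper's.

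The paper never introduces auxiliary local circuits. It uses the merging rule $C(A,B_1)C(A,B_2)=C(A,\overline{B_1B_2})$ for commuting $B_1,B_2$ (and its mirror image in the first argument), together with $C(A,I)=I$. This splits $U$ directly into the product $\prod_{i,j}C_{ab}(Z_j,X_i)^{z_jx_i}$ of ordinary, mutually commuting non-local CNOTs. It then adds up the elementary row additions to get $R_{ij}=z_jx_i$.

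You instead conjugate a single non-local CNOT by a fan-in circuit on $a$ and a fan-out circuit on $b$. The rank-one structure then appears as $P_VEP_V^{-1}=I+(P_{V_b}e_m)(e_l^TP_{V_a}^{-1})$, a conjugate of an elementary matrix.

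The paper's argument is shorter and needs no parity-matrix bookkeeping for auxiliary circuits. Yours costs extra checks: that $V_a$ is self-inverse and that row $l$ of $P_{V_a}$ is $\vec z^T$, both of which you do correctly. In exchange, it shows that $U$ uses only one non-local CNOT, anticipating the paper's general claim that every generalized CNOT gate costs exactly one non-local CNOT. It also makes the outer-product form of $R$ structural rather than the result of summing $|\vec z|\,|\vec x|$ elementary updates.

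You are also right to treat $\vec z=0$ or $\vec x=0$ separately. There $R=0$, so the statement's ``rank one'' should strictly read ``rank at most one''; the paper's proof passes over this degenerate case silently.
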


\begin{proof}
    Using the fact that $C(A, B_1)C(A, B_2) = C(A, \overline{B_1B_2})$ whenever $B_1$ and $B_2$ commute and that $C(A, I) = I$, we can split this as a product of generalized CNOT gates that each have only one non-identity Pauli matrix in each partition:
    \begin{align*}
        U &= C_{ab}(\prod_j Z_j^{z_j}, \prod_i X_i^{x_i}) = \prod_j C_{ab}(Z_j^{z_j}, \prod_i X_i^{x_i}) \\
        &= \prod_{i,j} C_{ab}(Z_j^{z_j}, X_i^{x_i}) = \prod_{i,j} C_{ab}(Z_j, X_i)^{z_jx_i}
    \end{align*}
    Note that a generalized CNOT gate $C_{ab}(Z_j, X_i)$ is exactly a normal CNOT gate controlled on $q_{aj}$ targeting $q_{bi}$. Hence the action of $U$ on the blocks of $M$ will be that of a block row addition where $R_{ij} = z_jx_i$, which implies $R = \vec{x}\vec{z}^T$. 
\end{proof}

We can use this property to perform any block row addition with exactly $\mathrm{rank}(R) \leq \frac{n}{k}$ generalized CNOT gates. This is asymptotically more efficient than the naive decomposition into $(n/k)^2$ CNOT gates.

\begin{lemma}
    \label{thm:blockrowaddgates}
    Any block row addition from partition $a$ to partition $b$ with coefficient matrix $R$ can be realized using exactly $\mathrm{rank}(R)$ ZX-type generalized CNOT gates.
\end{lemma}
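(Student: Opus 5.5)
Write $r = \mathrm{rank}(R)$ over $\mathbb{F}_2$. The approach is to decompose $R$ into a sum of $r$ rank-one matrices and realize each with one ZX-type gate via the previous lemma.

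\textbf{Step 1: rank decomposition.} Any matrix of rank $r$ over $\mathbb{F}_2$ can be written as
$$R = \sum_{\ell=1}^{r} \vec{x}_\ell \vec{z}_\ell^T .$$
To see this, take a basis $\vec{x}_1, \dots, \vec{x}_r$ of the column space of $R$, so that $R = XZ^T$ with $X = (\vec{x}_1 \cdots \vec{x}_r)$ an $\frac{n}{k}\times r$ matrix and $Z$ an $\frac{n}{k}\times r$ matrix. Expanding $XZ^T$ as a sum of outer products of the columns of $X$ and $Z$ gives the decomposition. Equivalently, one can use a rank factorization obtained from Gaussian elimination on $R$.

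\textbf{Step 2: composition of block row additions.} We need that applying block row additions from $a$ to $b$ with coefficients $R_1$ and then $R_2$ gives the block row addition with coefficient $R_1 + R_2$. After the first addition the blocks of row $b$ are $M_{bc} + R_1M_{ac}$. Since $a \neq b$, the blocks $M_{ac}$ are untouched. Applying the second addition therefore gives
$$M_{bc} + R_1M_{ac} + R_2M_{ac} = M_{bc} + (R_1+R_2)M_{ac}.$$
The key point is that the source row of blocks never changes, which is where $a\neq b$ is used.

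\textbf{Step 3: gate count.} By the previous lemma, the ZX-type gate $C_{ab}(\prod_i Z_i^{(z_\ell)_i}, \prod_i X_i^{(x_\ell)_i})$ implements the block row addition with coefficient $\vec{x}_\ell\vec{z}_\ell^T$. Composing the $r$ such gates and using Step 2 yields coefficient $R$, so exactly $\mathrm{rank}(R)$ gates are used.

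The only real subtlety is the word \emph{exactly}. If $r$ is read as a lower bound as well, then fewer gates cannot suffice: $s$ ZX-type gates produce a sum of $s$ rank-one matrices, which has rank at most $s$. For the statement as written, however, the upper construction is enough. I expect no step to be a serious obstacle; the one to check carefully is Step 2, the commutation/additivity of the block row additions.
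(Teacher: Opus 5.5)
Your proposal is correct and matches the paper's proof: you take a rank factorization $R = \sum_{\ell=1}^{r} \vec{x}_\ell \vec{z}_\ell^T$ and realize each rank-one term with one ZX-type gate $C_{ab}(\prod_i Z_i^{(z_\ell)_i}, \prod_i X_i^{(x_\ell)_i})$ via the preceding lemma. Your Step 2 (the blocks $M_{ac}$ are untouched since $a \neq b$, so the coefficients add) is an explicit version of the paper's remark that these row operations commute and can be grouped.
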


\begin{proof}
    Let $U$ be the unitary performing the block row addition. Any matrix binary $R$ can be written in the form:
    $$ R = \sum_{i = 1}^{\mathrm{rank}(R)} \vec{u_i}\vec{v_i}^T $$
    This is known as the \emph{rank factorization} of $R$, and it can be computed via Gaussian elimination \cite{piziak1999full}. Then the action of the row addition on the blocks of $M$ can be written as:
    $$ \forall c: ~~M_{bc} \gets M_{bc} \oplus \left(\sum_{i = 1}^{\mathrm{rank}(R)} \vec{u_i}\vec{v_i}^T\right)M_{ac} $$
    Or equivalently, we have
    $$ \forall c, i: ~~M_{bc} \gets M_{bc} \oplus \vec{u_i}\vec{v_i}^TM_{ac}$$
    since all of these row operations commute. We can thus rewrite it as a product of individual ZX-type generalized CNOT gates:
    \begin{equation*}
        U = \prod_{i} C_{ab}\left(\prod_j Z_j^{v_{ij}}, \prod_k X_k^{u_{ik}}\right) \qedhere
    \end{equation*}
\end{proof}

In \textsc{BlockRowCol}, as in \textsc{RowCol}, we will need to perform block row additions on both the parity matrix $M$ and its inverse transpose $M' = (M^{-1})^T$. Since from Lemma \ref{thm:rowopsinvtransp} we know that the effect of row additions on the inverse transpose matrix is again a row addition but with the source and target flipped, we can conclude the block row addition from $a$ to $b$ with $R$ in $M$ corresponds to a block row addition from $b$ to $a$ with coefficient matrix $R^T$ in $M'$.

\subsection{BlockRowCol}
\label{sec:blockrowcol}

We now have everything that we need to construct \textsc{BlockRowCol}. We will use the setup of the previous subsection, with the parity matrix $M$ divided into square $\frac{n}{k} \times \frac{n}{k}$ blocks $M_{BA}$. First we consider \textsc{BlockElimCol}, the blocked version of \textsc{ElimCol}. Suppose we want to eliminate the block $M_{bq}$ using the block $M_{aq}$, then we need to find a coefficient matrix $R$ such that $M_{bq} + R M_{aq} = 0$, and we would like to minimize the rank of $R$ so that its implementation uses as few generalized CNOT gates as possible.

\begin{definition}
    For any matrix $A$, let $\mathrm{rsp}(A)$ be the linear subspace spanned by the rows of $A$.
\end{definition}

\begin{lemma}
    \label{thm:blockpivotelim}
    If $\mathrm{rsp}(B) \subseteq \mathrm{rsp}(A)$ then there exists $R$ such that $B + RA = 0$ and $\mathrm{rank}(R) = \mathrm{rank}(B)$.
\end{lemma}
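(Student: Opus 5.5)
We work over $\mathbb{F}_2$, where $B + RA = 0$ is the same as $RA = B$. The statement therefore asks for a solution $R$ of $RA = B$ whose rank is exactly $\mathrm{rank}(B)$. The lower bound is automatic: for any such $R$ we have $\mathrm{rank}(B) = \mathrm{rank}(RA) \leq \mathrm{rank}(R)$. So the work is to build a solution whose rank is no larger than $\mathrm{rank}(B)$.

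The plan is to pass through a rank factorization of $B$, as in Lemma \ref{thm:blockrowaddgates}. Let $r = \mathrm{rank}(B)$ and write $B = UW$, where $U$ has $r$ columns and $W$ has $r$ linearly independent rows spanning $\mathrm{rsp}(B)$. Such a factorization exists and can be computed by Gaussian elimination \cite{piziak1999full}. By hypothesis each row $w_i$ of $W$ lies in $\mathrm{rsp}(B) \subseteq \mathrm{rsp}(A)$, so we can pick a vector $s_i$ with $s_i^T A = w_i$. Stack these into a matrix $S$ with $r$ rows, so that $SA = W$. Now set $R = US$.

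Two checks finish the argument. First, $RA = U(SA) = UW = B$, and hence $B + RA = 0$. Second, $R$ factors through an $r$-dimensional space, so $\mathrm{rank}(R) \leq r$. Combined with the lower bound, this gives $\mathrm{rank}(R) = \mathrm{rank}(B)$.

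There is no genuine obstacle here. The one point needing care is that a naive solution, obtained by solving $r_i^T A = b_i$ independently for each row $b_i$ of $B$, can have rank larger than $\mathrm{rank}(B)$ when $A$ has a nontrivial left kernel. Routing every row's solution through the fixed basis $W$ avoids this.
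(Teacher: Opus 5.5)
Your proof is correct. It follows a slightly different route from the paper's.

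The paper fixes rows $A_{a_j}$ of $A$ that form a basis of $\mathrm{rsp}(A)$. Row $i$ of $R$ is then the unique coordinate vector of $B_i$ in that basis, padded with zeros in the remaining columns. Since $x \mapsto x^T A$ is injective on vectors supported on $\{a_j\}$, every linear dependence among the rows of $B$ carries over to the rows of $R$, giving $\mathrm{rank}(R) \leq \mathrm{rank}(B)$. So the paper does solve row by row, but restricting the support of $R$ makes the solve canonical and hence linear in $B_i$.

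You instead allow arbitrary lifts through $A$, but only for the $r$ rows of a basis $W$ of $\mathrm{rsp}(B)$. You then extend linearly via $B = UW$, so the rank bound is read off the factorization. Both arguments choose a linear section of $x \mapsto x^T A$, and that is exactly what avoids the left-kernel problem you point out.

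The paper's version needs no factorization of $B$ and yields a canonical $R$. Yours has a small practical advantage: $R = US = \sum_{i=1}^{r} u_i s_i^T$, with $u_i$ the columns of $U$, is already a decomposition of $R$ into $\mathrm{rank}(R)$ rank-one terms. That is exactly the input Lemma \ref{thm:blockrowaddgates} needs to produce the ZX-type generalized CNOT gates, so no separate rank factorization of $R$ is required.
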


\begin{proof}
    Let $\{ a_j \}$ be the indices of the rows of $A$ that form a basis for $\mathrm{rsp}(A)$. By definition, for every row $B_i$ of $B$, we may write $B_i$ as a linear combination of the rows $A_{a_j}$, $ B_i = \sum_{j = 0}^{\mathrm{rank}(A)} R_{ia_j} A_{a_j} $.
    The matrix $R$ given by $R_{ij}$ then satisfies $B + RA = 0$. Since the $A_{a_j}$ are linearly independent, $R$ is unique and any linear dependence of $B$ must also be a linear dependence of $R$, which forces $\mathrm{rank}(R) \leq \mathrm{rank}(B)$. On the other hand since $\mathrm{rank}(B) = \mathrm{rank}(RA) \leq \min \{ \mathrm{rank}(R), \mathrm{rank}(A) \} \leq \mathrm{rank}(R)$ we have $\mathrm{rank}(R) = \mathrm{rank}(B)$.
\end{proof}

This is a generalization of the usual pivot condition for Gaussian elimination that a row can only be used to eliminate an entry of the matrix if its value in the column to be eliminated is non-zero: a $1 \times 1$ matrix is invertible, and hence its row space is full, if and only if its only entry is non-zero. To make a row into a pivot, we need to enforce the row space condition. In Gaussian elimination, this is done by using a row operation to make the pivot entry non-zero; similarly, in the blocked version, we can use a block row operation.

\begin{lemma}
    \label{thm:blockmakepivot}
    Suppose matrices $A$ and $B$ have the same number of columns, and $A$ has at least as many rows as $B$. Then there exists $R$ such that
    $$ \mathrm{rsp}(A + RB) \supseteq \mathrm{rsp}(B) \cup \mathrm{rsp}(A) $$
    with $\mathrm{rank}(R) \leq \mathrm{rank}(B)$.
\end{lemma}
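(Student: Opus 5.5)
We want $R$ with $\mathrm{rank}(R)\le \mathrm{rank}(B)$ so that $\mathrm{rsp}(A+RB)$ contains both $\mathrm{rsp}(A)$ and $\mathrm{rsp}(B)$, i.e. equals $\mathrm{rsp}(A)+\mathrm{rsp}(B)$, since the reverse inclusion $\mathrm{rsp}(A+RB)\subseteq \mathrm{rsp}(A)+\mathrm{rsp}(B)$ is automatic. Let $r_A=\mathrm{rank}(A)$ and $d=\dim(\mathrm{rsp}(A)+\mathrm{rsp}(B))-r_A\le \mathrm{rank}(B)$. The plan is to use the redundant rows of $A$ as free slots into which we inject the missing directions from $B$. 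Each injection is a single row operation of the form ``add a row of $B$ to a row of $A$'', so $R$ has at most $d$ nonzero entries in distinct rows and columns, giving $\mathrm{rank}(R)\le d\le \mathrm{rank}(B)$.

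\textbf{Step 1.} Choose rows $b_1,\dots,b_d$ of $B$ that extend a basis of $\mathrm{rsp}(A)$ to a basis of $\mathrm{rsp}(A)+\mathrm{rsp}(B)$; this is possible because the rows of $B$ span $\mathrm{rsp}(B)$.

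\textbf{Step 2.} Choose row indices $a_1,\dots,a_{r_A}$ of $A$ forming a basis of $\mathrm{rsp}(A)$. The remaining $m-r_A$ rows are ``free'', where $m\ge \#\mathrm{rows}(B)$ is the number of rows of $A$. We need $m-r_A\ge d$. Since $d\le \mathrm{rank}(B)\le \#\mathrm{rows}(B)\le m$, this is not immediate, and it is the main obstacle. Sharpening the bound, $r_A+d=\dim(\mathrm{rsp}(A)+\mathrm{rsp}(B))\le$ the number of columns, which is not enough on its own either. In the intended application all blocks are $\frac{n}{k}\times\frac{n}{k}$, so $m=$ the number of columns $\ge r_A+d$, and the free-slot count works. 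In general, I would handle a shortage of free rows by also modifying basis rows. Adding $b_j$ to a basis row $A_{a_i}$ preserves linear independence of the new set as long as we keep track of what is in the span. Concretely, suppose we add $b_j$ to pairwise distinct rows $A_{a_j}$, for $j=1,\dots,d$, among all $m$ rows, using free rows first and then basis rows. The new row space contains the untouched rows, and for each touched basis row it contains $A_{a_j}+b_j$. Span is preserved only if we can recover both summands, which fails when basis rows are touched. So in general I will rely on the count $m\ge r_A+d$, argued from the hypothesis $m\ge\#\mathrm{rows}(B)$ together with $m\ge$ the number of columns in the square-block setting, and note this as the point to double-check.

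\textbf{Step 3.} Assuming there are $d$ free rows $f_1,\dots,f_d$, we first make them zero modulo the basis. Each free row $A_{f_j}$ lies in $\mathrm{rsp}(A)=\mathrm{span}\{A_{a_i}\}$, so after adding $b_j$ the new row $A_{f_j}+b_j$ satisfies $A_{f_j}+b_j\equiv b_j \pmod{\mathrm{rsp}(A)}$. We therefore set $R=\sum_j e_{f_j}e_{\beta_j}^T$, where $\beta_j$ is the index of $b_j$ in $B$.

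\textbf{Step 4.} Verify the inclusion. The rows $A_{a_i}$ are unchanged, so $\mathrm{rsp}(A)$ is contained in the new row space. Then $A_{f_j}+b_j$ together with $\mathrm{rsp}(A)$ yields each $b_j$, so the new row space contains $\mathrm{rsp}(A)+\mathrm{span}\{b_j\}=\mathrm{rsp}(A)+\mathrm{rsp}(B)$. Finally, $R$ is a partial permutation matrix with $d$ ones, so $\mathrm{rank}(R)=d\le \mathrm{rank}(B)$.
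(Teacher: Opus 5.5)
Your construction is the same as the paper's. The paper stacks $A$ on top of $B$, reduces to column echelon form and takes the pivot rows; these are exactly your basis $A_{a_1},\dots,A_{a_{r_A}}$ of $\mathrm{rsp}(A)$ extended by rows $b_1,\dots,b_d$ of $B$. It then adds each $b_j$ into a distinct non-pivot row of $A$, and bounds $\mathrm{rank}(R)$ because $R$ is a partial permutation matrix. Your Steps~3 and~4 are correct whenever $d$ such redundant rows of $A$ exist.

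The point you flag in Step~2 is a genuine gap, and no argument can close it, because the lemma is false as stated.
\begin{itemize}
\item \emph{The missing condition is necessary.} Since $A+RB$ has $m$ rows, $\dim\mathrm{rsp}(A+RB)\le m$. So no $R$ whatsoever can satisfy the conclusion unless $m\ge\dim(\mathrm{rsp}(A)+\mathrm{rsp}(B))=r_A+d$. For the same reason, your idea of also injecting into basis rows cannot help.
\item \emph{Counterexample.} Over $\mathbb{F}_2$ take $A=(1\ 0)$ and $B=(0\ 1)$. Then $A$ has as many rows as $B$, yet $A+RB\in\{(1\ 0),(1\ 1)\}$, and neither row space contains $\mathrm{rsp}(A)\cup\mathrm{rsp}(B)$. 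The rank constraint on $R$ is not what fails here.
\item \emph{The stated hypothesis is irrelevant.} The assumption ``$A$ has at least as many rows as $B$'' plays no role. The paper's proof has the same hole: it picks $k$ non-pivot rows of $A$ without showing they exist, and in the example above there are none.
\end{itemize}
The right repair is to assume $m\ge\dim(\mathrm{rsp}(A)+\mathrm{rsp}(B))$. This follows, for instance, if $A$ has at least as many rows as columns, which holds in \textsc{BlockElimCol}, where all blocks are $\frac{n}{k}\times\frac{n}{k}$. State that hypothesis up front, rather than claiming to derive the count from $m\ge\#\mathrm{rows}(B)$, and your proof is complete.
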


\begin{proof}
    Let $M$ be the matrix obtained by stacking $A$ and $B$ vertically and then reducing to column echelon form. Let $\{ b_i \}$ be the list of rows of $B$ that are pivot rows in $M$. Let $|\{ b_i \}| = k$, then pick any $k$ distinct rows $\{ a_i \}$ of $A$ that are \emph{not} pivot rows in $M$. Form $R$ by setting $R_{a_ib_i} = 1$ for all $1 \leq i \leq k$, and zero otherwise. The block row operation from $B$ to $A$ with coefficient matrix $R$ acts on $M$ such that all the pivot rows are rows of $A$, which implies that $\mathrm{rsp}(A + RB) \supseteq \mathrm{rsp}(B)$. Since we did not remove any pivot rows from $A$, we also see that $\mathrm{rsp}(A + RB) \supseteq \mathrm{rsp}(A)$. We have $\mathrm{rank}(R) = k \leq \mathrm{rank}(B)$ since $R$ has $k$ non-zero rows, all of which are distinct and have only one non-zero element.
\end{proof}

These two operations are sufficient for \textsc{BlockElimCol}, which will be done using Steiner trees, as in Section \ref{sec:rowcolinterp}. Let $G_P$ be the connectivity graph of the partitions $P$ that have not yet been disentangled, and let $S = \{ j \mid M_{jp} \neq 0 \} \cup \{ p \}$ be the set of partitions which have a non-zero block in the column corresponding to the $p$th partition, along with the pivot partition $p \in P$. Calculate a Steiner tree $T = (V_T, E_T)$ of $S$ in $G_P$ and orient the edges pointing away from $p$. Then the elimination proceeds in two steps:
\begin{enumerate}
    \item While there is a partition $j \neq p$ with $p_j$ in $T$ such that $\mathrm{rsp}(M_{jp}) \nsubseteq \mathrm{rsp}(M_{p_j p})$, use Lemma \ref{thm:blockmakepivot} to find $R$ such that $\mathrm{rsp}(M_{jp}) \cup \mathrm{rsp}(M_{p_jp}) \subseteq \mathrm{rsp}(M_{p_jp} + RM_{jp})$. Apply a block row addition from $j$ to $p_j$ with $R$.
    \item Now we eliminate every block except the pivot. Proceeding from the leaves to the root, use Lemma \ref{thm:blockpivotelim} to find $R$ such that $M_{jp} + RM_{p_jp} = 0$, and apply a block row addition from $p_j$ to $j$ with $R$.
\end{enumerate}
This is formalized in Algorithm \ref{alg:blockelimcol}, relying on the subroutine \textsc{BRowOpCirc} which constructs a circuit of generalized CNOT gates implementing a block row operation according to Lemma \ref{thm:blockrowaddgates}. An example is given in Figure \ref{fig:blockelimcol}. 

\begin{algorithm}
    \caption{The \textsc{BlockElimCol} subroutine.}\label{alg:blockelimcol}
    \KwIn{The partition connectivity graph $G$, the set of entangled partitions $P$, a blocked parity matrix $M$, its inverse transpose $M' = (M^{-1})^T$, and a partition $p \in P$. A flag $flip$ that determines whether generalized CNOT gates should be flipped.}
    \KwOut{A generalized CNOT gate circuit $C$ eliminating the $p$th (block) column of $M$.}

    $G_P \gets$ the subgraph of $G$ induced by $P$\;
    $S \gets \{ j \mid M_{jp} \neq 0\} \cup \{p\}$\;
    $T = (V_T, E_T) \gets$ a Steiner tree of $S$ in $G_P$\;
    
    $C \gets$ an empty circuit\;
    \For{each edge $p_j \to j \in E_T$ in post-order from $p$}{
        \If{$\mathrm{rsp}(M_{jp}) \nsubseteq \mathrm{rsp}(M_{p_jp})$}{
            Find $R$ such that $\mathrm{rsp}(M_{jp}) \cup \mathrm{rsp}(M_{p_jp}) \subseteq \mathrm{rsp}(M_{p_jp} + RM_{jp})$ (Lemma \ref{thm:blockmakepivot})\;
            Block row addition from $j$ to $p_j$ with $R$ in $M$\;
            Block row addition from $p_j$ to $j$ with $R^T$ in $M'$\;
            $C_{jp_j} \gets$ \lIf{$\lnot flip$}{$\textsc{BRowOpCirc}(j, p_j, R)$ \lElse*{$\textsc{BRowOpCirc}(p_j, j, R^T)$}} 
            Prepend $C_{jp_j}$ to $C$\;
        }
    }

    \For{each edge $p_j \to j \in E_T$ in post-order from $p$}{
        Find $R$ such that $M_{jp} + RM_{p_jp} = 0$ (Lemma \ref{thm:blockpivotelim})\;
        Block row addition from $p_j$ to $j$ with $R$ in $M$\;
        Block row addition from $j$ to $p_j$ with $R^T$ in $M'$\;
        $C_{p_jj} \gets$ \lIf{$\lnot flip$}{$\textsc{BRowOpCirc}(p_j, j, R)$ \lElse*{$\textsc{BRowOpCirc}(j, p_j, R^T)$}}
        Prepend $C_{p_jj}$ to $C$\;
    }

    \Return{$C$}
\end{algorithm}

\begin{figure}
    \resizebox{\textwidth}{!}{$$\begin{tikzpicture}
	\begin{pgfonlayer}{nodelayer}
		\node [style=vertex] (0) at (0, 0) {};
		\node [style=vertex] (1) at (2, 0) {};
		\node [style=vertex] (2) at (4, 0) {};
		\node [style=vertex] (3) at (0, -2) {};
		\node [style=white dot] (4) at (2, -2) {};
		\node [style=vertex] (5) at (4, -2) {};
		\node [style=vertex] (6) at (4, -4) {};
		\node [style=vertex] (7) at (2, -4) {};
		\node [style=vertex] (8) at (0, -4) {};
		\node [style=none] (9) at (0, 0.375) {\tiny $A$};
		\node [style=none] (10) at (2, 0.375) {\tiny $B$};
		\node [style=none] (11) at (4, 0.375) {\tiny $C$};
		\node [style=none] (12) at (2.25, -3.75) {\tiny $E$};
		\node [style=none] (13) at (4.375, -2) {\tiny $D$};
		\node [style=none] (14) at (4.375, -4) {\tiny $0$};
		\node [style=none] (15) at (2.375, -1.75) {\tiny $0$};
		\node [style=none] (16) at (0.375, -3.75) {\tiny $0$};
		\node [style=none] (17) at (0.375, -1.75) {\tiny $0$};
		\node [style=vertex] (30) at (16, -3.75) {};
		\node [style=vertex] (31) at (17.75, -2) {};
		\node [style=vertex] (32) at (19.5, -3.75) {};
		\node [style=white dot] (33) at (15.75, 0) {};
		\node [style=vertex] (34) at (15.75, -2.5) {};
		\node [style=vertex] (35) at (13.75, -2) {};
		\node [style=none] (36) at (16, -4.125) {\tiny $A$};
		\node [style=none] (37) at (17.875, -1.625) {\tiny $G$};
		\node [style=none] (38) at (19.5, -4.125) {\tiny $C$};
		\node [style=none] (39) at (13.75, -2.375) {\tiny $E$};
		\node [style=none] (40) at (16.125, 0.25) {\tiny $F$};
		\node [style=none] (41) at (15.75, -2.875) {\tiny $D$};
		\node [style=none] (42) at (14.625, -0.75) {\rotatebox{45}{\scalebox{0.35}{$\mathrm{rsp}(E) \subseteq \mathrm{rsp}(F)$}}};
		\node [style=none] (43) at (16.875, -0.75) {\rotatebox{-45}{\scalebox{0.35}{$\mathrm{rsp}(F) \supseteq \mathrm{rsp}(G)$}}};
		\node [style=none] (44) at (15.95, -1.25) {\rotatebox{-90}{\scalebox{0.35}{$\mathrm{rsp}(F) \supseteq \mathrm{rsp}(D)$}}};
		\node [style=none] (45) at (18.75, -2.625) {\rotatebox{-45}{\scalebox{0.35}{$\mathrm{rsp}(G) \supseteq \mathrm{rsp}(C)$}}};
		\node [style=none] (46) at (16.75, -2.625) {\rotatebox{45}{\scalebox{0.35}{$\mathrm{rsp}(A) \subseteq \mathrm{rsp}(G)$}}};
		\node [style=vertex] (47) at (8.25, -3.75) {};
		\node [style=vertex] (48) at (10, -2) {};
		\node [style=vertex] (49) at (11.75, -3.75) {};
		\node [style=white dot] (50) at (8, 0) {};
		\node [style=vertex] (51) at (8, -2.5) {};
		\node [style=vertex] (52) at (6, -2) {};
		\node [style=none] (53) at (8.25, -4.125) {\tiny $A$};
		\node [style=none] (54) at (10.125, -1.625) {\tiny $B$};
		\node [style=none] (55) at (11.75, -4.125) {\tiny $C$};
		\node [style=none] (56) at (6, -2.375) {\tiny $E$};
		\node [style=none] (57) at (8.375, 0.25) {\tiny $0$};
		\node [style=none] (58) at (8, -2.875) {\tiny $D$};
		\node [style=vertex] (80) at (23.75, -3.75) {};
		\node [style=vertex] (81) at (25.5, -2) {};
		\node [style=vertex] (82) at (27.25, -3.75) {};
		\node [style=white dot] (83) at (23.5, 0) {};
		\node [style=vertex] (84) at (23.5, -2.5) {};
		\node [style=vertex] (85) at (21.5, -2) {};
		\node [style=none] (86) at (23.75, -4.125) {\tiny $0$};
		\node [style=none] (87) at (25.625, -1.625) {\tiny $0$};
		\node [style=none] (88) at (27.25, -4.125) {\tiny $0$};
		\node [style=none] (89) at (21.5, -2.375) {\tiny $0$};
		\node [style=none] (90) at (23.875, 0.25) {\tiny $F$};
		\node [style=none] (91) at (23.5, -2.875) {\tiny $0$};
		\node [style=none] (92) at (22, -0.75) {};
		\node [style=none] (93) at (22.5, -1) {};
		\node [style=none] (94) at (23.5, -1.5) {};
		\node [style=none] (95) at (24.5, -3) {};
		\node [style=none] (96) at (26.5, -3) {};
		\node [style=none] (97) at (24.5, -1) {};
		\node [style=none] (98) at (2, -5) {\tiny (a)};
		\node [style=none] (99) at (8, -5) {\tiny (b)};
		\node [style=none] (100) at (15.75, -5) {\tiny (c)};
		\node [style=none] (101) at (23.5, -5) {\tiny (d)};
	\end{pgfonlayer}
	\begin{pgfonlayer}{edgelayer}
		\draw (4) to (3);
		\draw (3) to (0);
		\draw (3) to (8);
		\draw (8) to (7);
		\draw (5) to (2);
		\draw (5) to (6);
		\draw (6) to (7);
		\draw (4) to (7);
		\draw (4) to (5);
		\draw (4) to (1);
		\draw (1) to (0);
		\draw (1) to (2);
		\draw [style=green edge] (33) to (35);
		\draw [style=green edge] (33) to (34);
		\draw [style=green edge] (33) to (31);
		\draw [style=green edge] (31) to (30);
		\draw [style=green edge] (31) to (32);
		\draw [style=green edge] (50) to (52);
		\draw [style=green edge] (50) to (51);
		\draw [style=green edge] (50) to (48);
		\draw [style=green edge] (48) to (47);
		\draw [style=green edge] (48) to (49);
		\draw [style=green edge] (83) to (85);
		\draw [style=green edge] (83) to (84);
		\draw [style=green edge] (83) to (81);
		\draw [style=green edge] (81) to (80);
		\draw [style=green edge] (81) to (82);
		\draw [style=dashed diredge, in=150, out=-30] (92.center) to (93.center);
		\draw [style=dashed diredge, in=135, out=-30] (93.center) to (94.center);
		\draw [style=dashed diredge, in=150, out=-45] (94.center) to (95.center);
		\draw [style=dashed diredge, in=-150, out=-30] (95.center) to (96.center);
		\draw [style=dashed diredge, in=30, out=45] (96.center) to (97.center);
	\end{pgfonlayer}
\end{tikzpicture}$$}
    \caption{An example of applying Algorithm \ref{alg:blockelimcol} to a set of nine partitions connected in a 3x3 grid. \emph{(a)} Suppose we have a parity matrix with non-zero blocks on five partitions, which we wish to eliminate using the pivot vertex (shown in white). \emph{(b)} First, we construct the Steiner tree connecting the non-zero blocks and the pivot, and orient the edges away from the pivot. \emph{(c)} By applying Lemma \ref{thm:blockmakepivot} repeatedly, we modify the blocks at $B$ and the pivot to respect the row space inclusion relations listed on each edge, resulting in the new block matrices $G$ and $F$. \emph{(d)} Finally we can eliminate all the non-pivot blocks by traversing the tree edges in post-order (shown with dashed arrows), and applying Lemma \ref{thm:blockpivotelim} to each edge.}\label{fig:blockelimcol}
\end{figure}

From this, \textsc{BlockRowCol} follows the same template as \textsc{RowCol}, in two phases. First, we disentangle each partition from the rest one by one using generalized CNOT gates. This is done by eliminating the corresponding block of columns from the parity matrix $M$, and then the corresponding block of columns in $(M^{-1})^T$, using \textsc{BlockElimCol}. This works in the same way as \textsc{RowCol} because Lemma \ref{thm:zeroblockinv} still applies for the case of block matrices. We pick the ordering of partitions to ensure that the connectivity graph stays connected at each step. After all of the partitions are disentangled, we can perform \textsc{RowCol} on the diagonal blocks of $M$; this will only use local CNOT gates that respect the qubit connectivity and reduces $M$ to the identity matrix. This procedure is formalized in Algorithm \ref{alg:blockrowcol}, and an example is given in Figure \ref{fig:blockrowcol}.

\begin{algorithm}
    \caption{The \textsc{BlockRowCol} algorithm for distributed CNOT circuit synthesis.}\label{alg:blockrowcol}
    \KwIn{The partition connectivity graph $G$, the qubit connectivity graphs of each partition $\{ G_p \}$, and the parity matrix $M$.}
    \KwOut{A representation $(\mathcal{D}(C_N), C_L)$ of a distributed circuit implementing $M$.}

    $C \gets$ an empty circuit\;
    $P \gets$ the set of all partitions\;
    \While{$P$ is non-empty}{
        $G_P \gets$ the induced subgraph of $P$ in $G$\; 
        Pick $p \in P$ which is not a articulation point of $G_P$\;
        $C_p \gets \textsc{BlockElimCol}(G, P, M, M', p, flip=\mathrm{false})$\;
        Prepend $C_p$ to $C$\;
        $C'_p \gets \textsc{BlockElimCol}(G, P, M', M, p, flip=\mathrm{true})$\;
        Prepend $C'_p$ to $C$\;
        Remove $p$ from $P$\;
    }

    \For{each partition $p$}{
        $L_p \gets \textsc{RowCol}(G_p, M_{pp})$\;
        Prepend $L_p$ to $C$\;
    }

    Find $(\mathcal{D}(C_N), C_L)$ using Algorithm \ref{alg:localcliffpush}\;
    \Return{$(\mathcal{D}(C_N), C_L)$}
\end{algorithm}

\begin{figure}
    \resizebox{\textwidth}{!}{$$\begin{tikzpicture}
	\begin{pgfonlayer}{nodelayer}
		\node [style=none] (0) at (0, 0) {$\left(\begin{array}{cc|cc} 1 & 1 & 0 & 1 \\ 1 & 1 & 1 & 0 \\ \hline 0 & 1 & 1 & 0 \\ 1 & 0 & 0 & 1 \end{array}\right)$};
		\node [style=none] (1) at (10, 0) {$\left(\begin{array}{cc|cc} 1 & 1 & 0 & 1 \\ 1 & 0 & 0 & 0 \\ \hline 0 & 1 & 1 & 0 \\ 1 & 0 & 0 & 1 \end{array}\right)$};
		\node [style=none] (2) at (5, 0) {$\longrightarrow$};
		\node [style=none] (3) at (5, 1) {\tiny $R_{BA} = \begin{pmatrix} 0 & 0 \\ 1 & 0\end{pmatrix}$};
		\node [style=none] (4) at (20, 0) {$\left(\begin{array}{cc|cc} 1 & 1 & 0 & 1 \\ 1 & 0 & 0 & 0 \\ \hline 0 & 0 & 1 & 1 \\ 0 & 0 & 0 & 1 \end{array}\right)$};
		\node [style=none] (5) at (15, 0) {$\longrightarrow$};
		\node [style=none] (6) at (15, 1) {\tiny $R_{AB} = \begin{pmatrix} 1 & 1 \\ 0 & 1\end{pmatrix}$};
		\node [style=none] (7) at (-4.5, 0) {$M$};
		\node [style=none] (8) at (0, -4) {$\left(\begin{array}{cc|cc} 0 & 1 & 0 & 0 \\ 1 & 1 & 0 & 0 \\ \hline 0 & 1 & 1 & 1 \\ 0 & 1 & 0 & 1 \end{array}\right)$};
		\node [style=none] (13) at (5, -4) {$\longrightarrow$};
		\node [style=none] (14) at (5, -3) {\tiny $R_{AB} = \begin{pmatrix} 1 & 0 \\ 1 & 0\end{pmatrix}$};
		\node [style=none] (15) at (-5.5, -4) {$(M^{-1})^T$};
		\node [style=none] (16) at (-3.5, 0) {$=$};
		\node [style=none] (17) at (-3.5, -4) {$=$};
		\node [style=none] (18) at (10, -4) {$\left(\begin{array}{cc|cc} 0 & 1 & 0 & 0 \\ 1 & 1 & 0 & 0 \\ \hline 0 & 0 & 1 & 1 \\ 0 & 0 & 0 & 1 \end{array}\right)$};
		\node [style=none] (20) at (-4.5, -8) {$M$};
		\node [style=none] (21) at (-3.5, -8) {$=$};
		\node [style=none] (22) at (0, -8) {$\left(\begin{array}{cc|cc} 1 & 1 & 0 & 0 \\ 1 & 0 & 0 & 0 \\ \hline 0 & 0 & 1 & 1 \\ 0 & 0 & 0 & 1 \end{array}\right)$};
		\node [style=none] (23) at (10, -8) {$\left(\begin{array}{cc|cc} 1 & 0 & 0 & 0 \\ 0 & 1 & 0 & 0 \\ \hline 0 & 0 & 1 & 1 \\ 0 & 0 & 0 & 1 \end{array}\right)$};
		\node [style=none] (24) at (5, -8) {$\longrightarrow$};
		\node [style=none] (25) at (5, -7.5) {\tiny $\mathrm{CNOT}_{21}$};
		\node [style=none] (26) at (5, -7) {\tiny $\mathrm{CNOT}_{12}$};
		\node [style=none] (27) at (20, -8) {$\left(\begin{array}{cc|cc} 1 & 0 & 0 & 0 \\ 0 & 1 & 0 & 0 \\ \hline 0 & 0 & 1 & 0 \\ 0 & 0 & 0 & 1 \end{array}\right)$};
		\node [style=none] (28) at (15, -8) {$\longrightarrow$};
		\node [style=none] (29) at (15, -7.5) {\tiny $\mathrm{CNOT}_{43}$};
		\node [style=none] (30) at (-8, 0) {$(a)$};
		\node [style=none] (31) at (-8, -4) {$(b)$};
		\node [style=none] (32) at (-8, -8) {$(c)$};
	\end{pgfonlayer}
\end{tikzpicture}$$}
    \caption{An example of applying Algorithm \ref{alg:blockrowcol} to a $4 \times 4$ parity matrix $M$ with two partitions. \emph{(a)} First, we eliminate the off-diagonal block in the first column of $M$ using two block row additions; one to ensure that $\mathrm{rsp}(M_{21}) \subseteq \mathrm{rsp}(M_{11})$ and the second to eliminate $M_{21}$. \emph{(b)} Now we eliminate the off-diagonal block in the first column of $(M^{-1})^T$ with one block row addition. From Lemma \ref{thm:zeroblockinv}, we already know that the top-right block must be zero. \emph{(c)} As the two partitions are now unentangled, we can apply Algorithm \ref{alg:rowcolcnot} to them individually.}\label{fig:blockrowcol}
\end{figure}

Finally, let us consider the performance of \textsc{BlockRowCol}. Since it acts as \textsc{RowCol} on the blocks of the parity matrix, it will perform at most $2k(k-1)$ block row addition operations. From Lemma \ref{thm:blockrowaddgates}, we know that each of these uses at most $\frac{n}{k}$ non-local gates; therefore:
\begin{theorem}
    \textsc{BlockRowCol} synthesizes a CNOT circuit on $n$ qubits for a distributed architecture of $k$ equal partitions using at most $2n(k - 1)$ non-local gates, regardless of the connectivity between the partitions.
\end{theorem}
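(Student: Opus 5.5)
The plan is to count block row additions and bound the cost of each one separately, then multiply. Correctness of \textsc{BlockRowCol} (that it terminates with every off-diagonal block zero, so the final phase uses only local gates) is taken from the discussion preceding the statement. I also use the fact that each ZX-type generalized CNOT is implemented by exactly one non-local CNOT. So it suffices to bound the number of generalized CNOT gates emitted by the calls to \textsc{BlockElimCol}. The final per-partition \textsc{RowCol} calls only act within diagonal blocks $M_{pp}$, so they contribute no non-local gates.

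\emph{Step 1: counting block row additions.} Consider one call to \textsc{BlockElimCol} made while $|P| = m$ partitions remain entangled. Since $p$ is chosen not to be an articulation point, $G_P$ is connected and the Steiner tree $T$ exists inside $G_P$. It has at most $m$ vertices and hence at most $m-1$ edges, Steiner (zero-block) vertices included. Each of the two loops performs at most one block row addition per tree edge, so the call performs at most $2(m-1)$ block row additions. Each iteration of the outer loop of Algorithm \ref{alg:blockrowcol} makes two calls, one on $M$ and one on $M'$. Summing over $m = k, k-1, \dots, 1$ gives at most
\[ \sum_{m=1}^{k} 4(m-1) = 2k(k-1) \]
block row additions in total.

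\emph{Step 2: bounding the cost of each addition.} By Lemma \ref{thm:blockrowaddgates}, each block row addition with coefficient matrix $R$ costs exactly $\mathrm{rank}(R)$ ZX-type generalized CNOTs. When $flip$ is set we use $R^T$, which has the same rank. In the first loop, Lemma \ref{thm:blockmakepivot} applies because all blocks are square $\frac{n}{k}\times\frac{n}{k}$, so the row-count hypothesis holds; it gives $\mathrm{rank}(R) \le \mathrm{rank}(M_{jp}) \le \frac{n}{k}$. In the second loop, Lemma \ref{thm:blockpivotelim} gives $\mathrm{rank}(R) = \mathrm{rank}(M_{jp}) \le \frac{n}{k}$. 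Its hypothesis $\mathrm{rsp}(M_{jp}) \subseteq \mathrm{rsp}(M_{p_jp})$ is guaranteed by the first loop. For this I need to check that later pivot-making operations higher in the tree only enlarge row spaces of parents, as Lemma \ref{thm:blockmakepivot} ensures. I also need that operations in post-order do not disturb already-established inclusions lower in the tree.

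\emph{Step 3: conclusion.} Multiplying the two bounds gives at most $2k(k-1)\cdot\frac{n}{k} = 2n(k-1)$ generalized CNOT gates, hence at most that many non-local gates. Nothing in the argument depends on $G$ beyond connectivity of $G_P$, which the choice of non-articulation points provides.

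The main obstacle is the edge accounting in Step 1. I need to make sure that Steiner vertices with zero blocks do not break the per-edge bound. I also need to confirm that the row-space invariants established in the first loop are preserved, so that every second-loop step really has rank at most $\frac{n}{k}$ rather than requiring extra fix-up operations.
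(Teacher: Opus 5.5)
Your proposal is correct and follows the paper's own argument: the paper also bounds the number of block row additions by $2k(k-1)$, by analogy with the Steiner-tree \textsc{RowCol} count, and charges each at most $\frac{n}{k}$ generalized CNOT gates via Lemma \ref{thm:blockrowaddgates}. One simplification: $\mathrm{rank}(R)\le\frac{n}{k}$ holds automatically because $R$ is an $\frac{n}{k}\times\frac{n}{k}$ matrix, so the row-space invariants you call the main obstacle affect only the correctness of \textsc{BlockElimCol} (which you already take as given), not the gate count.
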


\subsection{Optimality}
\label{sec:blockrowcol-optimality}

In this section, we will show that \textsc{BlockRowCol} is optimal in a certain sense. We begin by giving a lower bound on the number of generalized CNOT gates required to synthesize any distributed CNOT circuit.

\begin{lemma}
    For $k \geq 2$ and $n \geq 4$, there exist parity matrices that cannot be reduced to block-diagonal form using fewer than $\frac{1}{32}\min\{nk, \frac{n^2}{\log_2 n}\}$ generalized CNOT gates.
\end{lemma}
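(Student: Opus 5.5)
A counting argument does the job. We compare the number of distinct parity-matrix actions achievable with $m$ generalized CNOT gates, modulo local operations, against the number of inequivalent parity matrices.

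\textbf{What must be counted.} Reducing $M$ to block-diagonal form with a circuit $W$ means $WM = D$ with $D$ block diagonal. So $M$ lies in the coset $W^{-1}\cdot\mathcal{B}$, where $\mathcal{B} \cong GL(n/k,\mathbb{F}_2)^k$ is the group of block-diagonal invertible matrices. Equivalently, it suffices to count the left cosets $GL(n,\mathbb{F}_2)/\mathcal{B}$ reachable by $m$ gates, interleaved with arbitrary local operations. Only parity matrices matter, so we restrict to ZX-type gates. (For general generalized CNOTs one would first reduce to this case: conjugating by local Cliffords, any $C(A,B)$ is a single CNOT up to local Cliffords, and a local Clifford whose linear part preserves parity structure can be absorbed.) I expect this restriction to be the delicate point, and would state it as a conjugation argument using the single-CNOT decomposition of $C(A,B)$.

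\textbf{Upper bound on reachable cosets.} Push all local block-diagonal operations to one end, using $D\,C_{ab}(\vec z,\vec x)\,D^{-1} = C_{ab}(D_a^{-T}\vec z, D_b\vec x)$, which is again ZX-type. Then every circuit with $m$ non-local gates equals $D\cdot \prod_{t=1}^m C_{a_tb_t}(\vec z_t,\vec x_t)$. Hence the number of reachable cosets is at most
$$\left(k^2\, 2^{2n/k}\right)^m,$$
counting the choice of ordered partition pair and the two vectors for each gate.

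\textbf{Lower bound on the number of cosets.} Using $|GL(n,\mathbb{F}_2)| \ge 2^{n^2}/4$ and $|GL(n/k,\mathbb{F}_2)|^k \le 2^{n^2/k}$, the number of cosets is at least
$$2^{n^2(1-1/k)-2} \ge 2^{n^2/2-2} \quad \text{for } k\ge 2.$$

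\textbf{Solving for $m$.} If every coset is reachable, then
$$m\,(2\log_2 k + 2n/k) \ge n^2/2 - 2.$$
There are two regimes.
\begin{itemize}
\item If $2n/k \ge 2\log_2 k$, the denominator is at most $4n/k$, which gives $m \ge (n^2/2-2)k/(4n) \ge nk/32$ once $n\ge4$ absorbs the constant.
\item Otherwise the denominator is at most $4\log_2 k \le 4\log_2 n$, taking $k\le n$ as the partitions are nonempty. This gives $m \ge n^2/(32\log_2 n)$.
\end{itemize}
Combining the two cases yields the bound $\tfrac1{32}\min\{nk, n^2/\log_2 n\}$. The slack in the constant $32$ covers the $-2$ and the rounding.

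The main obstacle is making the ZX-type reduction rigorous for arbitrary generalized CNOT gates. The counting arithmetic is routine.
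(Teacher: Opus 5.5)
\textbf{Gap: the restriction to ZX-type gates.} Your counting skeleton matches the paper's: cosets of $\mathcal{B}$ in $GL(n,\mathbb{F}_2)$, local operations pushed to one end, then a split between the $nk$ and $n^2/\log_2 n$ regimes. Your arithmetic is also correct. But as written it only bounds circuits of ZX-type gates interleaved with local \emph{CNOT} circuits. The lemma is about arbitrary generalized CNOT gates $C_{ab}(A,B)$ with arbitrary local Cliffords in between, which is why the paper counts $k(k-1)\cdot 4^{n/k}\cdot 4^{n/k}$ gates.

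Your proposed reduction to the ZX-type case does not go through. Writing each $C(A,B)$ as a CNOT conjugated by local Cliffords leaves general local Cliffords such as $H$ and $S$ between the gates. These have no block-diagonal parity matrix $D$, so the identity $D\,C_{ab}(\vec z,\vec x)\,D^{-1}=C_{ab}(D_a^{-T}\vec z, D_b\vec x)$ cannot collect them. Pushing an $H$ through a ZX-type gate produces a gate with $X$ or $Y$ factors. What you would actually need is that non-ZX-type gates never lower the count when the target is a CNOT circuit. The paper only states this as an expectation, by analogy with the Hadamard-free result it cites, and does not prove it.

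\textbf{The fix: count all generalized CNOT gates.} The restriction is unnecessary.
\begin{itemize}
\item Push every local Clifford to the end with $[U^\dagger\otimes I]\,C(A,B)\,[U\otimes I]=C(U^\dagger AU,B)$, as in Algorithm 1. Every circuit then has the form $L\prod_t C_{a_tb_t}(A_t,B_t)$ with $L$ a local Clifford.
\item Signs can be absorbed into $L$, because $C(-A,B)=(I\otimes B)\,C(A,B)$ and these two factors commute.
\item Count cosets of Cliffords modulo local Cliffords. Suppose the Clifford unitaries $U_M$ and $U_{M'}$ implementing $M$ and $M'$ differ by a local Clifford. Then the CNOT circuit $U_{MM'^{-1}}$ is local. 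It maps $X_i\mapsto\prod_j X_j^{N_{ji}}$ with $N=MM'^{-1}$, so $N$ is block diagonal. Hence at least $|GL(n,\mathbb{F}_2)|/|GL(n/k,\mathbb{F}_2)|^k\ge 2^{n^2/2-2}$ cosets must be covered.
\item Each gate has at most $k(k-1)\cdot 16^{n/k}$ choices, so $m\,(2\log_2 k+4n/k)\ge n^2/2-2$.
\item Split at $4n/k$ versus $2\log_2 k$. In the first case $m\ge nk/16-k/(4n)\ge nk/32$. In the second, $m\ge (n^2/2-2)/(4\log_2 n)\ge n^2/(32\log_2 n)$. Both hold for $n\ge 3$, so the constant $\frac{1}{32}$ survives.
\end{itemize}
This is the paper's proof, except that it uses $\#I_n\ge 2^{n(n-1)}$ and the mediant inequality where you use a case split.
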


\begin{proof}
    We use a counting argument similar to \cite{patel2003efficient}. Let $\#I_m$ be the number of invertible $m \times m$ binary matrices, then we have:
    $$ 2^{m(m-1)} \leq \#I_m = \prod_{i=0}^{m-1} (2^m - 2^i) \leq 2^{m^2} $$
    Now let $N$ be the number of distinct unitaries comprised of generalized CNOT gates on $n$ qubits divided in $k$ equal size partitions. Since any parity matrix on $n$ qubits can be realized as a circuit of generalized CNOT gates followed by a local CNOT circuit on each partition, we must have:
    $$ \#I_n \leq N \cdot \#I_{n/k}^k \implies N \geq \frac{\#I_n}{\#I_{n/k}^k} \geq \frac{2^{n(n-1)}}{(2^{(n/k)^2})^k} = 2^{n^2(1 - 1/k) - n} \geq 2^{\frac{1}{2}n^2 - n}$$
    The number of possible generalized CNOT gates is at most $k(k - 1)\cdot 4^{n/k} \cdot 4^{n/k}$ (the $k(k-1)$ factor represents the choice of partitions, and each $4^{n/k}$ is a choice of Pauli string). Now suppose that $d$ is the minimum number of generalized CNOT gates needed to reduce any parity matrix to block-diagonal form, then we will have:
    $$ (k(k-1)16^{n/k})^d \geq N \geq 2^{\frac{1}{2}n^2 - n} $$
    Then by taking logarithms and applying the mediant inequality, we see that:
    \begin{align*}
        d &\geq \frac{\frac{1}{2}n^2 - n}{4\frac{n}{k} + \log_2(k(k-1))} \geq \frac{1}{2}\min \left\{ \frac{\frac{1}{2}n^2 - n}{4\frac{n}{k}}, \frac{\frac{1}{2}n^2 - n}{\log_2(k(k-1))}\right\} \\
        &\geq \min\left\{\frac{1}{16}(n - 2)k, \frac{n(n-2)}{8\log_2 k}\right\} \geq \min\left\{\frac{1}{16}(n - 2)k, \frac{n(n-2)}{8\log_2 n}\right\} \geq \frac{1}{32}\min\left\{nk, \frac{n^2}{\log_2 n}\right\}\qedhere
    \end{align*}
\end{proof}

This leads immediately to the asymptotic optimality of \textsc{BlockRowCol} - that is, it performs within a constant factor of optimal for the worst-case inputs.

\begin{theorem}
    \textsc{BlockRowCol} uses an asymptotically optimal number of non-local gates whenever $k = O(n / \log n)$. The strategy that runs both \textsc{BlockRowCol} and Patel-Markov-Hayes' algorithm \cite{patel2003efficient} and chooses the result with the fewest non-local gates is asymptotically optimal for any value of $k$.
\end{theorem}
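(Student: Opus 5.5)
The plan is to compare the upper bound of the previous theorem with the lower bound of the preceding lemma, in two regimes of $k$. Note that the lower bound concerns reduction to block-diagonal form. Any complete synthesis of a CNOT circuit must reach block-diagonal form, since the remaining local CNOTs do not affect off-diagonal structure. In the generalized-CNOT representation, every non-local CNOT is itself a generalized CNOT gate. So a worst-case parity matrix needs $\Omega(\min\{nk, n^2/\log n\})$ non-local gates under any method. This also covers methods using plain CNOTs, because any circuit converted with Algorithm \ref{alg:localcliffpush} has exactly as many generalized CNOT gates as it had non-local CNOTs.

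\emph{Regime $k = O(n/\log n)$.} Suppose $k \le c\, n/\log_2 n$ for a constant $c$. Then $nk \le c\, n^2/\log_2 n$, so $\min\{nk, n^2/\log_2 n\} \ge \min\{1, 1/c\}\, nk$. The lower bound is therefore $\Omega(nk)$. \textsc{BlockRowCol} uses at most $2n(k-1) = O(nk)$ non-local gates, so its worst-case count is within a constant factor of optimal. I would state explicitly that optimality means matching the worst case over parity matrices, as the paper does in the sentence before the theorem. The lemma's hypotheses $k\ge2$, $n\ge4$ cover all nontrivial cases; for $k=1$ both sides are $0$.

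\emph{Combined strategy, arbitrary $k$.} I would use the known fact that Patel--Markov--Hayes synthesizes any $n$-qubit parity matrix with $O(n^2/\log n)$ CNOTs in total \cite{patel2003efficient}. Hence it uses at most $O(n^2/\log n)$ non-local CNOTs, each of which is a single generalized CNOT. Taking the better of the two outputs gives at most
\[
\min\{2nk,\ O(n^2/\log n)\} = O\!\left(\min\{nk, n^2/\log n\}\right)
\]
non-local gates. This matches the lower bound up to constants for every $k$.

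The main subtlety is the caveat that Patel--Markov--Hayes assumes all-to-all connectivity, so restricted local or partition connectivity may need to be set aside in the second claim, or handled with a Steiner-tree variant. I would state that the comparison is in the unrestricted setting, or note that non-local gates are counted without connectivity constraints there. Beyond that, the argument is only a direct comparison of the two displayed bounds.
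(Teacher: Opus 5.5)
Your proposal is correct and follows the paper's argument. The paper says the theorem follows immediately by combining the $\frac{1}{32}\min\{nk, \frac{n^2}{\log_2 n}\}$ lower bound with the $2n(k-1)$ bound for \textsc{BlockRowCol} and the $O(n^2/\log n)$ bound for Patel--Markov--Hayes, which is the two-regime comparison you spell out. Your caveat that Patel--Markov--Hayes ignores connectivity constraints, so the second claim should be read in the unrestricted setting, is legitimate; the paper leaves this point implicit.
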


However, for the case of $k = 2$ we can actually do better, and prove that the output of \textsc{BlockRowCol} is not just asymptotically, but \emph{approximately} optimal amongst circuits that use only ZX-type generalized CNOT gates; that is, within a constant factor of optimal for \emph{every} input. Note however that all (non-distributed) CNOT circuits become ZX-type generalized CNOT gate circuits once distributed, so this is equivalent to saying that it is optimal amongst all CNOT circuits. It was shown in \cite{bravyi2021hadamard} that generally speaking, Clifford gates do not provide a benefit to CNOT gates when implementing any unitary that \emph{could} be implemented with CNOT gates. There we would not expect non-ZX-type generalized CNOT gates to provide a benefit in this case, either.

\begin{theorem}
    \label{thm:blockrowcolapproxopt}
    For $k = 2$ and any input parity matrix  $M$, (a variation of) \textsc{BlockRowCol} uses at most twice as many non-local gates as any circuit implementing $M$ with ZX-type generalized CNOT gates.
\end{theorem}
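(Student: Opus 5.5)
The plan is to sandwich the optimum between a simple rank invariant and a tight count of what \textsc{BlockRowCol} spends. Throughout, write the parity matrix in $2\times 2$ block form
$$ M = \begin{pmatrix} A & B \\ C & D \end{pmatrix}, $$
with all blocks of size $\frac{n}{2}\times\frac{n}{2}$. The quantity to track is $r(M) = \mathrm{rank}(B) + \mathrm{rank}(C)$.

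\emph{Lower bound.} Take any circuit implementing $M$ with local CNOT gates and ZX-type generalized CNOT gates. A local gate multiplies $M$ on the left by a block-diagonal invertible matrix, which leaves $\mathrm{rank}(B)$ and $\mathrm{rank}(C)$ unchanged. By the lemma on ZX-type gates, a gate from partition $1$ to partition $2$ is a block row addition with a rank-one coefficient matrix $R = \vec{x}\vec{z}^T$. It replaces $C$ by $C + RA$ and $D$ by $D + RB$, and leaves $A$ and $B$ untouched. Hence $\mathrm{rank}(C)$ changes by at most one and $\mathrm{rank}(B)$ does not change; the gate from $2$ to $1$ is symmetric. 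Since the identity has $r = 0$, every such circuit must contain at least $\mathrm{rank}(B) + \mathrm{rank}(C)$ non-local gates.

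\emph{Upper bound.} I will take the variant of \textsc{BlockRowCol} that eliminates the first block column of $M$ and then the first block column of $M' = (M^{-1})^T$, and I will count its gates using Lemma \ref{thm:blockrowaddgates}. There are three steps.
\begin{enumerate}
\item \textbf{Pivot step.} Since $M$ is invertible, $\begin{pmatrix} A \\ C\end{pmatrix}$ has full column rank $\frac{n}{2}$. So only $r_0 = \frac{n}{2} - \mathrm{rank}(A)$ rows of $C$ need to be added into $A$ to make $A$ invertible, and then $\mathrm{rsp}(C) \subseteq \mathrm{rsp}(A)$. Here I use the refinement of Lemma \ref{thm:blockmakepivot} that counts only the pivot rows coming from $C$; the variant picks them this way. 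Since $\begin{pmatrix} A & B\end{pmatrix}$ also has full row rank, $r_0 \le \min\{\mathrm{rank}(B), \mathrm{rank}(C)\}$. This step costs $r_0$ gates and turns $B$ into $B_1 = B + R_0 D$ with $\mathrm{rank}(R_0) = r_0$, so $\mathrm{rank}(B_1) \le \mathrm{rank}(B) + r_0$.
\item \textbf{Elimination of $C$.} By Lemma \ref{thm:blockpivotelim}, $C$ can now be zeroed with exactly $\mathrm{rank}(C)$ gates; this step does not change the top block row.
\item \textbf{Elimination in $M'$.} $M$ is now block upper triangular with invertible diagonal blocks. By Lemma \ref{thm:zeroblockinv}, $M'$ is block lower triangular with off-diagonal block $C'B_1^TA'$. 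This block has rank $\mathrm{rank}(B_1)$, and the pivot block $A'$ is invertible, so no pivot step is needed. Lemma \ref{thm:blockpivotelim} then eliminates it with $\mathrm{rank}(B_1)$ gates.
\end{enumerate}

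Summing the three steps gives at most
$$ r_0 + \mathrm{rank}(C) + \mathrm{rank}(B) + r_0 \;\le\; \mathrm{rank}(B) + \mathrm{rank}(C) + 2\min\{\mathrm{rank}(B),\mathrm{rank}(C)\} \;\le\; 2\,r(M) $$
non-local gates. Combined with the lower bound, this proves the factor of two.

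\emph{Main obstacle.} The delicate point is the pivot step. A careless choice of pivot rows could cost up to $\mathrm{rank}(C)$ rather than $r_0$, and could inflate $\mathrm{rank}(B_1)$ by more than $r_0$. This is exactly where the ``variation'' of the algorithm enters. I would make it precise by choosing $R_0$ to add only a minimal set of rows of $C$ that complete a basis of $\mathrm{rsp}(A)$ to the full space $\mathbb{F}_2^{n/2}$, each into a distinct redundant row of $A$. I would then check directly that this $R_0$ has rank exactly $r_0$, and that $r_0 \le \mathrm{rank}(B)$ follows from the full row rank of $\begin{pmatrix} A & B\end{pmatrix}$.
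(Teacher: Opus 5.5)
Your proof is correct. The lower bound is the paper's rank-invariant argument; you additionally make explicit that local CNOT layers act by block-diagonal invertible matrices and so preserve both off-diagonal ranks, which the paper leaves implicit. Your upper bound, however, takes a genuinely different route.

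The paper charges the pivot step $\mathrm{rank}(C)$, the rank of the block about to be eliminated. Its bound for a single pass is therefore $3\,\mathrm{rank}(C)+\mathrm{rank}(B)$, which by itself only certifies a factor of three (e.g.\ when $B=0$). Its ``variation'' is to run the procedure in both orientations and keep the cheaper output, giving $\mathrm{rank}(B)+\mathrm{rank}(C)+2\min\{\mathrm{rank}(B),\mathrm{rank}(C)\}$.

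You instead use invertibility of $M$ twice: once through the block column formed by $A$ and $C$, and once through the block row formed by $A$ and $B$. This shows the pivot step costs exactly $r_0=\frac{n}{2}-\mathrm{rank}(A)\le\min\{\mathrm{rank}(B),\mathrm{rank}(C)\}$. A single pass then costs at most $\mathrm{rank}(B)+\mathrm{rank}(C)+2r_0$, which is never more than the paper's best-of-both bound.

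Your ``variation'' is also not really a modification. The construction in the proof of Lemma~\ref{thm:blockmakepivot} stacks the pivot block above the other block and reads off pivot rows top-down. It therefore already selects exactly $\frac{n}{2}-\mathrm{rank}(A)$ rows of $C$ and adds each into a distinct non-pivot row of $A$. So your argument shows that \textsc{BlockRowCol} as written, started from either partition, is already within a factor of two, with no need to run it twice.

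What the paper's route buys is simplicity. It uses Lemma~\ref{thm:blockmakepivot} only as a black box, through its stated guarantee that the coefficient matrix has rank at most that of the block being added in, and it needs no dimension counting.
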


\begin{proof}
    Let the two partitions be called $A$ and $B$, and divide $M$ into the four $\frac{n}{2} \times \frac{n}{2}$ blocks $M_{AA}$, $M_{BB}$, $M_{BA}$, $M_{AB}$. Then \textsc{BlockRowCol} uses generalized CNOT gates in three steps:
    \begin{enumerate}
        \item First, it is ensured that $\mathrm{rsp}(M_{BA}) \subseteq \mathrm{rsp}(M_{AA})$. Since $\mathrm{dim}(\mathrm{rsp}(M_{BA}) \cup \mathrm{rsp}(M_{AA})) = \frac{n}{2}$ this will result in $M_{AA}$ being invertible. This requires at most $\mathrm{rank}(M_{BA})$ ZX-type generalized CNOT gates acting from $B$ to $A$. Since each of these acts like a rank-one update to the $A$ partition of $M$, the rank of $M_{AB}$ may increase by at most $\mathrm{rank}(M_{BA})$.
        \item Then $M_{BA}$ is eliminated using $M_{AA}$. This requires exactly $\mathrm{rank}(M_{BA})$ gates acting from $A$ to $B$ and does not affect the rank of $M_{AB}$. After this step $M_{BB}$ must be invertible, as otherwise the $(M_{BA}, M_{BB})$ submatrix of $M$ would not have full rank, which is impossible because $M$ is invertible.
        \item Finally, $M'_{AB}$ is eliminated in $M'$ using $M'_{AA}$. Since $M_{AA}$ and $M_{BB}$ are both invertible, Lemma \ref{thm:zeroblockinv} guarantees that $M'_{AB}$ has the same rank as $M_{AB}$. Hence this requires at most $\mathrm{rank}(M_{BA}) + \mathrm{rank}(M_{AB})$ gates, because of the potential increase in rank in the first step.
    \end{enumerate}
    In total, this requires at most $3\cdot\mathrm{rank}(M_{BA}) + \mathrm{rank}(M_{AB})$ gates. Now consider repeating this procedure with the $A$ and $B$ partitions swapped; this will require $3\cdot\mathrm{rank}(M_{AB}) + \mathrm{rank}(M_{BA})$ gates. By picking the smallest of the two circuits, the number of gates is bounded above by:
    \begin{align*}
        &3\cdot \min \{ \mathrm{rank}(M_{BA}), \mathrm{rank}(M_{AB}) \} + \max \{ \mathrm{rank}(M_{BA}), \mathrm{rank}(M_{AB}) \} \\
        &\leq  2\cdot \min \{ \mathrm{rank}(M_{BA}), \mathrm{rank}(M_{AB}) \} + 2\cdot \max \{ \mathrm{rank}(M_{BA}), \mathrm{rank}(M_{AB}) \} \\
        &= 2\cdot (\mathrm{rank}(M_{BA}) + \mathrm{rank}(M_{AB}))
    \end{align*}
    On the other hand, since each ZX-type generalized CNOT gate acts as a rank-one update on one partition of $M$, at least $\mathrm{rank}(M_{BA}) + \mathrm{rank}(M_{AB})$ such gates are required to implement $M$, as the ranks of $M_{BA}$ and $M_{AB}$ must be reduced to zero.
\end{proof}

\section{Clifford Circuit Synthesis}
\label{sec:clifford}

Now we build on the previous section and introduce \textsc{DistRowCol}, an algorithm to synthesize distributed Clifford circuits using generalized CNOT gates. We will use the same setup as for CNOT circuits; that is, $k$ equal partitions totaling $n$ qubits (although in principle this method can also be applied to unequal partitions). We will assume familiarity with the stabilizer tableau representation of Clifford circuits; see \cite{aaronson2004improved} for an introduction. As before, there are three key parts to our method:
\begin{enumerate}
    \item The stabilizer tableau corresponding to a distributed Clifford circuits can be organized into a block form, where each block represents the interaction between two partitions.
    \item Once the tableau has been reduced to a block-diagonal form, it can be reduced to the identity using only local Clifford gates.
    \item Generalized CNOT gates will act on the tableau by multiplying into the blocks.
\end{enumerate}
Therefore, we will describe how to reduce a stabilizer tableau into the block-diagonal form while attempting to minimize the number of generalized CNOT gates by modifying the method of \cite{vandenberg2020simple}, and its architecture-aware variation in \cite{winderl2023architecture}. The resulting is very similar to \textsc{BlockRowCol} and has the same complexity (and hence is asymptotically efficient), but cannot take advantage of low-rank structures in the same way, so it is not approximately optimal for $k = 2$ like we showed for \textsc{BlockRowCol} in Theorem \ref{thm:blockrowcolapproxopt}.

\subsection{Block Structure and Generalized CNOT Gates}
\label{sec:cliffblock}

We begin by describing the block representation for a Clifford circuit $C$; note that this is \emph{not} the same as the block structure for CNOT circuits, in particular we will have $k \times n$ blocks instead of $k \times k$ blocks. Let $S = [S_1, S_2, \dots, S_n]$ be $n$-qubit Pauli strings that are the stabilizers of $C$, and $D = [D_1, D_2, \dots, D_n]$ be $n$-qubit Pauli strings that are the destabilizers of $C$. Then the definition of a stabilizer tableau implies that
$$ \forall i \neq j: \quad S_iS_j = S_jS_i, \quad D_iD_j = D_jD_i, \quad S_iD_j = D_jS_i, \quad S_iD_i \neq D_iS_i  $$
We organize the blocks as follows. Supposing that the qubits are laid out so that the partitions are contiguous, we relabel the stabilizers and destabilizers of $C$ so that $S_{ia}$ and $D_{ia}$ are stabilizer and destabilizer corresponding to the $i$th qubit of partition $a$. Furthermore, we divide each of these $n$-qubit Pauli strings into substrings of size $\frac{k}{n}$, so that $S_{iab}$ is a $\frac{n}{k}$-qubit Pauli string representing the action of the $i$th qubit in partition $a$ on partition $b$:
$$ S_{ia} = \bigotimes_b S_{iab} \qquad D_{ia} = \bigotimes_b D_{iab} $$
We will use the term $(a, b)$-block to mean the sets of Pauli strings $\{ S_{iab} \mid \forall i\}$ and $\{ D_{iab} \mid \forall i\}$. We refer to a block as \emph{trivial} if $S_{iab} = D_{iab} = I$ for all $i$, and non-trivial otherwise. The goal of our algorithm will be to make the $(a, b)$-block trivial for every pair of partitions $a$ and $b$ with $a \neq b$; this is what we called \emph{the block-diagonal form} above.

We can describe the action of a generalized CNOT gate $C_{ab}(A, B)$ in terms of the block representation. In Section \ref{sec:representation}, we saw that
$$ C(A, B) \cdot [P \otimes Q] \cdot C(A, B) = \begin{cases}
    P \otimes Q & \text{if } AP = PA \land BQ = QB \\
    P \otimes \overline{QB} & \text{if } AP \neq PA \land BQ = QB \\
    \overline{PA} \otimes Q & \text{if } AP = PA \land BQ \neq QB \\
    \overline{PA} \otimes \overline{QB} & \text{if } AP \neq PA \land BQ \neq QB
\end{cases} $$
and therefore in terms of the block structure, applying $C_{ab}(A, B)$ acts as
$$ \forall p, i: \qquad S_{ipa} \gets \begin{cases}
    S_{ipa} & \text{if } BS_{ipb} = S_{ipb}B \\
    \overline{S_{ipa}A} & \text{if } BS_{ipb} \neq S_{ipb}B 
\end{cases} \qquad S_{ipb} \gets \begin{cases}
    S_{ipb} & \text{if } AS_{ipa} = S_{ipa}A \\
    \overline{S_{ipb}B} & \text{if } AS_{ipa} \neq S_{ipa}A 
\end{cases}$$ 
and likewise for $D_{ipa}$ and $D_{ipb}$. Note that this operation may create a sign on any of the updated substrings, for example $S_{ipa}$, and so, unlike a regular stabilizer tableau, multiple signs are tracked per stabilizer and destabilizer, one for each partition. The signs of each substring are multiplied together to give the sign of the overall string. 

\subsection{Disentangling Partitions}
\label{sec:disentangle}

In \cite{winderl2023architecture}, as in \textsc{RowCol} explained in Section \ref{sec:rowcolinterp}, a Clifford circuit is synthesized by disentangling each of the qubits one at a time. We proceed similarly, but by disentangling partitions. The above update rule leads to an important observation: suppose that we wish to set some substring $S_{ipb}$ to the identity, then we can pick any Pauli string $A$ which anticommutes with $S_{ipa}$, and perform the generalized CNOT gate $C_{ab}(A, S_{ipb})$. The main difficulty then becomes picking $A$ so that this operation maintains the structure that we will build in the stabilizer tableau. To do this, we will make use of the following result:

\begin{lemma}
    \label{thm:blockfindanticommute}
    Suppose we have two sets of $n$-qubit Pauli strings $U = \{ U_1, \dots, U_k \}$ and $V = \{ V_1, \dots, V_k \}$ with $k < n$ such that $U_i U_j = U_j U_i$, $V_i V_j = V_j V_i$, and $U_i V_j = V_j U_i$ for all $i \neq j$ and $U_i V_i \neq V_i U_i$ for all $i$. Then for any Pauli string $P \neq \pm I$ such that $P U_i = U_i P$ and $P V_i = V_i P$ for all $i$, there exists a Pauli string $Q$ also with $Q U_i = U_i Q$ and $Q V_i = V_i Q$ for all $i$, but $PQ \neq QP$. 
\end{lemma}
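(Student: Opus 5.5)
We work in the symplectic picture: ignoring signs, an $n$-qubit Pauli string is a vector in $\mathbb{F}_2^{2n}$. Two strings commute exactly when their symplectic form $\omega(\cdot,\cdot)$ vanishes, and $\omega$ is nondegenerate. The hypotheses say that $U_1,V_1,\dots,U_k,V_k$ form a partial symplectic basis: they pair up in hyperbolic pairs with $\omega(U_i,V_i)=1$, and all other pairings vanish. Their images $u_i,v_i$ span a $2k$-dimensional subspace $W$ on which $\omega$ is nondegenerate. The linear independence follows from the pairing: if $\sum_i(\alpha_i u_i+\beta_i v_i)=0$, pairing with $v_j$ gives $\alpha_j=0$, and pairing with $u_j$ gives $\beta_j=0$. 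The strings commuting with every $U_i,V_i$ correspond exactly to the symplectic complement $W^\perp$.

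The key step is to show that $\omega$ restricted to $W^\perp$ is nondegenerate. Since $W\cap W^\perp=0$ (nondegeneracy on $W$) and $\dim W^\perp=2n-2k$ (nondegeneracy of $\omega$ on the whole space), we get $\mathbb{F}_2^{2n}=W\oplus W^\perp$. Now suppose $p\in W^\perp$ pairs to zero with all of $W^\perp$. It also pairs to zero with $W$, because it lies in $W^\perp$, so it pairs to zero with the whole space and hence $p=0$.

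For the conclusion, take $P\neq\pm I$ with $P$ commuting with all $U_i,V_i$. Its symplectic vector $p$ is then nonzero and lies in $W^\perp$. By the previous paragraph there is some $q\in W^\perp$ with $\omega(p,q)=1$, and we take $Q$ to be any Pauli string with symplectic vector $q$ (with sign $+$). Then $Q$ commutes with every $U_i$ and $V_i$ and anticommutes with $P$. The hypothesis $k<n$ is only needed so that $W^\perp\neq 0$, i.e.\ so that such a $P$ can exist at all; the argument itself does not use it further.

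The only real obstacle is making the translation from commutation of Pauli strings to the symplectic form explicit. In particular, we must note that $\pm$ signs and the phaseless multiplication do not affect commutation, so working modulo signs is harmless. Alternatively, one could give a constructive argument in the style of \cite{vandenberg2020simple}: find a local-style Clifford $V$ mapping $U_i\mapsto Z_i$ and $V_i\mapsto X_i$ (a symplectic Gram--Schmidt). Then $P$ must act as identity on the first $k$ qubits and nontrivially on some later qubit $j$, and $Q$ can be chosen as a single-qubit Pauli on qubit $j$ anticommuting with $P$ there, pulled back by $V^\dagger$. I would present the linear-algebra version and mention the constructive one as the algorithmic route.
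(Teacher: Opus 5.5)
Your proof is correct, but it takes a different route from the paper.

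\textbf{The paper's route.} It argues by induction on $k$ and $n$. It uses the fact from \cite{vandenberg2020simple} that an anticommuting pair $U_1, V_1$ can be mapped by a Clifford $C$ to $Z_1, X_1$. All remaining $U_i$, $V_i$ and $P$ must then act as the identity on qubit $1$, so that qubit is stripped off and the argument recurses. The base case is $k=0$, where any $Q$ anticommuting with $P$ works, and the answer is pulled back as $Q = C^\dagger(I\otimes Q'')C$.

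\textbf{Your route.} You argue in one shot with symplectic linear algebra. The subspace $W=\mathrm{span}\{u_i,v_i\}$ is nondegenerate, so $\mathbb{F}_2^{2n}=W\oplus W^\perp$ and $\omega$ restricted to $W^\perp$ is nondegenerate. Hence any nonzero $p\in W^\perp$ has a partner $q\in W^\perp$ with $\omega(p,q)=1$.

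\textbf{What your version buys.} It is self-contained: it needs neither the existence of the normalizing Clifford nor any induction bookkeeping. (The paper's inductive case is written for $k>1$, whereas $k\ge 1$ is what is needed.) It makes transparent that $k<n$ only rules out the vacuous case. It also directly justifies the paper's later remark that in practice $Q$ is found by Gaussian elimination: one solves $\omega(q,u_i)=\omega(q,v_i)=0$, $\omega(q,p)=1$, and solvability of that system is exactly what you prove.

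\textbf{What the paper's version buys.} It is explicitly circuit-constructive, and its qubit-peeling construction is reused in the proof of Lemma \ref{thm:zxstabsfrompaulis}. Your \emph{constructive alternative} is essentially the paper's induction unrolled: map all pairs simultaneously to $Z_i, X_i$, then pick a single-qubit $Q$ on a qubit $j>k$ where $P$ is nontrivial.
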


\begin{proof}
    We proceed by induction on $k$ and $n$. If $k = 0$ and $n > 0$, we can pick any $Q$ such that $PQ \neq QP$, which exists since $P \neq \pm I$. Otherwise if $k > 1$ and $n > 1$, let $C$ be a Clifford unitary such that $C U_1 C^\dagger = Z_1$ and $C V_1 C^\dagger = X_1$, which always exists since they anticommute \cite{vandenberg2020simple}. Now let $U' = \{ C U_i C^\dagger \mid i > 1 \}$, $V' = \{ C V_i C^\dagger \mid i > 1 \}$, and $P' = C P C^\dagger$. We must have each of $U'_i$, $V'_i$, and $P'$ as identity on the first qubit, since they commute with both $C U_1 C^\dagger = Z_1$ and $C U_1 C^\dagger = X_1$, so let $U'_i = I \otimes U''_i$, $V'_i = I \otimes V''_i$ and $P' = I \otimes P''$. Then $U''$ and $V''$ must obey the same commutation conditions as $U$ and $V$, and $P''$ must commute with $U''$ and $V''$. Therefore, by the induction hypothesis, there exists a $Q''$ such that $P''Q'' \neq Q''P''$ and $Q''$ commutes with the sets of $(n-1)$-qubit Pauli strings $U''$ and $V''$. Now $Q = C^\dagger (I \otimes Q'') C$ must commute with $U$ and $V$, and satisfies $QP \neq PQ$ as required. Since we assumed $k < n$, the induction never hits the case of $n = 0$, and so this procedure is well-defined.
\end{proof}

We will use this to eliminate a whole column of blocks. That is, for a partition $a$, we will make the $(a, b)$-block trivial for every $b \neq a$. This is done one qubit at a time: suppose that for some $i$, we have $S_{jab} = D_{jab} = I$ for all $b \neq a$ and $j < i$. Then we can set $S_{iab} = D_{iab} = I$ for all $b \neq a$ in two steps:
\begin{enumerate}
    \item Because of the commutation relations of the stabilizer tableau and the fact that $S_{ja}$ and $D_{ja}$ are identity everywhere except partition $a$, we must have that $S_{jaa}S_{kaa} = S_{kaa}S_{jaa}$, $D_{jaa}D_{kaa} = D_{kaa}D_{jaa}$, $S_{jaa}D_{kaa} = D_{kaa}S_{jaa}$, and $S_{jaa}D_{jaa} \neq D_{jaa}S_{jaa}$ for all $j, k < i$. Furthermore, $S_{iaa}$ must commute with $S_{jaa}$ and $D_{jaa}$ for all $j < i$. Therefore, we can apply Lemma \ref{thm:blockfindanticommute} to find $A$ which commutes with $S_{jaa}$ and $D_{jaa}$ for all $j < i$, but anticommutes with $S_{iaa}$. Now apply $C_{ab}(A, S_{iab})$ for all $b \neq a$, then this sets $S_{iab} = I$ for all $b \neq a$ while keeping $S_{jab} = D_{jab} = I$ for all $j < i$.
    \item Now since $S_{iab} = I$ for all $b \neq a$, we must have $S_{iaa}D_{iaa} \neq D_{iaa}S_{iaa}$. Additionally, $S_{iaa}$ still commutes with $S_{jaa}$ and $D_{jaa}$ for all $j < i$. Therefore, we apply $C_{ab}(S_{iaa}, D_{iab})$ for all $b\neq a$, and this sets $D_{iab} = I$ for all $b \neq a$ while keeping $S_{iab} = I$ and $S_{jab} = D_{jab} = I$ for all $j < i$.
\end{enumerate}
Therefore, applying this method for each $1 \leq i \leq \frac{n}{k}$ in order will eliminate a column of blocks as desired. However, similar to Lemma \ref{thm:zeroblockinv}, we can also show that it eliminates the corresponding row of blocks: 

\begin{lemma}
    Given a stabilizer tableau in block form and a partition $a$, if the $(a, b)$-block is trivial for every $b \neq a$, then the $(b, a)$-block is also trivial for every $b \neq a$.
\end{lemma}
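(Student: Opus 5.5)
The plan is to argue directly from the commutation relations of the tableau, using the symplectic structure of Pauli strings on a single partition, much as Lemma \ref{thm:zeroblockinv} follows from the inverse-transpose relation in the CNOT case. Let $m = \frac{n}{k}$. Since the $(a,b)$-block is trivial for every $b \neq a$, each $S_{ia}$ and $D_{ia}$ (for $1 \leq i \leq m$) is the identity outside partition $a$. Hence the commutation relation of any of these strings with any other $n$-qubit Pauli string is determined entirely by the substrings on partition $a$.

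\textbf{Step 1: the $2m$ substrings on partition $a$ form a symplectic basis.} From the tableau relations, the $m$-qubit substrings $S_{iaa}, D_{iaa}$ satisfy the following: they pairwise commute, except that $S_{iaa}$ anticommutes with $D_{iaa}$ for each $i$. Pass to the binary symplectic representation, in which an $m$-qubit Pauli string up to sign is a vector in $\mathbb{F}_2^{2m}$, and commutation corresponds to vanishing of the symplectic form $\omega$. In this representation the $2m$ vectors $s_i, d_i$ satisfy $\omega(s_i, d_j) = \delta_{ij}$ and $\omega(s_i, s_j) = \omega(d_i, d_j) = 0$. These relations force linear independence: if $\sum_i (\alpha_i s_i + \beta_i d_i) = 0$, then pairing with $d_j$ gives $\alpha_j = 0$, and pairing with $s_j$ gives $\beta_j = 0$. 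Since there are $2m$ independent vectors, they form a basis of $\mathbb{F}_2^{2m}$.

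\textbf{Step 2: substrings on $a$ from other partitions commute with this basis.} Take $b \neq a$ and any $j$, and consider $S_{jb}$; the argument for $D_{jb}$ is identical. By the tableau relations, $S_{jb}$ commutes with every $S_{ia}$ and $D_{ia}$, because its index is different from theirs. As noted above, this commutation reduces to commutation of $S_{jba}$ with $S_{iaa}$ and $D_{iaa}$. So the vector $v$ of $S_{jba}$ satisfies $\omega(v, u) = 0$ for every element $u$ of a basis of $\mathbb{F}_2^{2m}$. By nondegeneracy of $\omega$, $v = 0$, that is, $S_{jba} = \pm I$.

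\textbf{Main obstacle: the sign bookkeeping.} The paper tracks a sign on each substring, and the argument above only yields $S_{jba} = \pm I$ rather than $I$. I will handle this by observing that only the product of the per-partition signs is meaningful. A sign $-1$ on the $a$-substring can therefore be moved onto another substring of the same string, for instance the $b$-substring $S_{jbb}$, without changing the Pauli string. This normalisation makes the $(b,a)$-block literally trivial. Alternatively, I will note that triviality is intended up to sign, in the same way as the earlier elimination steps. I expect everything else to be routine linear algebra.
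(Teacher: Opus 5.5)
Your proof is correct and has the same structure as the paper's. The paper notes that $\{S_{iaa}\}$ and $\{D_{iaa}\}$ form the tableau of a local Clifford $C$ with $C S_{iaa} C^\dagger = Z_{q_{ai}}$ and $C D_{iaa} C^\dagger = X_{q_{ai}}$, so anything commuting with all of them must be the identity on partition $a$; your symplectic-basis and nondegeneracy argument is a direct linear-algebra proof of that same fact. Your explicit handling of the per-substring sign (obtaining $\pm I$ and absorbing the sign into another substring) covers a point the paper's proof skips, and it matches the paper's own $\neq \pm I$ tests in \textsc{Disentangle}.
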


\begin{proof}
    By assumption, we have $S_{iab} = D_{iab} = I$ for every $b \neq a$ and all $i$. Therefore, the commutativity properties of the sets $\{S_{ia} \mid \forall i\}$ and $\{ D_{ia} \mid \forall i \}$ guaranteed by the stabilizer tableau must be encoded by $\{S_{iaa} \mid \forall i\}$ and $\{ D_{iaa} \mid \forall i \}$. These form the stabilizer tableau of some local Clifford unitary $C$ on the $a$ partition which maps $C S_{iaa} C^\dagger = Z_{q_{ai}}$ and $C D_{iaa} C^\dagger = X_{q_{ai}}$ for all $i$. For any $b \neq a$, the stabilizer tableau also implies that $S_{jb}$ and $D_{jb}$ commute with $S_{ia}$ and $D_{ia}$ for every $j$, and this implies that $S_{jb}$ and $D_{jb}$ must be identity on the qubit $q_{ia}$ for all $i$, which is exactly equivalent to $S_{jba} = D_{jba} = I$ for all $j$, and hence the $(b, a)$-block is trivial.
\end{proof}

Therefore, this procedure acts to completely disentangle one partition from the rest of the partitions. It is formalized in in Algorithm \ref{alg:disentanglealltoall} as the subroutine \textsc{Disentangle}. In practice, the step depending on Lemma \ref{thm:blockfindanticommute} can be done without constructing the basis changes explicitly, by phrasing the commutativity constraints as a linear algebra problem that is solved with Gaussian elimination. Similar to \textsc{BlockRowCol}, it is easy to see that doing this for each partition in order produces a stabilizer tableau in block-diagonal form. We will define this formally in the next section, after discussing how to adapt Algorithm \ref{alg:disentanglealltoall} for arbitrary network topologies.

\begin{algorithm}
    \caption{The \textsc{Disentangle} subroutine for all-to-all connected architectures.}\label{alg:disentanglealltoall}
    \KwIn{A stabilizer tableau $(S, D)$ in block form, and a partition $a$.}
    \KwOut{A generalized CNOT gate circuit which disentangles partition $a$ from the rest.}

    $C \gets$ an empty circuit\;
    \For{$i = 1$ \KwTo $\frac{n}{k}$}{
        $A \gets $ Apply Lemma \ref{thm:blockfindanticommute} with $U = \{ S_{jaa} \mid j < i\}$, $V = \{ D_{jaa} \mid j < i \}$, and $P = S_{iaa}$\;
        \For{each partition $b \neq a$}{
            Apply $C_{ab}(A, S_{iab})$ to $S$ and $D$\;
            Prepend $C_{ab}(A, S_{iab})$ to $C$\;
        }

        \For{each partition $b \neq a$}{
            Apply $C_{ab}(S_{iaa}, D_{iab})$ to $S$ and $D$\;
            Prepend $C_{ab}(S_{iaa}, D_{iab})$ to $C$\;
        }
    }
    
    \Return{$C$}
\end{algorithm}

\subsection{DistRowCol}
\label{sec:distrowcol}

Adapting Algorithm \ref{alg:disentanglealltoall} for arbitrary connectivities follows the same strategy as in Algorithm \ref{alg:blockelimcol} for \textsc{BlockElimCol}; we will find a Steiner tree, make sure that every node in the tree can eliminate its children, and then perform the elimination. While for \textsc{BlockElimCol} the condition that one partition could eliminate another was phrased in terms of the row-spaces of blocks of the parity matrix, it is considerably simpler in this case. In particular, if we wish to eliminate $P$ using $Q$, we can apply $C(A, Q)$ for any $A$ that anticommutes with $P$. We can always find such an $A$ whenever $P \neq \pm I$, and so this is the condition that needs to be enforced. However, care must be taken that performing this operation does not alter the structure we will build in the tableau. For this, we need the following result:

\begin{lemma}
    \label{thm:blockfindcommute}
    For any set of $n$-qubit Pauli strings $U = \{ U_1, U_2, \cdots, U_k \}$ with $k < 2n$, there exists a $n$-qubit Pauli string $V \neq \pm I$ such that $U_iV = VU_i$ for all $1 \leq i \leq k$.
\end{lemma}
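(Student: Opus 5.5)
The plan is to use the standard symplectic representation of Pauli strings and reduce the statement to a dimension count over $\mathbb{F}_2$. Discarding signs, each $n$-qubit Pauli string $P$ corresponds to a vector $v_P = (\vec{x}, \vec{z}) \in \mathbb{F}_2^{2n}$. Two strings commute if and only if the symplectic form
$$\omega(v_P, v_Q) = \vec{x}_P \cdot \vec{z}_Q + \vec{z}_P \cdot \vec{x}_Q$$
vanishes. Moreover $v_P = 0$ exactly when $P = \pm I$. These are standard facts, and they are implicit in the paper's use of stabilizer tableaux. So the statement becomes: for any $k < 2n$ vectors $u_1, \dots, u_k \in \mathbb{F}_2^{2n}$, there is a nonzero $v$ with $\omega(u_i, v) = 0$ for all $i$.

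The key step is to observe that each condition $\omega(u_i, v) = 0$ is a single homogeneous linear equation in the $2n$ unknown coordinates of $v$. Explicitly, it reads $\vec{z}_{U_i} \cdot \vec{x}_V + \vec{x}_{U_i} \cdot \vec{z}_V = 0$. Collect these into a $k \times 2n$ binary matrix $L$, whose $i$th row is $(\vec{z}_{U_i}, \vec{x}_{U_i})$. Then $\mathrm{rank}(L) \leq k < 2n$, so by rank–nullity the kernel of $L$ has dimension at least $2n - k \geq 1$. Any nonzero kernel vector $(\vec{x}, \vec{z})$ gives $V = \prod_j X_j^{x_j} Z_j^{z_j}$, taken up to a phase so that $V \in \mathcal{P}^\pm_n$. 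This $V$ is not $\pm I$ and commutes with every $U_i$.

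There is no real obstacle. The only point needing care is justifying the symplectic commutation criterion. I would argue it directly: $P$ and $Q$ commute if and only if the number of qubit positions where their single-qubit factors anticommute is even. On a single qubit, $X^{a}Z^{b}$ and $X^{c}Z^{d}$ anticommute exactly when $ad + bc = 1$. This also matches the phrasing the paper uses in the \textsc{Disentangle} discussion, where commutativity constraints are "phrased as a linear algebra problem that is solved with Gaussian elimination." That remark also shows how $V$ would be computed in practice: take a nullspace vector of $L$ found by Gaussian elimination.
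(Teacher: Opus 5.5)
Your proof is correct, but it takes a different route from the paper's. The paper argues by induction on $k$ and $n$. It first disposes of the trivial cases: $k=0$, all $U_i$ trivial, or $U$ mutually commuting, in which case it takes a non-identity element of $U$. Otherwise it picks an anticommuting pair $U_1, U_2$ and uses a Clifford $C$ (via \cite{vandenberg2020simple}) that sends them to $Z_1$ and $X_1$. It then strips off the first qubit and applies the induction hypothesis to the remaining $k-2$ strings on $n-1$ qubits, which preserves $k<2n$; this mirrors the proof of Lemma~\ref{thm:blockfindanticommute}. You instead write the commutation constraints as $k$ homogeneous linear equations over $\mathbb{F}_2$ in $2n$ unknowns and finish with rank--nullity. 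Your argument is shorter and needs neither the Clifford normal-form result nor any case split. It is also exactly the Gaussian-elimination procedure the paper says it uses in practice. It proves slightly more as well: the solutions form a subspace of dimension at least $2n - \dim\mathrm{span}\{v_{U_i}\} \geq 2n-k$. So the hypothesis $k<2n$ could be weakened to requiring only that the $U_i$ do not generate the full Pauli group modulo phases. The paper's inductive route buys mainly uniformity with Lemma~\ref{thm:blockfindanticommute} and an explicit Clifford frame, neither of which your version needs.
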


\begin{proof}
    If $k = 0$ or $U_i = I$ for all $i$, pick any $V \neq \pm I$. If $U$ is mutually commuting, then pick $V$ to be any non-identity element of $U$. Otherwise we proceed by induction on $k$ and $n$. Suppose without loss of generality that $U_1$ and $U_2$ anticommute, then there is a Clifford unitary $C$ that maps $C U_1 C^\dagger = Z_1$ and $C U_2 C^\dagger = X_1$ \cite{vandenberg2020simple}. For every $i$, write $C U_i C^\dagger = P_i \otimes U'_i$ for some single-qubit Pauli $P_i$. Now by induction there is an $(n - 1)$-qubit Pauli string $V'$ which commutes with $U'_i$ for all $i > 2$. Since $U'_1 = U'_2 = I$, $V'$ also commutes with these, and hence $V = C^\dagger (I \otimes V') C$ commutes with all $U_i$. Because $k < 2n$ and we decrease $k$ by two and $n$ by one at each induction step, we never hit the case of $n = 0$ and the induction is well-defined.
\end{proof}

Let $G$ be the connectivity graph of the partitions. For each qubit $i$ within a partition $a$, suppose that $S_{jab} = D_{jab} = I$ for all $j < i$ and $b \neq a$, then we will eliminate $S_{iab}$ for all $b \neq a$ in three steps:
\begin{enumerate}
    \item First, let $S = \{ b \mid S_{iab} \neq \pm I \} \cup \{ a \}$ and find a Steiner tree $T = (V, E)$ for $S$ in $G$, and orient $E$ pointing away from $a$.
    \item For every $b \in V $, let $p_b \in V$ be its parent in $T$. Find $b$ such that $S_{iab} \neq \pm I$ and $S_{iap_b} = I$ and pick any $A$ which anticommutes with $S_{iab}$. If $p_b = a$ then find $B \neq \pm I$ which commutes with $S_{jaa}$ and $D_{jaa}$ for all $j < i$ using Lemma \ref{thm:blockfindcommute}, otherwise if $p_b \neq a$, then pick any $B \neq \pm I$. Apply $C_{bp_b}(A, B)$, then this sets $S_{iap_b} = B$. Repeat this until $S_{iap_b} \neq \pm I$ for every $b$.
    \item Now traverse the edges of the tree $p_b \to b$ from the leaves to the root. If $p_b = a$, then pick $A$ which anticommutes with $S_{iaa}$ and commutes with $S_{jaa}$ and $D_{jaa}$ for all $j < i$ using Lemma \ref{thm:blockfindanticommute}. If $p_b \neq a$, then pick any $A$ which anticommutes with $S_{iap_b}$. Apply $C_{p_bb}(A, S_{iab})$, then this sets $S_{iab} = I$.
\end{enumerate}
This clearly results in $S_{iab} = I$ for all $b \neq a$, but we must also justify why we maintain $S_{jab} = D_{jab} = I$ for all $j < i$ and $b \neq a$: since we started with $S_{jab} = D_{jab} = I$ for $b \neq a$, this can only change if we apply generalized CNOT gates which target the $a$ partition. In both the second and third steps, we always ensure that such gates act on $a$ with a Pauli string that commutes with $S_{jaa}$ and $D_{jaa}$ for all $j < i$, and hence $S_{jab}$ and $D_{jab}$ are unchanged. The procedure for eliminating $D_{iab}$ is essentially the same, except we can avoid the use of Lemmas \ref{thm:blockfindcommute} and \ref{thm:blockfindanticommute}: in the second step, we know that $D_{iaa} \neq \pm I$ since it must anticommute with $S_{iaa}$, and so the case $p_b = a$ can never occur. In the third step, in the case of $p_b = a$ we can always take $A = S_{iaa}$, since $S_{iaa}$ commutes with $S_{jaa}$ and $D_{jaa}$ for all $j < i$ and anticommutes with $D_{iaa}$ by the properties of the stabilizer tableau. This is formalized in Algorithm \ref{alg:disentanglearbitrary}. 

From \textsc{Disentangle}, we can construct \textsc{DistRowCol} in the same way that \textsc{BlockRowCol} is constructed from \textsc{BlockElimCol}. This is formalized in Algorithm \ref{alg:distrowcol}. The maximum number of non-local gates is the same as \textsc{BlockRowCol}, and we have:
\begin{theorem}
    \textsc{DistRowCol} synthesizes a Clifford circuit on $n$ qubits for a distributed architecture of $k$ equal partitions using at most $2n(k - 1)$ non-local gates, regardless of the connectivity between partitions. 
    
    \textsc{DistRowCol} is asymptotically optimal whenever $k = O(n / \log n)$, and the strategy that also runs Aaronson-Gottesman's algorithm \cite{aaronson2004improved} and chooses the result with the fewest non-local gates is asymptotically optimal for any value of $k$.
\end{theorem}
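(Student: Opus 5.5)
The theorem has three claims: the $2n(k-1)$ upper bound, asymptotic optimality when $k = O(n/\log n)$, and optimality of the combined strategy for every $k$. I would prove them in that order, reusing the structure of the \textsc{BlockRowCol} arguments.

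For the upper bound I count generalized CNOT gates per qubit. Partitions are disentangled one at a time, and within partition $a$ the \textsc{Disentangle} routine treats each of its $\frac{n}{k}$ qubits $i$ by eliminating the substrings $S_{iab}$ and then $D_{iab}$ over the still-entangled partitions. For the stabilizer step I would bound steps 2 and 3 together by $2(|V_T|-1)$, with $|V_T| \le |P|$. Each tree edge $p_b \to b$ is used at most once in step 2, since afterwards $S_{iap_b} \neq \pm I$ and no later gate makes it trivial again. I would justify this by processing the tree in post-order, as in Algorithm \ref{alg:blockelimcol}. Each edge is then used exactly once in step 3. I would also check that step 2 never needs an edge into the root when $p_b=a$ fails, which holds because $S_{iaa}$ already anticommutes with $D_{iaa}$ and so is non-identity.

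The naive count then gives $2(|P|-1)$ gates for the $S$-row and $2(|P|-1)$ for the $D$-row. That is a factor of $2$ too many compared with $\sum_{a} \frac{n}{k}\cdot 2(|P_a|-1)$, which is at most $2\frac{n}{k}\cdot\frac{k(k-1)}{2}\cdot 2 = 2n(k-1)$ when each row costs $2(|P|-1)$ and the partitions remaining shrink from $k$ down to $1$. So the budget works if each of the $S$ and $D$ eliminations costs at most $|P|-1$ per partition summed over shrinking $P$: $\sum_{m=1}^{k}(m-1)=\frac{k(k-1)}{2}$, giving $\frac{n}{k}\cdot 2\cdot 2\cdot\frac{k(k-1)}{2}=2n(k-1)$ with two tree passes. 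I therefore expect the accounting to mirror \textsc{BlockRowCol} exactly. There, $2k(k-1)$ block operations of rank at most $\frac{n}{k}$ are charged, and here the corresponding charge is $2k(k-1)$ \emph{tree-edge passes} per qubit layer, one $S$- and one $D$-row each, i.e.\ $\frac{n}{k}\cdot 2k(k-1)/\ldots$. Matching these constants exactly is the main obstacle. I would settle the count by charging, per qubit, each of the at most $|P|-1$ tree edges twice for $S$ and twice for $D$, and then using $\sum_{|P|}(|P|-1)\le k(k-1)/2$ for the counting across partitions.

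For optimality I would adapt the counting lemma from Section \ref{sec:blockrowcol-optimality}. The number of Clifford tableaux on $n$ qubits is $2^{2n^2+O(n)}$, and each local Clifford group has size at most $2^{2(n/k)^2+O(n/k)}$. So the number of distinct products of generalized CNOT gates needed is at least $2^{2n^2(1-1/k)-O(n)}$. The number of possible gates is still at most $k(k-1)16^{n/k}$, so the same mediant-inequality computation gives a lower bound of $\Omega(\min\{nk, n^2/\log n\})$. When $k=O(n/\log n)$ this is $\Omega(nk)$, which matches $2n(k-1)$. When $k$ is larger the bound is $\Omega(n^2/\log n)$, and Aaronson--Gottesman's algorithm uses $O(n^2/\log n)$ gates in total, hence at most that many non-local ones. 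Taking the better of the two outputs therefore attains the lower bound in every regime.
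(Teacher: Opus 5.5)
Your proposal is correct and follows the paper's largely implicit argument. Charging each of the at most $|P|-1$ Steiner-tree edges one fill-in gate and one elimination gate, for both the $S$- and $D$-substrings, gives $4(|P|-1)$ gates per qubit and $\frac{n}{k}\cdot 4\cdot\frac{k(k-1)}{2}=2n(k-1)$ in total, so there is no factor-of-two discrepancy to resolve. Redoing the counting lemma with full Clifford-group sizes ($2^{2n^2+O(n)}$ against $2^{2n^2/k+O(n)}$ for local Cliffords) and combining it with the $O(n^2/\log n)$ Aaronson--Gottesman bound gives both optimality claims.

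One side remark is false, though it carries no weight in the count. In the stabilizer step, $S_{iaa}$ can be $\pm I$, because its anticommutation with $D_{ia}$ may lie entirely outside partition $a$, for example after an inter-partition SWAP. This is exactly why the paper fills in the root using Lemma \ref{thm:blockfindcommute}. Your charging scheme already pays for that edge, so the $2n(k-1)$ bound is unaffected.
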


\begin{algorithm}
    \caption{The \textsc{Disentangle} subroutine for arbitrary connectivity.}\label{alg:disentanglearbitrary}
    \KwIn{The partition connectivity graph $G$, the set of entangled partitions $P$, a stabilizer tableau $(S, D)$ in block form, and a partition $a \in P$.}
    \KwOut{A generalized CNOT gate circuit which disentangles partition $a$ from the rest.}

    $C \gets$ an empty circuit\;
    $G_P \gets$ the subgraph of $G$ induced by $P$\;
    \For{$i = 1$ \KwTo $\frac{n}{k}$}{
        \tcc{Eliminate stabilizers:}
        $Q \gets \{ b \in P \mid S_{iab} \neq \pm I \} \cup \{ a \}$\;
        $T = (V_T, E_T) \gets$ the Steiner tree of $Q$ in $G_P$ rooted at $a$\;

        $B' \gets $ apply Lemma \ref{thm:blockfindcommute} with $U = \{ S_{jaa} \mid j < i\} \cup \{ D_{jaa} \mid j < i \}$\;
        \For{each edge $p_b \to b \in E_T$ in post-order from $a$}{
            \lIf{$S_{iap_b} \neq \pm I$}{continue}
            $A \gets$ any Pauli string which anticommutes with $S_{iab}$\;
            $B \gets$ \lIf{$p_b = a$}{$B'$ \lElse*{any $B \neq \pm I$ (e.g $Z_1$)}}
            Apply $C_{bp_b}(A, B)$ to $S$ and $D$\;
            Prepend $C_{bp_b}(A, B)$ to $C$\;
        }

        $A' \gets $ apply Lemma \ref{thm:blockfindanticommute} with $U = \{ S_{jaa} \mid j < i\}$, $V = \{ D_{jaa} \mid j < i \}$, and $P = S_{iaa}$\;
        \For{each edge $p_b \to b \in E_T$ in post-order from $a$}{
            $A \gets $ \lIf{$p_b = a$}{$A'$ \lElse*{any $A$ which anticommutes with $S_{iap_b}$}}
            Apply $C_{p_bb}(A, S_{iab})$ to $S$ and $D$\;
            Prepend $C_{p_bb}(A, S_{iab})$ to $C$\;
        }

        \tcc{Eliminate destabilizers:}
        $Q \gets \{ b \in P \mid D_{iab} \neq \pm I \} \cup \{ a \}$\;
        $T = (V_T, E_T) \gets$ the Steiner tree of $Q$ in $G_P$ rooted at $a$\;
        
        \For{each edge $p_b \to b \in E_T$ in post-order from $a$}{
            \lIf{$D_{iap_b} \neq \pm I$}{continue}
            $A \gets$ any Pauli string which anticommutes with $D_{iab}$\;
            $B \gets$ any $B \neq \pm I$ (e.g $Z_1$)\;
            Apply $C_{bp_b}(A, B)$ to $S$ and $D$\;
            Prepend $C_{bp_b}(A, B)$ to $C$\;
        }

        \For{each edge $p_b \to b \in E_T$ in post-order from $a$}{
            $A \gets $ \lIf{$p_b = a$}{$S_{iaa}$ \lElse*{any $A$ which anticommutes with $D_{iap_b}$}}
            Apply $C_{p_bb}(A, D_{iab})$ to $S$ and $D$\;
            Prepend $C_{p_bb}(A, D_{iab})$ to $C$\;
        }
    }
    
    \Return{$C$}
\end{algorithm}

\begin{algorithm}
    \caption{The \textsc{DistRowCol} algorithm for distributed Clifford circuit synthesis.}\label{alg:distrowcol}
    \KwIn{The partition connectivity graph $G$, the qubit connectivity graphs of each partition $\{ G_p \}$, and a stabilizer tableau $(S, D)$.}
    \KwOut{A representation $(\mathcal{D}(C_N), C_L)$ of a distributed circuit implementing $(S, D)$.}

    $C \gets$ an empty circuit\;
    $P \gets$ the set of all partitions\;
    \While{$P$ is non-empty}{
        $G_P \gets$ the induced subgraph of $P$ in $G$\; 
        Pick $p \in P$ which is not a articulation point of $G_P$\;
        $C_p \gets \textsc{Disentangle}(G, P, S, D, p)$\;
        Prepend $C_p$ to $C$\;
        Remove $p$ from $P$\;
    }

    \For{each partition $p$}{
        $S_p \gets \{ S_{ipp} \mid i\}$\;
        $D_p \gets \{ D_{ipp} \mid i\}$\;
        $L_p \gets$ synthesize $(S_p, D_p)$ using the method of \cite{winderl2023architecture} with connectivity $G_p$\;
        Prepend $L_p$ to $C$\;
    }

    Find $(\mathcal{D}(C_N), C_L)$ using Algorithm \ref{alg:localcliffpush}\;
    \Return{$(\mathcal{D}(C_N), C_L)$}
\end{algorithm}

\section{Clifford+R\texorpdfstring{\textsubscript{Z}}{Z} Circuit Optimization}
\label{sec:clifft}

In this section, we will extend our analysis to Clifford+R\textsubscript{Z} circuits. While these circuits do not have a normal form in the same way as CNOT of Clifford circuits, and hence cannot be optimized in the same way, we will show how to adapt several important and common operations to the distributed case. In particular, we will first show in Section \ref{sec:distpauliexp} how to resynthesize such a circuit in a way that non-local gates respect the connectivity constraints of the partitions, even if this was not the case for the original circuit. We do this while attempting to minimize the number of non-local gates introduced by adapting the method of \cite{vandaele2022phase} for synthesizing phase polynomial circuits for architectures with restricted qubit connectivity. We will then show two methods of optimizing the number of non-local gates. The first of these in Section \ref{sec:gencnotfold} is similar to the common phase-folding \cite[Section 7.4.3]{kissingerwetering2024book} or phase-teleportation \cite{zhang2019optimizing,kissinger2020reducing} optimization for reducing the number of phase gates, but instead applied to reduce the number of generalized CNOT gates. The second detects Clifford and CNOT subcircuits up to reordering of gates and optimizes them; this may be beneficial in circuits with few phase gates. We note that since our circuit representation is built on top of Pauli exponentials, methods such as phase-folding and other T-count optimization techniques \cite{heyfron2018efficient} can be directly applied on any subcircuit which does not contain non-local gates. These subcircuits can be identified using the method in Section \ref{sec:findsubcirc}.

\subsection{Distributing Pauli Exponentials}
\label{sec:distpauliexp}

In \cite{vandaele2022phase}, a method is introduced to synthesize phase polynomial circuits for architectures with restricted qubit connectivity. It consists of the following main steps: first, the circuit is rewritten into a parity matrix that encodes a Pauli exponential representation of the circuit. Then the circuit is processed one Pauli exponential at a time, with the order selected via a heuristic that is designed to minimize the number of gates generated. For each Pauli exponential, it is reduced to have support on only one qubit using a Steiner tree method, and then it can be implemented using a phase gate and removed from the parity matrix. After there are no more Pauli exponentials remaining, there is a remaining Clifford circuit, which can be synthesized by other means. We will follow a similar set of steps, but replace each of these with its distributed equivalent. The resulting method uses many of the same techniques that were discussed in Section \ref{sec:clifford}.

Suppose that we are given a Pauli exponential gate $P(A, \theta)$ where $A$ spans more than one partition. We wish to conjugate $P(A, \theta)$ by a sequence of generalized CNOT gates $C(A_i, B_i)$ so that it only spans one partition. Since, $U^\dagger P(A, \theta) U = P(U^\dagger A U, \theta)$ for any Clifford unitary $U$, this is the same as finding generalized CNOT gates that conjugate $A$ itself to span only one partition. Let us write $A = \bigotimes_{i} A_i$ in terms of blocks $A_i$ that act on a single partition $i$, similar to how we did in Section \ref{sec:cliffblock}. Then we already considered this problem in Section \ref{sec:distrowcol}, and arrived at the following procedure:
\begin{enumerate}
    \item Suppose we want to eliminate every block of $A$ except for $A_i$ acting on partition $i$, which we call the pivot. Then we define $S = \{ j \mid A_j \neq I\} \cup \{ i\}$ and find the Steiner tree $T = (V_T, E_T)$ of $S$ in the partition connectivity graph $G$, and orient its edges away from $i$.
    \item Now we want to ensure that every block $A_j$ of $A$ is able to be eliminated by its parent. Let $p_j$ be the parent of $j$ in $T$, then we must ensure that $A_{p_j} \neq I$. This can be done by picking $B \neq I$ and $C$ which anticommutes with $A_{j}$, and applying $C_{jp_j}(C, B)$. In Section \ref{sec:distrowcol}, we needed to take greater care to avoid destroying structure in the stabilizer tableau, but this is not necessary here.
    \item Finally we can eliminate every block $A_j$, except $A_i$, using its parent. We can pick any $B$ which anticommutes with $A_{p_j}$, and apply $C_{p_jj}(B, A_j)$. After this step, $A_j = I$ for all $j \neq i$.
\end{enumerate}
However, we have not yet specified how to pick the pivot partition $i$. Note that so long as $A_i \neq I$, any choice of pivot leads to the same number of generalized CNOT gates used to reduce $A$. Thus we must instead consider which pivot will reduce the number of generalized CNOT gates used in \emph{future} reductions of other Pauli exponentials.

Here is where will import some ideas from \cite{vandaele2022phase}. In particular, the heuristic they use for this operation is to try the reduction with every possible pivot, and pick the one for which minimizes the minimum number of gates needed to synthesize the next Pauli exponential; we will use a similar heuristic. However, in the distributed setting, applying any fixed generalized CNOT gate to an arbitrary Pauli string vary rarely leads to a non-identity block becoming the identity: as the block size increases, this probability decreases exponentially. Therefore, we expect that applying a generalized CNOT gate to the remaining Pauli exponentials will increase the number of non-identity blocks by one, any time it is applied to a pair of blocks of which exactly one is the identity. Using this approximation, we can pick the pivot which minimizes the likely increase in non-identity blocks, even without explicitly computing the result of the reduction for each pivot.

The second heuristic we will adapt from \cite{vandaele2022phase} is the ordering in which Pauli exponentials are processed. In particular, they compute the Steiner trees for all of the remaining options and pick the one that minimizes the number of gates that would needed to eliminate it. However, for phase polynomial circuits, all of the Pauli exponentials commute, whereas this is not the case for generic Clifford+R\textsubscript{Z} circuits that we are considering. To take this into account, we can first compute a circuit representation $(\mathcal{D}(C_N), C_L)$ according to Algorithm \ref{alg:localcliffpush} \emph{assuming that there is only one partition}. The resulting DAG $\mathcal{D}(C_N)$ precisely encodes the freedom of choice we have in the ordering of which Pauli exponentials to process. Because it was constructed with only one partition, $\mathcal{D}(C_N)$ will contain exclusively non-Clifford Pauli exponentials. This leads to the following procedure, which is formalized in Algorithm \ref{alg:distpauliexp}:
\begin{enumerate}
    \item Compute $(\mathcal{D}(C_N), C_L)$ assuming that there is only one partition using Algorithm \ref{alg:localcliffpush}.
    \item While $\mathcal{D}(C_N)$ is non-empty, pick the Pauli exponential $P(A, \theta)$ that requires the fewest gates to eliminate and is a source vertex in $\mathcal{D}(C_N)$, and delete it from $\mathcal{D}(C_N)$. Pick the pivot which minimizes the likely increase in the number of gates required in the next iteration. Perform the elimination, applying the same gates to every remaining gate in $\mathcal{D}(C_N)$ and adding it to $C_L$. After this $P(A, \theta)$ can be implemented using a local non-Clifford Pauli exponential.
    \item When no more Pauli exponentials remain, use \textsc{DistRowCol} to synthesize $C_L$, and compute the new circuit representation $(\mathcal{D}(C_N), C_L)$ with the correct set of partitions.
\end{enumerate}
For a circuit with $n$ qubits and at most $r$ phase gates, this uses at most $2(k - 1)(r + n)$ generalized CNOT gates.

\begin{algorithm}
    \caption{Distributing Clifford+R\textsubscript{Z} circuits with arbitrary connectivity.}\label{alg:distpauliexp}
    \KwIn{A Clifford+R\textsubscript{Z} circuit $C$, and a partition connectivity graph $G$\;}
    \KwOut{A representation $(\mathcal{D}(C_N), C_L)$ of $C$ which respects the connectivity in $G$\;}

    Apply Algorithm \ref{alg:localcliffpush} to find $(\mathcal{D}(C_N), C_L)$ \emph{assuming there is only one partition}\; 
    $C' \gets$ an empty circuit\;
    \While{$\mathcal{D}(C_N)$ is non-empty}{
        $V_\mathrm{src} \gets \{ v \in \mathcal{D}(C_N) \mid v \text{ is a source vertex}\}$\;
        \For{$v = P(A, \theta) \in V_\mathrm{src}$}{
            $S_v \gets \{ j \mid A_j \neq I\}$\;
            $(V_v, E_v) \gets$ the Steiner tree of $S$ in $G$\;
            $C_v \gets 2|V_v| - |S_v|$\;
        }
        $v \gets \arg\min \{ C_v \mid v \in V_\mathrm{src}\}$\;
        Delete $v$ from $\mathcal{D}(C_N)$\;
        $p \gets \arg\min \{ \sum_{P(B, \theta) \in \mathcal{D}(C_N)} \mathrm{PivotCost}(B, E_v, p) \mid p \in S_v\}$ \;

        \For{each edge $p_j \to j \in E_v$ in post-order from $p$}{
            \lIf{$A_{p_j} \neq I$}{continue}
            Let $B \neq I$ be arbitrary and $C$ anticommute with $A_j$\;
            Apply $C_{jp_j}(C, B)$ to $A$, append it to $C'$, and prepend it to $C_L$\;
            \lFor{$g \in \mathcal{D}(C_N)$}{$g \gets C_{jp_j}(C, B) \cdot g \cdot C_{jp_j}(C, B)$}
        }
        \For{each edge $p_j \to j \in E_v$ in post-order from $p$}{
            Let $B$ anticommute with $A_{p_j}$\;
            Apply $C_{p_jj}(B, A_j)$ to $A$, append it to $C'$, and prepend it to $C_L$\;
            \lFor{$g \in \mathcal{D}(C_N)$}{$g \gets C_{p_jj}(B, A_j) \cdot g \cdot C_{p_jj}(B, A_j)$}
        }

        Append $P(A, \theta)$ to $C'$\;
    }

    Synthesize $C_L$ using \textsc{DistRowCol} and append the output to $C'$\;
    Compute the final representation $(\mathcal{D}(C_N), C_L)$ using Algorithm \ref{alg:localcliffpush}\;
    \Return{$(\mathcal{D}(C_N), C_L)$}

    \BlankLine
    \SetKwProg{Fn}{procedure}{ do}{end}
    \Fn{$\mathrm{PivotCost}(B, E_v, p)$}{
        $S \gets 0$\;
        \For{each edge $p_j \to j \in E_v$ in post-order from $p$}{
            \If{($B_{p_j} \neq I$ and $B_{j} = I$) or ($B_{p_j} = I$ and $B_{j} \neq I$)}{
                $S \gets S + 1$\;
            }
        }
        \Return{$S$}
    }
\end{algorithm}

\subsection{Generalized CNOT Folding}
\label{sec:gencnotfold}

In this section, we will discuss a optimization for distributed Clifford+R\textsubscript{Z} circuits which is akin to the phase-folding \cite[Section 7.4.3]{kissingerwetering2024book} or phase-teleportation \cite{zhang2019optimizing,kissinger2020reducing} optimization in the non-distributed setting. Given two generalized CNOT gates $C(A, B_1)$ and $C(A, B_2)$ that act on the same pair of partitions and share one Pauli string argument, we can combine them:
$$ C(A, B_1) \cdot C(A, B_2) = \begin{cases}
    C(A, \overline{B_1B_2}) & \text{if } B_1B_2 = B_2B_1 \\
    [P(A, \frac{\pi}{2}) \otimes I] \cdot C(A, \overline{B_1B_2}) & \text{if } B_1B_2 \neq B_2B_1
\end{cases}$$
Applying this reduces the number of non-local gates in the circuit by at least one, and so we would like to do it wherever possible. In particular, we will use the greedy strategy of repeatedly finding and combining compatible gates in the circuit until none remain, giving priority to those with $B_1 = B_2$, since this will remove two non-local gates instead of just one. 

Gates can be merged if they satisfy above criteria and also it is possible to reorder the circuit (through a sequence of commutations) so that these gates are placed next to each other. This condition can be detected using the DAG representation $\mathcal{D}(C_N)$ of the circuit: given two vertices $u$ and $v$, there exists a topological order of $\mathcal{D}(C_N)$ where $u$ and $v$ are consecutive if and only if there is no intermediate vertex; that is, no successor of $u$ is a predecessor of $v$, or vice-versa.

In the case that $B_1$ and $B_2$ anti-commute, an extra local Clifford operation $P(A, \frac{\pi}{2})$ will be introduced. Likewise if $B_1B_2 = -I$, then $C(A, \overline{B_1B_2}) = P(A, \pi)$. In either case, we can remove it from $C_N$ by `pushing' it through all subsequent gates using the following two relations:
\begin{align*}
    P(B, -\frac{k\pi}{2}) \cdot P(A, \theta) \cdot P(B, \frac{k\pi}{2}) &= \begin{cases}
        P(A, \theta) & \text{if } AB = BA \\
        P(A, (-1)^\frac{k}{2} \theta) & \text{if } k \text{ even} \\
        P(\overline{AB}, (-1)^\frac{k-1}{2} \theta) & \text{if } k \text{ odd}
    \end{cases} \\
    (P(B, -\frac{k\pi}{2}) \otimes I) \cdot C(A, C) \cdot (P(B, \frac{k\pi}{2}) \otimes I) &= \begin{cases}
        C(A, C) & \text{if } AB = BA \\
        C((-1)^\frac{k}{2}A, C) & \text{if } k \text{ even} \\
        C((-1)^\frac{k-1}{2}\overline{AB}, C) & \text{if } k \text{ odd}
    \end{cases}
\end{align*}
The full procedure is described in Algorithm \ref{alg:gencnotfold}. While it was not necessary for our purposes, for extremely large circuits it may be beneficial to avoid reconstructing the DAG $\mathcal{D}(C_N)$ at each step, and use an online transitive closure algorithm \cite{sankowski2004dynamic} to quickly answer reachability queries.

\begin{algorithm}
    \caption{Generalized CNOT gate folding.}\label{alg:gencnotfold}
    \KwIn{The circuit representation $(\mathcal{D}(C_N), C_L)$.}
    \KwOut{An optimized circuit representation $(\mathcal{D}(C_N), C_L)$.}

    \While{can make progress}{
        \For{each pair of gates $a, b \in \mathcal{D}(C_N)$, with $a = b$ taken first}{
            \lIf{$a$ and $b$ are not of the form $C_{ij}(A, B_2)$, $C_{ij}(A, B_1)$}{continue}
            \lIf{$b$ is reachable from any child of $a$ or vice-versa}{continue}
            Apply Lemma \ref{thm:commdag} to find a topological order $\mathcal{D}(C_N)$ with $a$ and $b$ consecutive\;
            Let $[u_1, \dots, u_n, a, b, v_1, \dots, v_m]$ be that topological order\;
            \uIf{$B_1B_2 \neq B_2B_1$}{
                \lFor{$j = 1$ \KwTo $m$}{$v_j \gets P_i(A, -\frac{\pi}{2}) \cdot v_j \cdot P_i(A, \frac{\pi}{2})$}
                Prepend $P_i(A, \frac{\pi}{2})$ to $C_L$\;
            }\ElseIf{$B_1B_2 = -I$}{
                \lFor{$j = 1$ \KwTo $m$}{$v_j \gets P_i(A, \pi) \cdot v_j \cdot P_i(A, \pi)$}
                Prepend $P_i(A, \pi)$ to $C_L$\;
            }
            \uIf{$B_1B_2 \neq \pm I$}{
                $C_N \gets [u_1, \dots, u_n, C_{ij}(A, \overline{B_1B_2}), v_1, \dots, v_m]$\;
            }\Else{
                $C_N \gets [u_1, \dots, u_n, v_1, \dots, v_m]$\;
            }
            Reconstruct $\mathcal{D}(C_N)$ as in Algorithm \ref{alg:localcliffpush} and break\;
        }
    }
\end{algorithm}

\subsection{Finding Optimizable Subcircuits}
\label{sec:findsubcirc}

Since we have found algorithms for synthesizing distributed CNOT and Clifford circuits, one potential optimization for distributed Clifford+T circuits is to find CNOT or Clifford subcircuits and resynthesize them. Since the size of the output of these resynthesis algorithms does not depend strongly on the size of the input, we would like to find the largest possible subcircuits, and ideally this would be up to any possible reordering of gates. In the circuit representation $(\mathcal{D}(C_N), C_L)$, we phrase this as follows: we want to partition the DAG $\mathcal{D}(C_N)$ into as few bins as possible satisfying a given predicate (for instance, being a CNOT or Clifford subcircuit). For this to be well defined, we may only pick bins for which there is a topological ordering of $\mathcal{D}(C_N)$ where the bins are contiguous; this implies that there is an equivalent reordering of the gates of the circuit where the bins form genuine subcircuits. In the graph-theoretic terms, this is known as a \emph{convex acyclic partition} of a DAG \cite{herrmann2017acyclic}; we give an example in Figure \ref{fig:dagpart-example}. We define it as follows:

\begin{figure}
    \ctikzfig{dag-partition-example}
    \caption{An example of different types of partitions of a DAG, as defined in Definition \ref{def:dagpart}. Above shows the partitions, while below is the contracted graph. \emph{Left:} a partition that is neither convex nor acyclic. \emph{Center:} a partition that is convex, but not acyclic. \emph{Right:} a convex acyclic partition.}\label{fig:dagpart-example}
\end{figure}

\begin{definition}
    Given a DAG $D = (V, E)$, write $u \rightsquigarrow v$ if $v$ is reachable from $u$ in $D$. A subset $U \subseteq V$ is \emph{convex} if its predecessors $P = \{ v \in V \setminus U \mid \exists u \in U, v \rightsquigarrow u \}$ and successors $S = \{ v \in V \setminus U \mid \exists u \in U, u \rightsquigarrow v\}$ are disjoint. A \emph{down-set} is a convex subset such that $P = \emptyset$.
\end{definition}

\begin{definition}
    \label{def:dagpart}
    An \emph{acyclic partition} of a DAG $D = (V, E)$ is a partition of $V$ into disjoint subsets $U_1 \cup U_2 \cup \cdots \cup U_k = V$ such that the contracted graph $G_C = (V_C, E_C)$ with $V_C = \{ U_1, \dots, U_k \}$, $E_C = \{ i \to j \mid \exists a \in U_i, b \in U_j, a \rightsquigarrow b\}$ is acyclic. A \emph{convex acyclic partition} is an acyclic partition where every subset is convex.
\end{definition}

This is indeed sufficient to guarantee the existence of a topological ordering that makes the bins contiguous: find a topological order of each subset $U_i$ and concatenate them according to a topological order of $G_C$. Therefore, we define the following problem; the restriction on $f$ is to guarantee that every vertex can be fit into a bin, if this were not the case then no acceptable partition exists.

\begin{problem}[Bin Packing on a DAG]
    Suppose we have a DAG $D = (V, E)$ and a predicate $f \in 2^V \to \mathbb{B}$ such that $f(\{ v\}) = 1$ for all $v \in V$. Find an acyclic convex partition $U_1 \cup \cdots \cup U_k = V$ which minimizes $k$, subject to $f(U_i) = 1$ for all $i$.
\end{problem}

When $E = \emptyset$ and $f(U) = 1$ iff $\sum_{u \in U} u \leq C$ this reduces to the standard bin packing problem, and so it is NP-hard. On the other hand, if $D$ is a path graph then there is only one topological order of $D$, and the optimal bin boundaries can be decided in polynomial time via dynamic programming; the canonical example of this is the Knuth-Plass line-breaking algorithm used in \LaTeX ~\cite{knuth1981breaking}. Therefore, the difficult part of this problem is picking a good topological order of $D$. For this, we propose a heuristic extension of the Knuth-Plass method using beam search.

The first insight we need is that any prefix of the optimal partition must itself be optimal: suppose that $U_1, U_2, \cdots U_k$ is a partition with the bins listed in topological order, then if any prefix $U_1, \dots, U_i$ were suboptimal, we could replace it with a smaller set of bins $U'_1, \dots, U'_j$ that partition the same vertices. Second, note that the set of vertices in any prefix of a topological order must be a \emph{down-set} of vertices, because otherwise there are is at least one vertex that is ordered before its parent. Hence, let $\mathrm{OptPart}(V)$ be the function that gives the optimal partitioning of a down-set $V$, then we can give it an inductive definition (i.e the Bellman equation)
$$ \mathrm{OptPart}(V) = \min \left\{ \mathrm{OptPart}(V') \cup \{V \setminus V'\} \mid f(V \setminus V') = 1 \text{ and } V' \subseteq V \text{ is a down-set}\right\} $$
where the minimum is taken by comparing the number of bins. If we incrementally compute and memoize the values of $\mathrm{OptPart}(V)$ for all down-sets with increasing sizes, then computing the optimal partition of the whole DAG can be done in time polynomial in the number of down-sets. Unfortunately, for most DAGs, the number of down-sets grows exponentially with the number of vertices (in particular, if $\mathcal{D}(C_N)$ contains a set of $k$ consecutive mutually commuting gates then it has at least $2^k$ down-sets). To make this tractable, we apply beam search with beam width $B$: at each step, we keep only the top $B$ new down-sets with the fewest number of bins seen so far, and when computing larger down-sets we only consider possible $V'$ which were previously memoized.

This strategy is formalized as Algorithm \ref{alg:dag-bin-packing}, and runs in time $O(T_fk^3B^2\log B)$ for a DAG of $k$ vertices, where $T_f$ is the time needed to evaluate $f$ for any set of vertices. A more sophisticated implementation can improve this to $O(T_ik^3B\log B)$,where $T_i$ is the time needed to evaluate $f$ incrementally (i.e compute $f(V \cup \{v\})$ given $f(V)$), but we omit the details here. The application to resynthesizing Clifford subcircuits is straightforward; define the predicate $f$ as 
$$ f(V) = \begin{cases}
    1 & |V| = 1 \\
    1 & |V| > 1 \text{ and } \forall g \in V, ~g \text{ is Clifford} \\
    0 & \text{otherwise}
\end{cases} $$
and then find a convex acyclic partition of $\mathcal{D}(C_N)$. For each bin of at least two vertices (which are necessarily all Clifford), find the appropriate subcircuit, use \textsc{DistRowCol} to resynthesize it, and replace the subcircuit if the resynthesized output is smaller than the original subcircuit.

\begin{algorithm}
    \caption{A beam search method for bin packing on a DAG.}\label{alg:dag-bin-packing}
    \KwIn{A DAG $D = (V, E)$, a predicate $f \in 2^V \to \mathbb{B}$, and the beam width $B > 1$.}
    \KwOut{A convex acyclic partition $U_1 \cup \cdots \cup U_k = V$ such that $f(U_i) = 1$ for all $i$.}

    $M \gets$ an empty dictionary\;
    $S \gets$ an empty priority queue\;
    \For{each source vertex $v \in V$}{
        $M[\{v\}] \gets [\{v\}]$\; 
        Insert $\{v\}$ into $S$ with priority $1$\;
    }
    \For{$i = 1$ \KwTo $|V| - 1$}{
        $S' \gets$ an empty priority queue\;
        \For{each $V_f$ in $S$ in ascending order by priority}{
            \For{each source vertex $v$ of $D[V \setminus V_f]$}{
                $V_n \gets V_f \cup \{v\}$\;
                \For{each $V_p$ in $M$}{
                    $[U_1, \dots, U_k] \gets M[V_p]$\;
                    \lIf{$V_p = V_n$ or $V_p \nsubseteq V_n$ or $|M[V_n]| < k + 1$ or $f(V_n \setminus V_p) = 0$}{continue}
                    \If{$k + 1 <$ the maximum priority in $S'$}{
                        $V_{max}\gets$ the element in $S'$ with maximum priority\;
                        Delete $V_{max}$ from $S'$ and $M$\;
                    }
                    \If{$|S'| < B$}{
                        $M[V_n] \gets [U_1, \dots, U_k, V_n \setminus V_p]$\;
                        Insert $V_n$ into $S'$ with priority $k + 1$\;
                    }         
                }
            }
        }
        $S \gets S'$\;
    }

    \Return{$M[V]$}
\end{algorithm}

\subsection{Detecting CNOT Subcircuits}

Since \textsc{BlockRowCol} can be approximately optimal in some scenarios where \textsc{DistRowCol} is not (i.e parity matrices with low-rank off-diagonal blocks), we would also like to find subcircuits of CNOT gates in order to resynthesize them. In fact, we can be slightly more general: suppose that a subcircuit of generalized CNOT gates $U$ can be written as $U = C U' C^\dagger$ for some local Clifford circuit $C$ and a circuit $U'$ of ZX-type generalized CNOT gates, then we can apply \textsc{BlockRowCol} to $U'$ and conjugate the result by $C$ to resynthesize $U$. The remainder of this subsection is dedicated to showing that, given $U$, we can determine if $C$ exists, and construct it if it does, in polynomial time. This allows us to use the methods of the previous subsection to find large optimizable subcircuits.

First we give the following lemma which characterizes when a set of Pauli strings $S$ can be conjugated by a Clifford unitary to obtain all Z-type or X-type Pauli strings. An alternative way to think about this is to decide whether there exists a Clifford unitary $U$ for which each element of $S$ is either a product of stabilizers of $U$ or a product of destabilizers of $U$. Note that since
$$ (U \otimes I) C(A, B) (U^\dagger \otimes I) = C(UAU^\dagger, B) $$
then if we have $A \in S$ for every gate in a circuit of generalized CNOT gates, conjugating the corresponding partition by $U$ would transform all generalized CNOT gates to be Z-type or X-type on that partition. Hence, we apply this result to each partition of the circuit; if the premises of the lemma can be satisfied in each partition in a way that are compatible with each other, then it is equal to a circuit of ZX-type generalized CNOT gates up to conjugation. The subsequent lemma shows that this satisfiability problem can be decided in polynomial time.

\begin{lemma}
    \label{thm:zxstabsfrompaulis}
    Given two sets of mutually commuting $n$-qubit Pauli strings $U = \{ U_i \}$ and $V = \{ V_i \}$ such that the set of all Pauli strings generated by $U$ and $V$ do not intersect, there exists a Clifford unitary $C$ and sets of indices $A_i$ and $B_i$ such that $C U_i C^\dagger = \prod_{j \in A_i} Z_j$ and $C V_i C^\dagger = \prod_{j \in B_i} X_j$ for all $i$.
\end{lemma}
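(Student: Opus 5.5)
The plan is to work in the binary symplectic picture. Ignoring signs, identify each Pauli string with a vector in $\mathbb{F}_2^{2n}$, equipped with the symplectic form $\omega$ whose value is $1$ exactly when the two strings anticommute. Let $L_U$ and $L_V$ be the subspaces spanned by $U$ and $V$. Mutual commutation within each set makes $L_U$ and $L_V$ isotropic. I read the non-intersection hypothesis as saying that $U$ and $V$ generate only $\pm I$ in common; modulo signs this gives $L_U \cap L_V = 0$.

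It then suffices to find complementary Lagrangian subspaces $L_Z \supseteq L_U$ and $L_X \supseteq L_V$, i.e.\ $L_Z \cap L_X = 0$ with $\dim L_Z = \dim L_X = n$. For such a pair, $\omega$ restricts to a nondegenerate pairing $L_Z \times L_X \to \mathbb{F}_2$. So any basis $z_1,\dots,z_n$ of $L_Z$ has a dual basis $x_1,\dots,x_n$ of $L_X$, and together they form a symplectic basis. The linear map $z_j \mapsto Z_j$, $x_j \mapsto X_j$ is therefore symplectic. Since every symplectic matrix is realised by a Clifford unitary \cite{aaronson2004improved,vandenberg2020simple}, we obtain $C$ with $C U_i C^\dagger = \pm\prod_{j\in A_i} Z_j$ and $C V_i C^\dagger = \pm\prod_{j\in B_i} X_j$.

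To construct the Lagrangian pair, I first choose adapted bases. Let $G_{ij}=\omega(u_i,v_j)$ be the pairing matrix between bases of $L_U$ and $L_V$, and let $r$ be its rank. Row and column operations within each space give bases $u_1,\dots,u_a$ and $v_1,\dots,v_b$ with $\omega(u_i,v_j)=\delta_{ij}$ for $i,j\le r$, and $u_{i}$ ($i>r$) and $v_j$ ($j>r$) orthogonal to all of $L_U+L_V$. The span $H$ of the $r$ dual pairs is a symplectic subspace, so $H^\perp$ is symplectic of dimension $2n-2r$. The remaining vectors $u_{r+1},\dots,u_a,v_{r+1},\dots,v_b$ lie in $H^\perp$, span an isotropic subspace, and are linearly independent, because $L_U\cap L_V=0$.

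By the standard extension of an isotropic basis to a symplectic basis inside $H^\perp$, there are partners $u'_i$ and $v'_j$ in $H^\perp$. These satisfy $\omega(u_i,u'_k)=\delta_{ik}$ and $\omega(v_j,v'_l)=\delta_{jl}$, and all other pairings among the $u,v,u',v'$ vanish. Let $W$ be the symplectic complement in $H^\perp$ of everything constructed so far, and split it as $W=E\oplus F$ into complementary Lagrangians. Then set
$$L_Z=\mathrm{span}(u_1,\dots,u_a,\,v'_{r+1},\dots,v'_b)\oplus E,\qquad L_X=\mathrm{span}(v_1,\dots,v_b,\,u'_{r+1},\dots,u'_a)\oplus F.$$

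Checking that these are isotropic, complementary and of dimension $n$ is the step I expect to be the main obstacle, mostly bookkeeping. Isotropy of $L_Z$ holds because the $u_i$ commute with each other, the $v'_j$ lie in $H^\perp$ and so are orthogonal to $u_1,\dots,u_r$, and the $v'_j$ pair trivially with the radical vectors $u_i$, $i>r$. Dimensions add up to $r+(a-r)+(b-r)+(n-r-(a-r)-(b-r))=n$ on each side, and the symplectic basis structure gives $L_Z\cap L_X=0$.

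Finally, I would fix the signs. Choose independent generators of $\langle U\rangle$ and $\langle V\rangle$. Conjugating by a Pauli $X^{\vec a}Z^{\vec b}$ flips the sign of $\prod_{j\in A}Z_j$ by $\vec a\cdot\mathbf{1}_A$ and the sign of $\prod_{j\in B}X_j$ by $\vec b\cdot\mathbf{1}_B$. Because the generators are independent, these are solvable linear systems over $\mathbb{F}_2$, so all generator signs can be made $+$. The signs of the remaining elements are then forced, and they are consistent provided $-I\notin\langle U\rangle,\langle V\rangle$, which I take as part of the hypothesis.
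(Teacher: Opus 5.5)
Your proposal is correct and follows essentially the paper's route: the same row/column reduction of the pairing matrix to $I_r\oplus 0$, with the $r$ dual pairs sent to $(Z_i,X_i)$ and the leftover radical vectors sent to single-qubit $Z$s and $X$s on disjoint qubits. Your Lagrangian completion $L_Z\supseteq L_U$, $L_X\supseteq L_V$ simply packages the paper's two Cliffords $C$ and $C'$ into one symplectic basis. Your final sign-fixing step is more careful than the paper, and so is your observation that $-I\notin\langle U\rangle,\langle V\rangle$ must be assumed for the literal, sign-free statement; the paper leaves this implicit in its ``without loss of generality linearly independent'' step.
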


\begin{proof}
    First, assume without loss of generality that $U$ and $V$ are each linearly independent sets. Let $M$ be a $\mathbb{F}_2$-matrix of $|U|$ rows and $|V|$ columns such that $M_{ij} = 1$ if and only if $U_i V_i \neq V_i U_i$. Suppose were were to redefine $U$ by taking $U_i \gets U_i U_j$, then this would modify $M$ by adding the $j$th row to the $i$th row. Similarly redefining $V$ by $V_i \gets V_iV_j$ corresponds to adding the $j$th column of $M$ to the $i$th column. Note that both of these operations do not change the set of strings generated by $U$ and $V$. Therefore, we do both row and column reduction to $M$, updating $U$ and $V$, respectively, whenever row and column operations are performed. $M$ will now be zero everywhere except for the top-left block of size $k \geq 0$ which will be the identity matrix. Let $U'$ and $V'$ contain the first $k$ elements of $U$ and $V$, then we have $U'_i U'_j = U'_j U'_i$, $V'_i V'_j = V'_j V'_i$, and $U'_i V'_j = V'_j U'_i$ for all $i \neq j$, and $U'_i V'_i \neq V'_i U'_i$ for all $i$. Therefore, proceeding as in the proof of Lemma \ref{thm:blockfindanticommute}, we can construct a Clifford transformation $C$ such that $C U'_i C^\dagger = Z_i$ and $C V'_i C^\dagger = X_i$ for all $i \leq k$. 
    
    The elements in $U \setminus U'$ and $V \setminus V'$ commute with every Pauli string in $U$ and $V$, because the corresponding rows and columns of $M$ are all zero. Therefore, for every $P \in (U \setminus U') \cup (V \setminus V')$ we must have $C P C^\dagger$ be the identity on the first $k$ qubits, since they commute with $Z_i$ and $X_i$ for all $i \leq k$. Moreover, because $(U \setminus U') \cup (V \setminus V')$ is mutually commuting, there exists a change of basis $C'$ which co-diagonalizes the whole set \cite{murairi2023reducing}, and maps $U'$ and $V'$ to themselves. Since by assumption $U$ and $V$ are each linearly independent, $C'$ can be chosen so that each string in $U \setminus U'$ and $V \setminus V'$ has support on a single qubit, and since the sets of strings generated by $U$ and $V$ are disjoint, $C'$ can be chosen to map $U \setminus U'$ and $V \setminus V'$ onto disjoint sets of qubits. In particular, pick $C'$ so that $C'C$ maps $U \setminus U'$ to $\{ Z_i \mid k < i \leq |U|\}$ and maps $V \setminus V'$ to $\{ X_i \mid |U| < i < |U| + |V| - k\}$. Now, since the map $C'C$ has transformed all of the redefined elements of $U$ and $V$ to be Pauli strings of the form $Z_i$ and $X_i$, respectively, it must transform the original elements of $U$ and $V$, which are linear combinations of these, so that each $U_i$ is mapped to $\prod_{j \in A_i} Z_j$ and each $V_i$ is mapped to $\prod_{j \in B_i} X_j$ for some sets of indices $A_i$ and $B_i$. 
\end{proof}

\begin{definition}
    The vectorization of an $n$-qubit Pauli string $P$ is a $2n$-dimensional binary vector $\vec{v}(P)$ with $\vec{v}(P)_i = 1$ iff $P_i \in \{Z, Y\}$ and $\vec{v}(P)_{n + i} = 1$ iff $P_i \in \{X, Y\}$ for all $1 \leq i \leq n$.
\end{definition}

\begin{lemma}
    For any circuit of generalized CNOT gates, it can be determined in polynomial time whether there is a local Clifford transformation $C$ under which every generalized CNOT gate becomes ZX-type, and if $C$ exists, it can be constructed explicitly.
\end{lemma}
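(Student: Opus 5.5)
Since $C$ is local, it factors as $C = \bigotimes_p C_p$ with one Clifford $C_p$ per partition, and each generalized CNOT gate $C_{ab}(A, B)$ is mapped to $C_{ab}(C_a A C_a^\dagger, C_b B C_b^\dagger)$. The problem therefore decouples across partitions. For each partition $p$, collect the multiset $\mathcal{S}_p$ of Pauli strings that gates in the circuit place on $p$, noting for each one whether it is the control side (it must become $Z$-type) or the target side (it must become $X$-type). If a gate carries a string in the opposite orientation, we first swap roles using $C(A,B) = \mathrm{SWAP}\cdot C(B,A) \cdot \mathrm{SWAP}$ at the partition level, i.e. by relabelling which side is control. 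This orientation freedom is the one subtle combinatorial choice, and I return to it below. Once orientations are fixed, the strings on partition $p$ split into a set $U_p$ that must become $Z$-type and a set $V_p$ that must become $X$-type. Signs are irrelevant, since $C(-A,B)$ differs from $C(A,B)$ only by local Paulis, which can be absorbed.

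For a fixed orientation, the question on partition $p$ is decided by linear algebra over $\mathbb{F}_2$, using the vectorization $\vec v(\cdot)$. The necessary conditions are:
\begin{enumerate}
\item each of $U_p$ and $V_p$ is mutually commuting, checked via the symplectic form on $\vec v$;
\item $\langle U_p\rangle \cap \langle V_p\rangle$ is trivial up to sign, checked by computing the intersection of the two row spaces of vectorizations with Gaussian elimination.
\end{enumerate}
Lemma~\ref{thm:zxstabsfrompaulis} shows these conditions are also sufficient, and its proof is constructive. It uses row and column reduction of the commutation matrix $M$, then the inductive construction of Lemma~\ref{thm:blockfindanticommute}, then a co-diagonalizing change of basis. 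Each of these steps runs in polynomial time via the standard tableau algorithms of \cite{vandenberg2020simple}, so it produces $C_p$ explicitly. Necessity is immediate: $Z$-type strings pairwise commute, as do $X$-type strings, and a nontrivial product of $Z$'s is never a product of $X$'s up to sign. The global answer is ``yes'' exactly when the conditions hold on every partition, and then $C = \bigotimes_p C_p$.

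The main obstacle is the orientation choice. A gate $C_{ab}(A,B)$ is ZX-type after conjugation if $A \mapsto Z$-type and $B \mapsto X$-type, or, viewing it as $C_{ba}(B,A)$, if $B \mapsto Z$-type and $A \mapsto X$-type. Choosing orientations naively could be exponential. I would handle this by encoding each gate's orientation as a Boolean variable. The first condition is a pairwise condition: two anticommuting strings on the same partition must receive different types, which gives XOR constraints. The second condition is the delicate one, because it is not pairwise. I would first try to show it is implied once the first holds together with the tableau-like structure, or else reduce it to linear constraints over $\mathbb{F}_2$ on the orientation variables. The fallback is to note that in the paper's use case the circuit comes from ZX-type gates conjugated by an unknown $C$. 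In that setting orientations can be fixed canonically as the original control and target, and the lemma is applied with that fixed assignment, which is clearly polynomial time.
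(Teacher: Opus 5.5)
There is a genuine gap, and it is exactly where you flag it. Your per-partition analysis for a \emph{fixed} orientation is sound, but you never give a polynomial-time way to choose the orientations. That choice is the whole content of the lemma: it asserts that \emph{existence} of a suitable $C$ is decidable, i.e.\ over all exponentially many orientations. The problem also does not decouple across partitions, since each gate's orientation is shared by its two partitions.

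Your first suggested route fails, because the trivial-intersection condition is not implied by the pairwise XOR constraints. Take $C_{pq}(Z_1,Z_1)$, $C_{pq}(Z_2,Z_1)$ and $C_{pq}(Z_1Z_2,X_1)$. On $q$ the strings $Z_1$ and $X_1$ anticommute, so the XOR constraints force $Z_1$ and $Z_2$ on $p$ to receive the same type and $Z_1Z_2$ the opposite type. These constraints are satisfiable, since everything on $p$ commutes. Yet no local Clifford exists: $Z_1Z_2$ is the product of $Z_1$ and $Z_2$, so it must inherit their type.

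Your fallback of fixing orientations canonically gives only a one-sided sufficient test, not the decision procedure the lemma claims. Moreover, $C(A,B)$ is symmetric in its two arguments, so the subcircuits this lemma targets have no intrinsic ``original'' control and target.

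The missing idea is that condition (2) is itself linear in the orientation variables. Give each half-gate string $P_i$ a variable $x_i\in\mathbb{F}_2$, with $x_i=1$ meaning $P_i\in U_{p_i}$. Let $M_p$ have columns $\vec v(P_i)$ for the strings on partition $p$, and take a basis of its null space. A nontrivial element of $\langle U_p\rangle\cap\langle V_p\rangle$ is precisely a dependency $\vec n$ of these columns whose $U_p$-part $\sum_i x_i n_i\vec v(P_i)$ is nonzero; that part then equals the $V_p$-part. This expression is linear in $\vec n$, so it suffices to impose $\sum_i x_i n_i\vec v(P_i)=0$ for each basis vector $\vec n$. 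For fixed $\vec n$, the expression is linear in $x$.

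Add $x_i\oplus x_{h(i)}=1$ for the two halves of every gate, and $x_i\oplus x_j=1$ for anticommuting strings on a common partition. The result is a polynomial-size linear system over $\mathbb{F}_2$, solvable by Gaussian elimination. A solution fixes every $U_p,V_p$, after which Lemma~\ref{thm:zxstabsfrompaulis} constructs $C=\bigotimes_p C_p$ as in your outline. This is the paper's argument.
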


\begin{proof}
    For each partition $p$, we define a pair of sets $U_p$ and $V_p$, to which we will assign all the Pauli strings corresponding to the portion of each generalized CNOT gate that acts on partition $p$. From Lemma \ref{thm:zxstabsfrompaulis}, $C$ exists if and only if we can find such an assignment for which $U_p$ and $V_p$ are each mutually commuting and do not intersect for all $p$. Consider a set of binary variables $x_i$, where each $i$ corresponds to a Pauli string $P_i$ associated to one half of a generalized CNOT gate. Define $p_i$ as the partition to which $P_i$ belongs, and $h(i)$ as the label of the Pauli string associated to the other half of the generalized CNOT gate containing $P_i$. 
    
    We will define constraints on these variables, which, if solved, provide an assignment of each $i$ to $U_{p_i}$ (when $x_i = 1$) or $V_{p_i}$ (when $x_i = 0$). The constraints are as follows: each generalized CNOT gate must have one Z-type Pauli string and one X-type Pauli string, which implies $x_i \neq x_{h(i)}$ for all $i$. Each set $U_p$ and $V_p$ must mutually commute. Hence, for all $i$ and $j$ with $p_i = p_j$, we have $x_i \neq x_j$ if $P_iP_j \neq P_jP_i$. The last and most complicated condition is that the set of strings generated by $U_p$ and $V_p$ cannot intersect, for all $p$. Let $S_p$ be the set of all indices $i$ with $p_i = p$, and define a matrix $M_p$ with columns given by $\vec{v}(P_i)$ for all $i \in S_p$. Then $U_p$ and $V_p$ do not intersect if and only if no linear dependence in $M_p$ has a non-zero contribution from both columns corresponding to elements of $U_p$ and columns corresponding to elements of $V_p$. This can be checked in the following way: let $N_p$ be a basis for the null space of $M_p$, then for every $\vec{n} \in N_p$, add the constraint that $\bigoplus_{i \in S_i} x_i n_i P_i = 0$. This constraint is saying that every linear dependence $\vec{n}$ has a net-zero contribution from the columns labeled with $x_i = 1$, which are exactly those assigned to $U_p$. Note that since these constraints are linear, it is sufficient to consider only basis vectors of the null-space.

    Since each of the constraints added are linear over $\mathbb{F}_2$ (as $x_i \neq x_j$ is equivalent to $x_i \oplus x_j = 1$), and the number of constraints is polynomial in the total number of qubits, the whole system can be solved by Gaussian elimination in polynomial time. Given an assignment $\{ x_i \}$, constructing the local transformation $C$ can be done by applying Lemma \ref{thm:zxstabsfrompaulis} to each pair of sets $U_p$ and $V_p$ to determine a transformation $C_p$ on each partition $p$, and then taking $C = \bigotimes_p C_p$.
\end{proof}
Note that the method given in this proof will be very slow as the number of qubits and gates increases, because the number of rows in the constraint matrix increases quadratically. Therefore, in practice we optimize it in two ways: first, we perform a row reduction on each matrix $M_p$ individually, and remove all of the zero rows. Second, we solve all of the inequality constraints ahead of time using a union-find method \cite{tarjan1984worst}, and eliminate any columns of the constraint matrix corresponding to dependent variables.  

\section{Circuit Resynthesis}
\label{sec:resynth}

Here we briefly describe a strategy to convert the $(\mathcal{D}(C_N), C_L)$ circuit representation back into a standard quantum circuit, while respecting the qubit connectivity of the underlying architecture. Although our main goal is to minimize the number of non-local gates, we would also like to avoid generating many local two-qubit gates, where possible. We propose a simple greedy strategy based on Steiner trees, similar to \cite{nash2020quantum}. Note that good algorithms have been developed for architecture-aware phase-polynomial synthesis \cite{vandaele2022phase} and it is likely that these could be adapted to our use-case; we leave this for future work.

We incrementally construct a topological order of $\mathcal{D}(C_N)$: at each step, find a source node in $\mathcal{D}(C_N)$ that corresponds to a gate $g$ with minimal synthesis cost. Then synthesize it, delete it from $\mathcal{D}(C_N)$, and update the gates in $C_N$ and $C_L$ with the local Clifford transformations that were applied during synthesis. After all non-local and non-Clifford gates have been processed, $C_L$ can be synthesized using any architecture-aware Clifford synthesis algorithm, for example \cite{winderl2023architecture}. This is formalized as Algorithm \ref{alg:resynth}. For each synthesis operation of a gate $g$, we proceed as follows, where the cost is counted as the number of two-qubit gates:
\begin{enumerate}
    \item First, we apply single-qubit Clifford transformations: if $g = P_i(A, \theta)$ for some $A$, then we will apply single-qubit gates so that $A$ consists only of $Z$s. For each qubit $k$ where $A_k \notin \{I, Z\}$, apply either $H$ or $SH$ to map it to $Z$. If $g = C_{ij}(A, B)$, we do the same procedure for both $A$ and $B$.
    \item Next we will pick one qubit for each partition affected by $g$, that we call the pivot, where the non-Clifford or non-local gates will be performed. For $g = P_i(A, \theta)$, we pick the pivot in $i$ arbitrarily as any qubit $k$ for which $A_k \neq I$. For $g = C_{ij}(A, B)$, we also need to take into account that not all qubits in a partition may be able to perform non-local gates. Let $Q_{i \to j}$ be a list of qubits of partition $i$ that are able to perform non-local gates targeting partition $j$, and $G_i$ be the qubit connectivity graph of partition $i$. Then we pick the pivot in partition $i$ as the $k \in Q_{i \to j}$ which minimizes the size of the Steiner tree of $\{ l \mid A_l \neq I  \} \cup \{k\}$ in $G_i$. Similarly, the pivot in partition $j$ is $k \in Q_{j \to i}$ which minimizes the size of the Steiner tree of $\{ l \mid B_l \neq I  \} \cup \{k\}$ in $G_j$.
    \item Now we can eliminate all non-pivot qubits. For each partition $i$ affected by $g$, let $A$ be the corresponding Pauli string and $k$ be the pivot. Find the Steiner tree $T = (V_T, E_T)$ of $\{ l \mid A_l \neq I  \} \cup \{k\}$ in $G_i$ rooted at $k$, and orient the edges away from $k$. First, while there is $a \in V_T$ so that $A_a = I$, we find any neighbor $b$ of $a$ in $T$ such that $A_b = Z$ and perform $CNOT_{ab}$. Then, we traverse the edges $a \to b$ of the tree from the leaves to the root, and apply $CNOT_{ab}$.
    \item Finally, we can place the actual non-local or non-Clifford gate: if $g = P_i(A, \theta)$, then we add $R_Z(\theta)$ to the pivot in $i$. If $g = C_{ij}(A, B)$, then we apply a CZ gate between the pivots in $i$ and $j$.
\end{enumerate}
In practice, we slightly tweak this construction to try and minimize also the number of single-qubit Clifford gates, which has the added benefit of always synthesizing ZX-type generalized CNOT gates as a subcircuit of CNOT gates; we omit the details here.

\begin{algorithm}
    \caption{A procedure to convert $(\mathcal{D}(C_N), C_L)$ to a Clifford+R\textsubscript{Z} circuit.}\label{alg:resynth}
    \KwIn{The representation $(\mathcal{D}(C_N), C_L)$, the qubit connectivity graphs of each partition $\{ G_i \}$, and the list of non-local capable qubits $\{ Q_{i \to j} \}$ for all partitions $i$ and $j$.}
    \KwOut{A Clifford+R\textsubscript{Z} circuit.}

    $C \gets$ an empty circuit\;
    \While{$\mathcal{D}(C_N)$ is non-empty}{
        $S \gets \{ g \in \mathcal{D}(C_N) \mid g \text{ has no incoming edges} \}$\;
        $g \gets \arg\min \{ \text{size of } C_g \mid \forall g \in S, ~(C_g, g_N) \gets \mathrm{SynthGate}(g) \}$\;
        Delete $g$ from $\mathcal{D}(C_N)$\;
        $C_g, g_N \gets \mathrm{SynthGate}(g)$\;
        Append $C_g$ and $g_N$ to $C$\;
        Prepend $C_g^\dagger$ to $C_L$\;
        \lFor{$U \in \mathcal{D}(C_N)$}{$U \gets C_gUC_g^\dagger$}
    }
    $C'_L \gets$ synthesize $C_L$ using the method of \cite{winderl2023architecture} with connectivity graph $\bigcup_i G_i$\;
    \Return{$C + C'_L$}
    \BlankLine
    \SetKwProg{Fn}{procedure}{ do}{end}
    \Fn{$\mathrm{SynthGate}(g)$}{
        \uIf{$g = C_{ij}(A, B)$}{
            $C_A, p \gets \mathrm{PickPivot}(A, G_i, Q_{i \to j})$\;
            $C_B, q \gets \mathrm{PickPivot}(B, G_j, Q_{j \to i})$\;
            \Return{$C_A + C_B$, $CZ_{pq}$}
        }\ElseIf{$g = P_i(A, \theta)$}{
            $C_A, p \gets \mathrm{PickPivot}(A, G_i, \{ p \mid A_p \neq I \})$\;
            \Return{$C_A$, $R_Z(p, \theta)$}
        }
    }
    \BlankLine
    \Fn{$\mathrm{PickPivot}(A, G, Q)$}{
        $S \gets \{ i \mid A_i \neq I \}$\;
        \uIf{$S \cap Q \neq \emptyset$}{
            $p \gets$ any element of $S \cap Q$\;
        }\Else{
            $p \gets \arg\min \{ |V_T| \mid \forall p \in Q, (V_T, E_T) \gets \text{the Steiner tree of } S \cup \{ p \} \text{ in } G \}$\;
        }
        $V_T, E_T \gets$ the Steiner tree of $S \cup \{p\}$ in $G$, rooted at $p$\;
        $C \gets$ an empty circuit\;
        \For{$i \in S$}{
            \lIf{$A_i = X$}{append $H_i$ to $C$}
            \lElseIf{$A_i = Y$}{append $S_i$, $H_i$ to $C$}
        }
        \For{each edge $p_j \to j \in E_T$ in post-order from $p$}{
            \lIf{$A_{p_j} = I$}{append $CNOT_{p_jj}$ to $C$}
        }
        \For{each edge $p_j \to j \in E_T$ in post-order from $p$}{
            Append $CNOT_{jp_j}$ to $C$\;
        }
        \Return{$C$, $p$}
    }

\end{algorithm}

\section{Benchmarks}
\label{sec:benchmarks}

We will now present some preliminary benchmarking results for the algorithms developed in this work. We look at three different scenarios: random CNOT circuits, random Clifford circuits, and random Clifford+T circuits. These are generated by picking gates uniformly at random: for Clifford and Clifford+T circuits, we set the probability of generating a single-qubit gate as $40\%$, evenly divided amongst $H$ and $S$ or $H$, $S$, and $T$, respectively. If a single-qubit gate is not generated, a uniformly random CNOT gate is generated instead. We consider two different families of architectures:
\begin{enumerate}
    \item $n = 36$ qubits grouped into $k = 9$ blocks of four qubits each. We apply this to three different connectivities: all-to-all, a $3\times 3$ 2D grid, and linear nearest neighbor.
    \item $k = 2$ blocks of between $4$ and $16$ qubits, leading to $n \in [8, 32]$ (there is only one possible connectivity to consider for $k = 2$).
\end{enumerate}
For each of these, we compare three different methods:
\begin{enumerate}
    \item The algorithms developed in this work. In particular, we use \textsc{BlockRowCol} for CNOT circuits, \textsc{DistRowCol} for Clifford circuits, and Algorithm \ref{alg:distpauliexp} for Clifford+T circuits. In each case, we also apply both standard phase-folding \cite[Section 7.4.3]{kissingerwetering2024book} and then generalized CNOT gate folding to the resulting circuit using Algorithm \ref{alg:gencnotfold}. We verify that the synthesis is correct by comparing stabilizer tableaux, and using the \texttt{full-reduce} method of QuiZX \cite{quizx} for Clifford+T circuits.
    \item The implementation of the methods described in \cite{wu2023entanglement} and \cite{andresmartinez2024distributing}, as available in the \texttt{pytket-dqc} Python package. These methods work by grouping non-local gates together in a circuit to minimize the number of qubit teleportations between partitions that are needed to implement them. This is achieved by mapping the problem to a combination of hypergraph partitioning and vertex covers. In particular, we used the \textsc{CoverEmbeddingSteinerDetached} strategy, since \cite{andresmartinez2024distributing} suggests that it is usually the best performing of the strategies suggested there. Note that \texttt{pytket-dqc} reorders which qubits appear in which partitions in order to minimize the non-local gate count. Our methods do not do this, but since this can only bias the results against our methods, we consider this acceptable.
    \item Since \texttt{pytket-dqc} only optimizes the placement and teleportation of qubits, and does not perform any optimization or resynthesis of the underlying circuit, we also consider combining \texttt{pytket-dqc} with a standard resynthesis method. For CNOT circuits we use RowCol \cite{wu2023optimization}, for Clifford circuits we use the method of van den Berg \cite{vandenberg2020simple}, and for Clifford+T circuits we run Algorithm \ref{alg:distpauliexp} with one qubit per partition. After the resynthesis, we apply \textsc{CoverEmbeddingSteinerDetached} as above.
\end{enumerate}
The results are presented in Figures \ref{fig:benchcnot}, \ref{fig:benchclifford}, and \ref{fig:benchclifft}. Regrettably, some data points are quite noisy due to computational constraints; the data presented here already represent approximately 1000 core-hours of computation. Specific conclusions are discussed further in the figure captions, and we also draw some general conclusions here:
\begin{itemize}
    \item For shallow circuits, \texttt{pytket-dqc} without resynthesis is usually the best performer, followed by \texttt{pytket-dqc} with resynthesis. This is unsurprising: generally speaking, circuit resynthesis is unhelpful for small circuits, and qubit reordering becomes more helpful in this situation. This is especially relevant for Clifford+T circuits, where resynthesis seems to be unhelpful even for relatively deep circuits.
    \item For deeper circuits, our methods usually outperform or are comparable to \texttt{pytket-dqc} with resynthesis. This is particularly apparent for larger numbers of qubits and less well-connected architectures like linear nearest neighbor; \textsc{DistRowCol} is especially insensitive to connectivity. The exception to this is for small numbers of qubits (e.g $n = 8$), for which our methods do relatively badly (especially Algorithm \ref{alg:distpauliexp}); this may be due to the lack of qubit reordering.
\end{itemize}
Lastly, we note that our algorithms are \emph{much} faster than \texttt{pytket-dqc} for the same size of circuits (up to four orders of magnitude). While some of this is likely due to the difference in implementation language (Rust vs Python), this probably does not account for all of it (indeed, one bottleneck is the KaHyPar library used for hypergraph partitioning \cite{schlag2022highquality}, which is written in C++). This is not intended to discredit \texttt{pytket-dqc}, since it is performing a much more sophisticated method; however, one benefit of simple tableau-based algorithms like we have presented here is that they can scale very large. In Figure \ref{fig:benchspeed}, we show a comparison of total runtime for distributed CNOT circuit synthesis with different architecture configurations, including scaling of \textsc{BlockRowCol} up to $n = 512$ qubits, $k = 128$ partitions, and circuits containing more than $2\cdot 10^5$ gates. Our implementation is available on GitHub \cite{github}. In the future, we would like to give a more detailed set of benchmarks on more realistic circuit structures and architectures, and compare to a wider range of algorithms.

\begin{figure}
    \includegraphics[width=\textwidth]{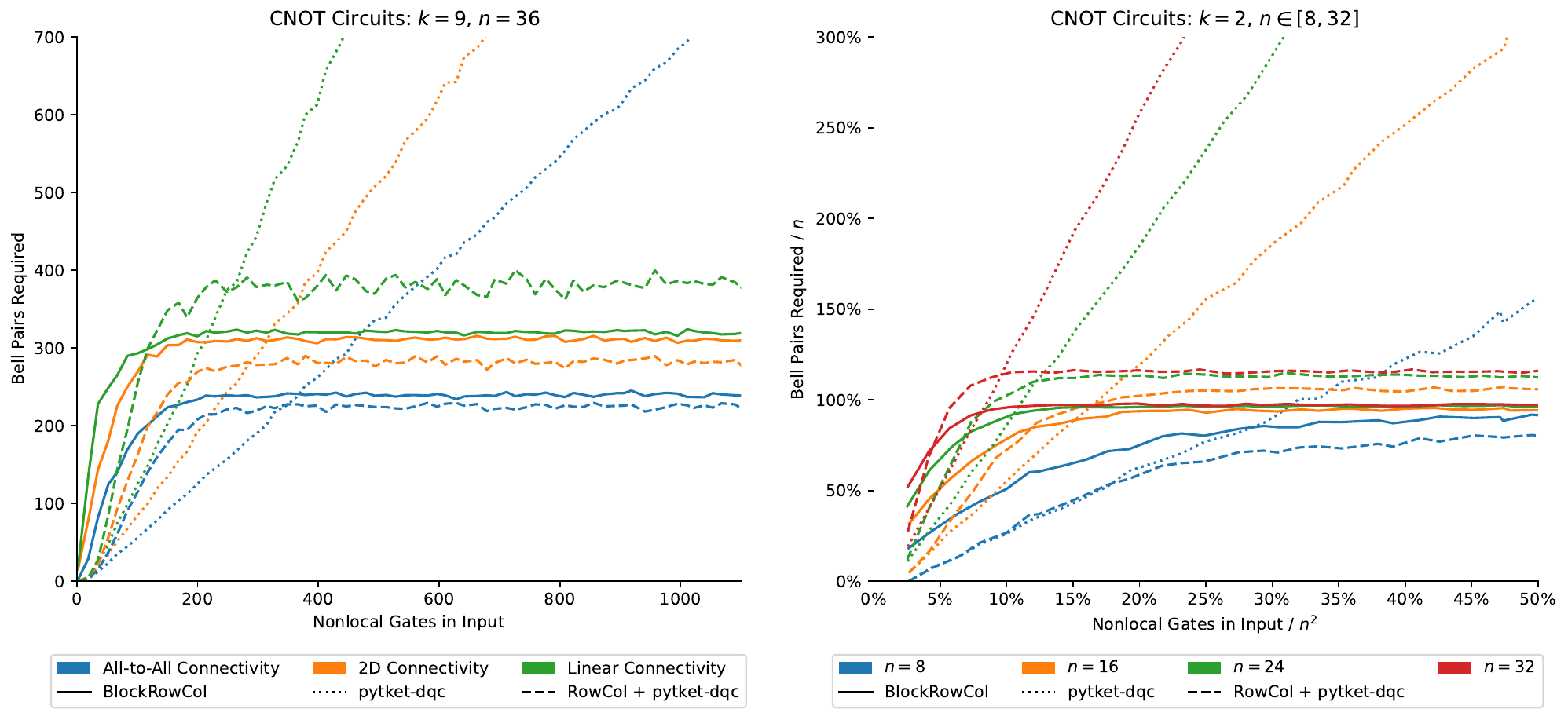}
    \caption{A comparison of distributed CNOT circuit synthesis methods. \emph{Left:} The case of $k = 9$ partitions and $n = 36$ qubits, organized into three architectures: all-to-all, a 3x3 2D grid, and a line. For shallow circuits, \texttt{pytket-dqc} is the clear winner, while for deeper circuits, we see that there is a significant difference between different connectivities; \textsc{BlockRowCol} and \texttt{pytket-dqc} with \textsc{RowCol} are comparable for all-to-all connected, while \textsc{BlockRowCol} is better for linear connectivity and worse for 2D connectivity. \emph{Right:} The case of $k = 2$ partitions with a varying number of qubits $n$. The number of non-local input gates is normalized against $n^2$ for simpler comparability, while the output is normalized against $n$ (which is how we expect it to scale, given that $k$ is fixed). We see that \textsc{BlockRowCol} becomes relatively better as the depth and number of qubits increases. In the case of $n = 8$ \textsc{BlockRowCol} consistently looses. This may be because it does not support qubit reordering, which may make a larger difference for smaller qubit counts.}\label{fig:benchcnot}
\end{figure}

\begin{figure}
    \includegraphics[width=\textwidth]{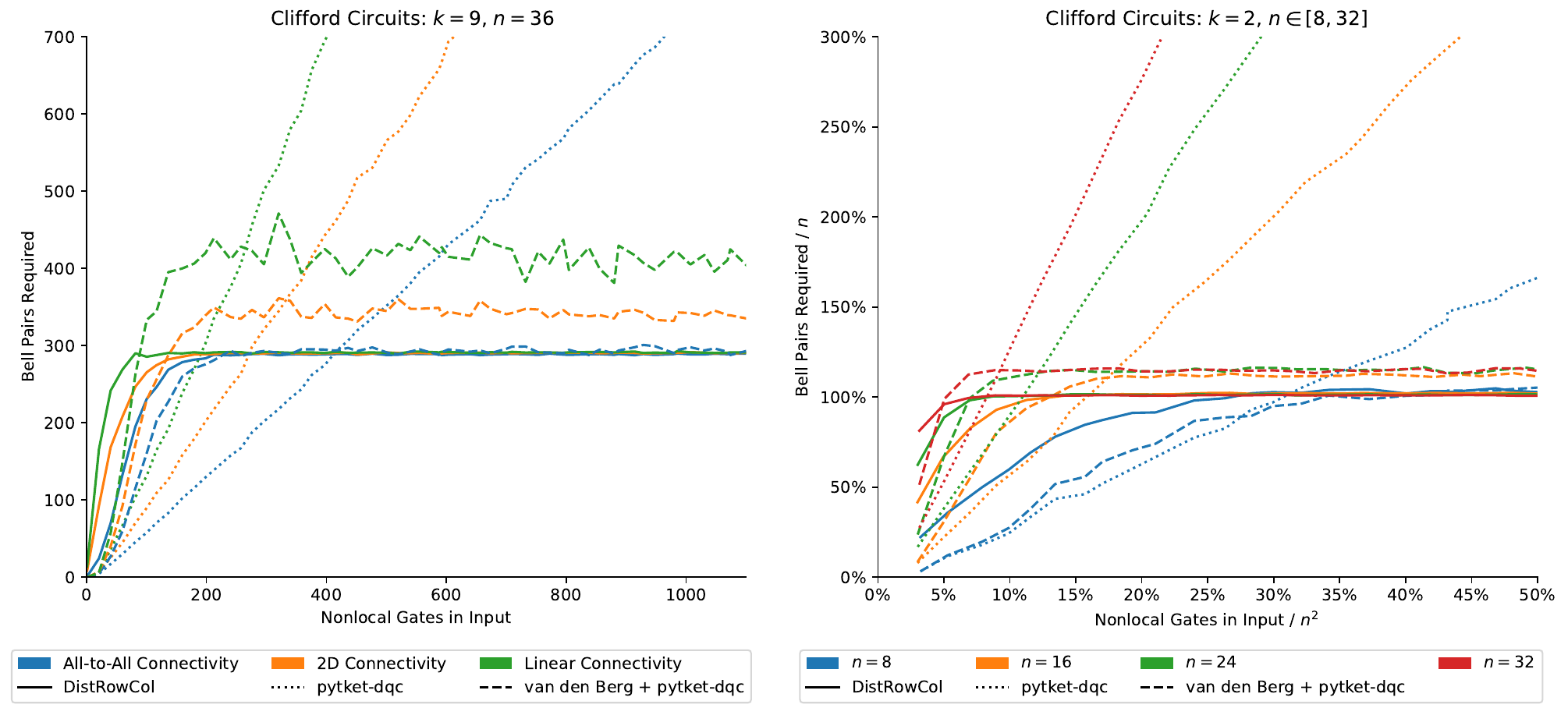}
    \caption{A comparison of distributed Clifford circuit synthesis methods. \emph{Left:} The case of $k = 9$ partitions and $n = 36$ qubits, organized into three architectures: all-to-all, a 3x3 2D grid, and a line. For shallow circuits, \texttt{pytket-dqc} is the clear winner, while for deeper circuits, \textsc{DistRowCol} is superior. \textsc{DistRowCol} exhibits essentially no variation across different connectivities, while \texttt{pytket-dqc} does. \emph{Right:} The case of $k = 2$ partitions with a varying number of qubits $n$. The number of non-local input gates is normalized against $n^2$ for simpler comparability, while the output is normalized against $n$ (which is how we expect it to scale, given that $k$ is fixed). We see that \textsc{DistRowCol} has a small but persistent advantage for deeper circuits with $n > 8$. In the case of $n = 8$, it is worse than \texttt{pytket-dqc} with resynthesis except for very deep circuits. This may be because it does not support qubit reordering.}\label{fig:benchclifford}
\end{figure}

\begin{figure}
    \includegraphics[width=\textwidth]{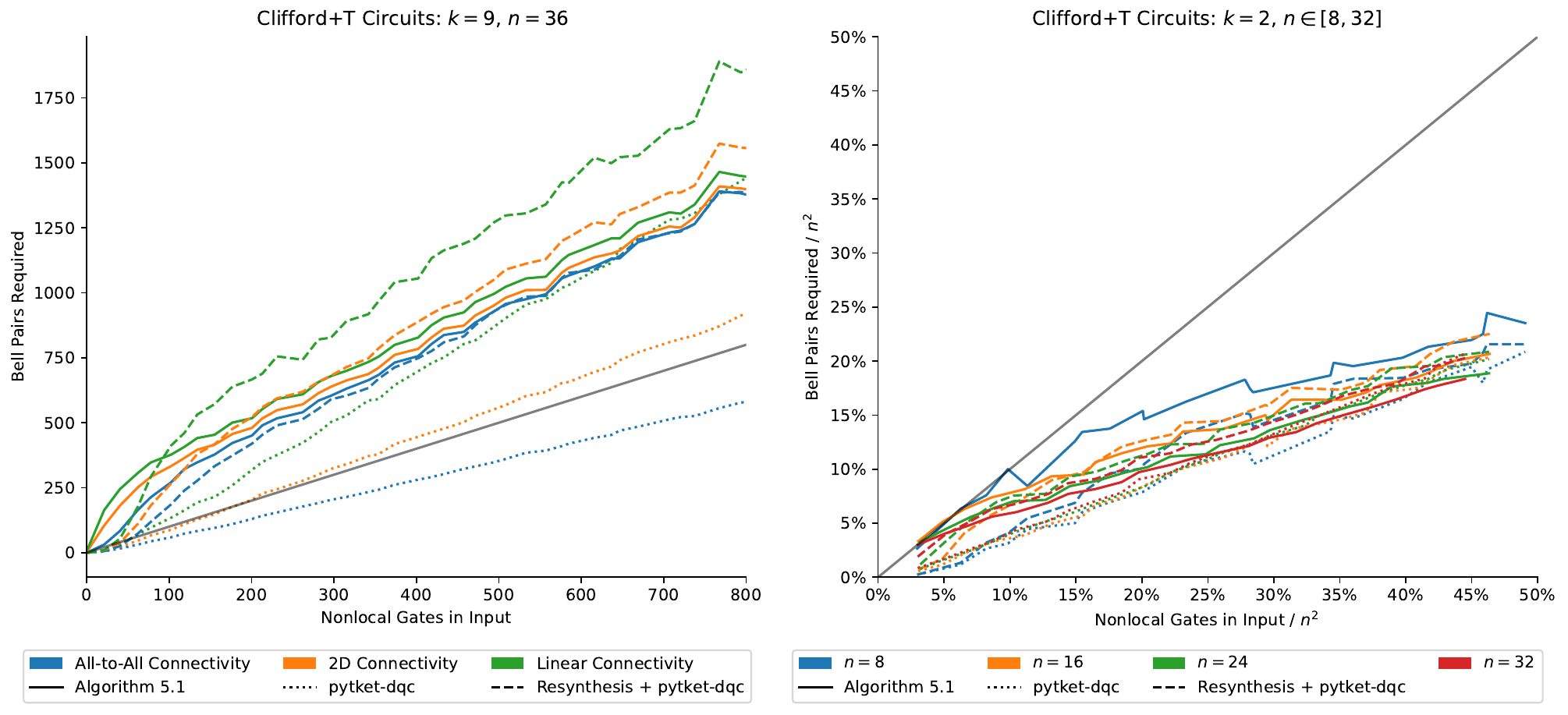}
    \caption{A comparison of distributed Clifford+T circuit synthesis methods. \emph{Left:} The case of $k = 9$ partitions and $n = 36$ qubits, organized into three architectures: all-to-all, a 3x3 2D grid, and a line. \texttt{pytket-dqc} is the clear winner here. Algorithm \ref{alg:distpauliexp} outperforms \texttt{pytket-dqc} with resynthesis, but it is still always worse than break-even. This suggests that resynthesis is generally a bad idea for Clifford+T circuits with more partitions. \emph{Right:} The case of $k = 2$ partitions with a varying number of qubits $n$. The number of non-local input and output gates are both normalized by the same $n^2$ factor, since we expect them to scale similarly, to enable easier comparability. We see that all methods are relatively similar here, especially as the number of qubits $n > 8$ increases. In the case of $n = 8$, Algorithm \ref{alg:distpauliexp} is notably bad. This may be because it does not support qubit reordering. The break-even line is marked in gray.}\label{fig:benchclifft}
\end{figure}

\begin{figure}
    \includegraphics[width=\textwidth]{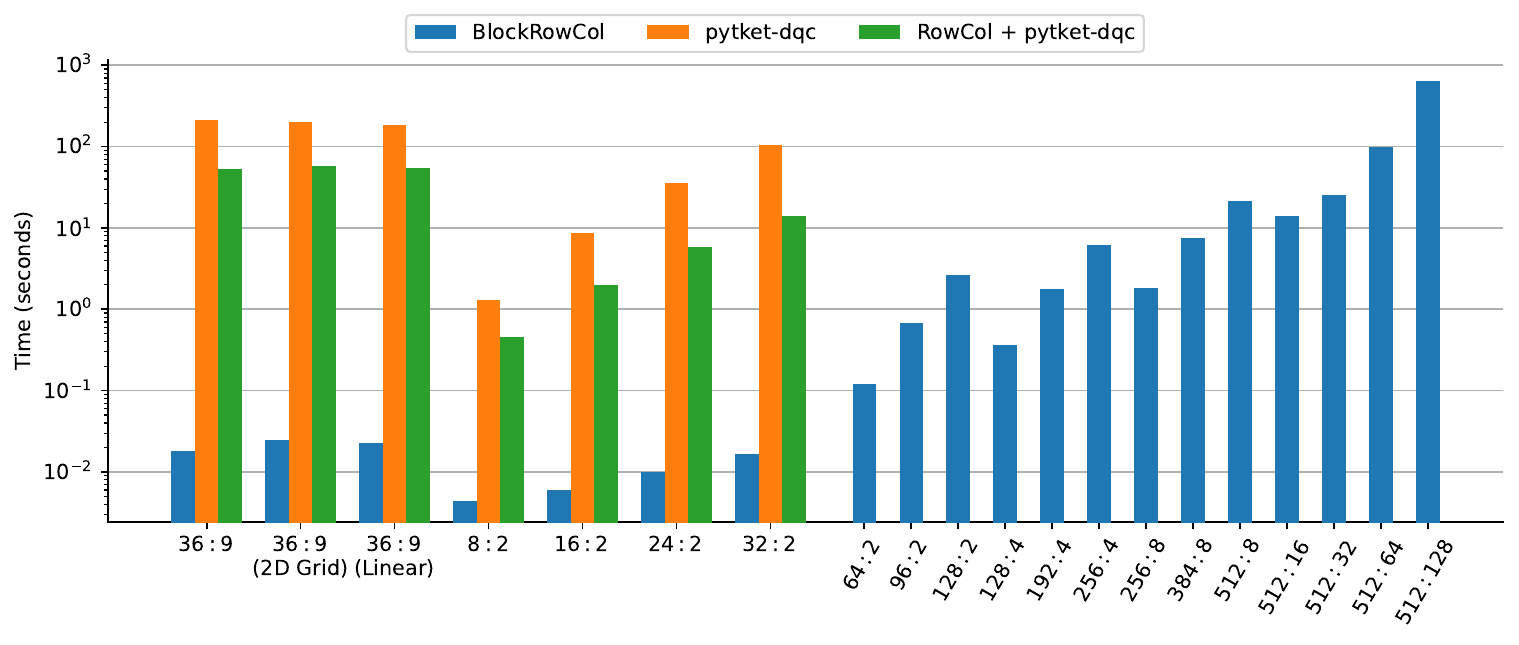}
    \caption{A comparison of the average total runtime used by each of the three distributed CNOT circuit synthesis methods compared in this section. Each bar is labeled as $n : k$, and represents an all-to-all connected architecture except where otherwise notated. Each instance ran on one core of an AMD EPYC 7551 processor. Note the log-scale on the Y-axis. On the left, we see the two families of architectures that are compared in Figure \ref{fig:benchcnot}. We see that there is a significant slowdown for all methods when increasing the number of qubits (and hence circuit depth) or the number of partitions, but that they are relatively independent of choice of connectivity. On the right, we see how \textsc{BlockRowCol} scales as the number of qubits is increased to $n = 512$, for a range of numbers of partitions up to $k = 128$. }\label{fig:benchspeed}
\end{figure}

\section{Discussion}
\label{sec:discussion}

In the previous sections we have given methods for optimizing various kinds of circuits for distributed architectures. As also discussed in the introduction, in this section we provide further motivation for these constructions by seeing how they can be applied to three scenarios: logical CNOT circuits in general CSS codes \cite{calderbank1996good} and phantom codes \cite{koh2026entangling}, Pauli exponential synthesis in the bicycle architecture \cite{yoder2025tour}, and classical simulation of quantum circuits via tree tensor networks \cite{seitz2023simulating}. Finally, we will discuss improvements and extensions to our methods that could be made in the future.

\subsection{Application to CSS Codes}
\label{sec:csscodes}

\newcommand{\lCNOT}{\ensuremath{\overline{\mathrm{CNOT}}}~}

Recent architecture proposals \cite{yoder2025tour, webster2026pinnacle} for fault-tolerant quantum computation focus on the use of quantum error correcting codes that encode more than one logical qubit in each code block. These can be much more efficient than other methods, such as those based on the surface code \cite{litinski2019game}, but in exchange, the fault-tolerant operations that can be performed between blocks are limited in addressability; operations targeting specific qubits within a block may not be available. Moreover, due to physical layout constraints, it may not be possible to apply operations between an arbitrary pair of blocks. One non-local operation which is available for any CSS code \cite{calderbank1996good}, is a transversal (and hence fault-tolerant) \lCNOT operation between blocks of the code that performs a CNOT gate between all pairs of qubits. That is, given the encoder map $E$, we have:
\ctikzfig{phantom-code-encoder}
In fact, any code that has a (weakly) transversal \lCNOT must be (locally equivalent to) a CSS code \cite{zalcman2023css}. While it is possible to implement any specific logical non-local CNOT gate using local operations and \lCNOT gates, applying this translation naively to a CNOT circuit would be inefficient and hard to parallelize. Instead, it would be beneficial to group together many non-local gates and implement them using a few \lCNOT gates, while respecting the inter-block connectivity constraints of the architecture. We will now show how to adapt \textsc{BlockRowCol} for this task, by constructing arbitrary block row additions using a constant number of \lCNOT gates. 

First, note that when performing a block row addition with coefficient matrix $R$, \textsc{BlockRowCol} uses Lemma \ref{thm:blockrowaddgates} to generate a set of $r = \mathrm{rank}(R)$ ZX-type generalized CNOT gates $G_1, \dots, G_r$. By construction, these must all commute, and so we can find a pair of local CNOT circuits $C_A$ and $C_B$ such that all $G_i$ can be performed simultaneously:
\ctikzfig{phantom-code-rank-decomp}
The CNOT gates in the middle of the circuit are acting on the top $r$ qubits of each block. In the case that $R$ is full rank, which will occur relatively frequently in random CNOT circuits, this can be implemented exactly with one \lCNOT gate. In the general case, we can always decompose this into at most three \lCNOT operations. To see this, note that:
\ctikzfig{phantom-code-rank}
This identity shows us how to implement a rank-one block row addition using two \lCNOT operations in the two-qubit case. By repeating this $r$ times on separate pairs of qubits and combining the \lCNOT gates, we can implement a rank-$r$ block row addition also using two \lCNOT operations, whenever $r \leq \frac{1}{2} \frac{n}{k}$. Note that the non-local gates in the two \lCNOT operations will cancel out on the $\frac{n}{k} - 2r$ remaining untouched qubits. In the case that $r > \frac{1}{2} \frac{n}{k}$, we can instead apply non-local CNOT gates to $\frac{n}{k} - r$ qubits using the previous technique, and then an additional \lCNOT operation which cancels with these to obtain $r$ non-local CNOT gates as desired. 

Strategies to implement local operations can vary greatly depending on the code (as demonstrated in the next section), but a generic strategy that applies to many codes is surgery \cite{cowtan2026parallel}, which can implement logical Pauli product measurements. Assuming that at least one ancilla qubit is available to each code block, it is known that any $\frac{n}{k}$-qubit Clifford circuit can be implemented using at most $O(\frac{n}{k})$ such measurements. This is done by first decomposing the circuit into at most $\frac{n}{k} + 1$ unitaries of the form $P(A, \frac{\pi}{2})$ \cite{pllaha2021decomposition}, each of which can be implemented by preparing an ancilla, performing a joint Pauli product measurement, and finally destructive measurement of the ancilla \cite[Figure 11(b)]{litinski2019game}. 

Since \textsc{BlockRowCol} performs $2k(k-1)$ block row additions followed by $k$ local CNOT circuits, the above construction shows that we can write any CNOT circuit using at most $6k(k-1)$ \lCNOT gates and $O(k^2)$ local CNOT circuits. This leads to the following theorem:

\begin{theorem}
    Suppose $n$ logical qubits are encoded in $k$ blocks of a CSS code, where each block has an ancilla available. Then \textsc{BlockRowCol} can be used to implement any logical CNOT circuit with at most $6k(k-1)$ inter-block \lCNOT operations and $O(nk)$ intra-block Pauli product measurements, regardless of the inter-block connectivity.
\end{theorem}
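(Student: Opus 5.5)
The plan is to count the two kinds of resources separately, following the structure of \textsc{BlockRowCol} (Algorithm \ref{alg:blockrowcol}) at the level of block row additions rather than individual generalized CNOT gates. The algorithm performs at most $2k(k-1)$ block row additions: for each of the $k$ partitions it calls \textsc{BlockElimCol} twice, and each call touches at most $k-1$ tree edges in each of its two passes. The only slack is to check that this total over all partitions is $2k(k-1)$ rather than $4k(k-1)$. The pivot-making pass and the elimination pass together add at most one addition per edge of the Steiner tree. The removed partitions shrink the tree at each stage, and summing $|P|-1$ over the shrinking sets, with two \textsc{BlockElimCol} calls each, gives at most $2k(k-1)$. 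I would take this count directly from the discussion preceding the $2n(k-1)$ theorem. After these additions come $k$ purely local CNOT circuits on the diagonal blocks.

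For the \lCNOT count, I would show that each block row addition with coefficient matrix $R$ of rank $r$, between two connected blocks, is realized by local CNOT circuits together with at most three \lCNOT operations.

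\begin{enumerate}
\item Apply Lemma \ref{thm:blockrowaddgates} to get $r$ commuting ZX-type generalized CNOT gates.
\item Conjugate by local CNOT circuits $C_A$ and $C_B$, built from the rank factorization $R=\sum_i \vec u_i\vec v_i^T$ by sending $\vec v_i$ and $\vec u_i$ to unit vectors. This turns the addition into $r$ parallel CNOTs between qubit $i$ of each block, for $i \le r$.
\item If $r=\frac{n}{k}$, this is exactly one \lCNOT.
\item If $r\le \frac12\frac{n}{k}$, use the two-qubit identity for a rank-one addition from two \lCNOT operations, applied on disjoint pairs of qubits. The untouched qubits see two \lCNOT gates that cancel, so two \lCNOT operations suffice.
\item Otherwise, realize the complementary $\frac{n}{k}-r\le\frac12\frac{n}{k}$ CNOTs with two \lCNOT operations and compose with one further \lCNOT. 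Since CNOTs on distinct pairs commute and are involutions, the complement cancels to leave exactly the desired $r$ CNOTs.
\end{enumerate}

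Multiplying by $2k(k-1)$ additions gives at most $6k(k-1)$ \lCNOT operations. Since each addition is along an edge of the partition connectivity graph, this holds regardless of connectivity. The main subtlety is ensuring the complement trick is correct up to local operations only, and I would check it by tracking parity matrices.

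For the Pauli measurement count, I would note that every local operation in the construction is a Clifford circuit on a single block of $\frac{n}{k}$ logical qubits. This covers $C_A$, $C_B$, their inverses, the local parts of the two-qubit identity, and the final $k$ \textsc{RowCol} circuits. Each block row addition involves a constant number of such local circuits on two blocks, for $O(k^2)$ local Clifford circuits overall. By the cited decomposition of \cite{pllaha2021decomposition}, each is a product of at most $\frac{n}{k}+1$ rotations $P(A,\frac{\pi}{2})$. Each rotation costs $O(1)$ Pauli product measurements using the available ancilla \cite[Figure 11(b)]{litinski2019game}. The total is therefore $O(k^2)\cdot O(\frac{n}{k}) = O(nk)$ intra-block measurements.

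The main obstacle I anticipate is this bound. It needs $k^2\cdot\frac{n}{k}=nk$, which uses $\frac{n}{k}\ge1$ and the fact that the ancilla-based rotation gadget has constant cost per rotation. It also needs the per-addition local circuits to be $O(1)$ in number, so they must not be counted per rank-one term. I would therefore merge consecutive local circuits on the same block before counting.
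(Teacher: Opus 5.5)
Your proposal is correct and follows the paper's argument essentially step for step: $2k(k-1)$ block row additions, each realized with at most three $\overline{\mathrm{CNOT}}$ operations (full rank, disjoint pairs for $r \le \frac{1}{2}\frac{n}{k}$, or the complement trick), plus $O(k^2)$ local Clifford circuits, each costing $O(\frac{n}{k})$ ancilla-assisted Pauli product measurements. One bookkeeping slip: a Steiner-tree edge can receive two additions in a single \textsc{BlockElimCol} call (one in the pivot-making pass and one in the elimination pass), so the per-partition count is $4(|P|-1)$, and it is this that sums to $2k(k-1)$, whereas ``at most one addition per edge'' with two calls would give $k(k-1)$, which the algorithm does not guarantee.
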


\subsection{Application to Phantom Codes}
\label{sec:phantom}

Phantom codes were recently introduced \cite{koh2026entangling} as a collection of CSS codes characterized by the property that any logical CNOT gates between qubits within one block of the code can be realized as permutations of the physical qubits, which are essentially free, and trivially fault-tolerant. To synthesize logical CNOT circuits for phantom codes, we can therefore use any combination of local logical CNOT gates along with $\overline{\mathrm{CNOT}}$. 

This fits well in our framework because only the non-local $\overline{\mathrm{CNOT}}$ gates actually contribute any physical gates, so if we were to minimize the number of non-local gates, we directly minimize the number of two-qubit physical gates and hence the error. Since \textsc{BlockRowCol} performs at most $2k(k-1)$ block row additions, the results of the previous section give the following theorem. This is quite similar to Theorem 1 from \cite{koh2026entangling}, but slightly more general, because it allows for any number of code blocks (rather than a power of two), and does not require all-to-all connectivity between blocks. Importantly, however, their result also shows a linear bound on the depth of logical CNOT circuits which is not the case for \textsc{BlockRowCol}.

\begin{theorem}
    Suppose $q$ physical qubits encode $n$ logical qubits which are organized in $k$ blocks of a $[\![\frac{q}{k}, \frac{n}{k}, d]\!]$ phantom code. Then \textsc{BlockRowCol} can be used to implement any logical CNOT circuit with at most $6k(k - 1)$ \lCNOT operations, regardless of the inter-block connectivity. This corresponds to at most $6q(k-1)$ physical CNOT gates.
\end{theorem}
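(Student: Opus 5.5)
The plan is to obtain this as a specialization of the previous theorem on CSS codes, replacing the surgery-based treatment of local operations by the defining property of phantom codes. Run \textsc{BlockRowCol} on the parity matrix $M$ of the target logical CNOT circuit with the given inter-block connectivity graph $G$. Its output consists of at most $2k(k-1)$ block row additions, one per call to \textsc{BRowOpCirc} inside the $2k$ calls to \textsc{BlockElimCol}. Each call works along edges of a Steiner tree in the induced subgraph $G_P$, so every block row addition acts between adjacent blocks. The output ends with $k$ purely local logical CNOT circuits, one on each diagonal block. Since the argument only uses the worst-case count $2k(k-1)$, it does not depend on $G$.

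The next step is to implement each block row addition with coefficient matrix $R$ of rank $r$ using at most three \lCNOT operations, exactly as in Section \ref{sec:csscodes}. Lemma \ref{thm:blockrowaddgates} gives $r$ commuting ZX-type generalized CNOT gates. Local CNOT circuits $C_A$ and $C_B$ on the two blocks then rotate these into $r$ single-qubit CNOTs on the top $r$ logical qubits of each block. These are realized with two \lCNOT operations when $r \le \frac{n}{2k}$, and with three otherwise, by the complement trick. The case $r = n/k$ needs only one. All auxiliary operations, namely $C_A$, $C_B$, their inverses, and the final diagonal-block circuits, are intra-block logical CNOT circuits. By the phantom property they are physical qubit permutations, so they cost no physical gates and are trivially fault-tolerant. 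In contrast to the general CSS case, no Pauli product measurements or ancillas are needed. Multiplying gives at most $3 \cdot 2k(k-1) = 6k(k-1)$ \lCNOT operations.

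Finally, a transversal \lCNOT between two blocks of a $[\![\frac{q}{k},\frac{n}{k},d]\!]$ code uses exactly $\frac{q}{k}$ physical CNOTs. The physical count is therefore at most $6k(k-1)\cdot\frac{q}{k} = 6q(k-1)$.

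The step that needs the most care is the three-\lCNOT decomposition of a rank-$r$ block row addition. The two-qubit identity, applied in parallel on disjoint qubit pairs, must actually give a rank-$r$ update, and the unwanted transversal CNOTs on the untouched qubits must cancel. In the case $r > \frac{n}{2k}$ one must check that the extra \lCNOT combined with the $\frac{n}{k}-r$ constructed gates leaves precisely $r$ CNOTs. All interleaved corrections must be intra-block CNOTs, so that the phantom property absorbs them. I would first verify this on a two-qubit-per-block example and then argue by tensoring over disjoint pairs.
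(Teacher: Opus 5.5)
Your proposal is correct and follows the paper's route. You bound the number of block row additions by $2k(k-1)$ and realize each with at most three $\overline{\mathrm{CNOT}}$ operations via the rank-decomposition and complement construction of the CSS-code section. The interleaved intra-block CNOT circuits and the final diagonal-block circuits are absorbed as free physical permutations by the phantom property, and multiplying by $q/k$ physical CNOTs per $\overline{\mathrm{CNOT}}$ gives $6q(k-1)$.
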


\subsection{Application to the Bicycle Architecture}

Recently \cite{yoder2025tour} introduced the bicycle architecture, which is a design for a modular quantum computer that uses multiple bivariate bicycle code blocks coupled together with shared entanglement. In particular, the modules have a linear nearest neighbor connectivity, with one qubit in each capable of non-local gates, and a T-gate factory which can produce non-Clifford phase gates in one module. In this architecture, the highest-error operations are non-local gates, which have two orders of magnitude higher logical error rates than local gates. Therefore, this is a good fit for our framework, especially since its native gate-set is easily phrased in terms of Pauli exponentials. \cite[Section 3]{yoder2025tour} details a compilation pipeline which implements $d$ (non-local) Pauli exponentials across $k$ modules using $kd$ non-local gates. In contrast, the methods presented in Section \ref{sec:distpauliexp} use $2kd$ non-local gates for the same task in the worst case, although in practice for dense circuits we would expect the performance to be closer to $kd$ (since the probability of the Steiner tree needing fill-in decreases exponentially as the block size increases, for random circuits). \cite{yoder2025tour} achieves this by using ancillas and measurement, which our representation does not account for; further work would be required to match this.

\subsection{Application to Tree Tensor Networks}

Tensor network methods are a family of algorithms for classically simulating quantum circuits. Popular variants include the matrix product state and tree tensor network methods \cite{seitz2023simulating}. These aim to speed up classical simulations relative to statevector methods by taking advantage of the low-rank structure of the connection between neighboring qubits. Consequently, for these methods, the application of quantum gates between distant qubits can be either extremely expensive, because the connection between all the intermediate qubits may be affected, or induce errors if the network must be truncated to conserve memory. Therefore, we would like to rewrite our circuits in such a way that these gates are minimized. Here, we will consider CNOT circuits; unfortunately it is unclear how to apply the same technique for Clifford circuits, except via Clifford normal forms \cite{aaronson2004improved}.

We focus on tree tensor networks, and specifically those organized a perfect binary tree. In these tensor networks, qubits are organized as the leaves of a binary tree. Suppose we are given two qubits $i$ and $j$, then we say that a gate between $i$ and $j$ is a level-$l$ gate if the distance from $i$ and $j$ to their nearest common ancestor in the tree is $l$. The cost to apply a gate scales as $2^{2^l}$ in the worst case \cite{seitz2023simulating}, which is super-exponential in the level. Thus, minimizing the number of highest-level gates is especially important, and in fact \textsc{BlockRowCol} can be used for this task. Suppose that we are given a system of $n = 2^a$ qubits, then the highest level gates are at level $a - 1$, and the cost to apply them will be $2^\frac{n}{2}$. Now consider a CNOT circuit $C$ and apply \textsc{BlockRowCol} on $k = 2$ blocks; this uses at most three block row additions. As in Section \ref{sec:phantom}, each of these commute, and so we can organize them into three layers of at most $\frac{n}{2}$ level-$(a-1)$ CNOT gates
\ctikzfig{tree-tensor-network-decomp}
and eight local CNOT circuits $C_A$ through $C_H$. Now suppose that were were to apply \textsc{BlockRowCol} recursively to each of these, with $k = 2$ blocks each time, until only one qubit remains in each block. The total cost to apply all of these gates can be defined by the following recurrence relation:
$$ C(n) \leq 3 \cdot \frac{n}{2} \cdot 2^{\frac{n}{2}} + 8C\left(\frac{n}{2}\right) \implies C(n) \leq \frac{3n}{2}\sum_{k = 0}^{\log_2 n} 2^{2k + \frac{n}{2^{k+1}}}$$ 
Consider the function $f(k) = 2k + n2^{-(k+1)}$, then $f'(k) = 2 - \ln(2)n2^{-(k+1)} < 0$ whenever $k \leq \log_2(n) - 3$. Thus, all but the last four terms of the sum are decreasing, so we have:
$$ C(n) \leq \frac{3n}{2}\left[\log_2{n} \cdot 2^\frac{n}{2} + \sum_{i = 0}^3 2^{2(\log_2{n} - i) + n2^{i - \log_2{n} - 1}}\right] \leq \frac{3n}{2}\left[\log_2{n} \cdot 2^\frac{n}{2} + 3n^2\right] = O(2^\frac{n}{2}n\log_2{n})$$
By comparison, applying Patel-Markov-Hayes' algorithm \cite{patel2003efficient} would generate $O(n^2 / \log_2 n)$ highest-level gates, with cost $O(2^\frac{n}{2}n^2/\log_2{n})$, and hence \textsc{BlockRowCol} can do asymptotically better by a factor of $n / \log_2^2{n}$. A similar conclusion would hold whenever the cost of applying a gate grows sufficiently quickly as the level increases, even if it is not super-exponential as assumed here. We leave a practical implementation and benchmark of this technique to future work.

\subsection{Future Work}
\label{sec:future}

We conclude by discussing some directions for future research. Perhaps most impactful improvement to our methods would be to find more effective heuristics for the many steps of the algorithms given here that are not fully fixed by the problem. Some of these have previously been considered at length for non-distributed settings, like pivot selection \cite{winderl2023architecture}, and could be adapted. On the other hand, there are additional free variables that are not present in the non-distributed setting, like the choice of anticommuting Pauli strings when eliminating a block using generalized CNOT gates, for which customized heuristics would need to be developed. 

We could also consider extensions to the circuit representation, for instance, by integrating these methods with previous embedding-based approaches to distributed circuits \cite{wu2023entanglement}, or by considering new block gates that are both non-local and non-Clifford (for example, replacing generalized CNOT gates with generalized controlled-phase gates). In the same direction, we would like to consider the case of synthesis up to a permutation of the qubits, since the ability to reorder qubits between partitions appears to have played a role in our benchmarking results. While synthesis up to a permutation of \emph{partitions} is fairly trivial to implement (it is straightforward to adapt the PermRowCol \cite{meijer2023dynamic} and reverse-traversal \cite{li2019tackling} methods), permutations of qubits between partitions appears more difficult. More generally, tableau-based elimination methods are not the only algorithms for synthesizing CNOT, Clifford, and Clifford+T circuits: it would be interesting to see if SAT-based methods \cite{shaik2025cnot}, the LazySynth framework \cite{martiel2022architecture}, template-based methods \cite{bravyi2021clifford}, or peephole optimization \cite{kliuchnikov2013optimization} could be adapted to use block operations.

Finally, we close with an open question: in Theorem \ref{thm:blockrowcolapproxopt}, we were able to show that \textsc{BlockRowCol} is approximately optimal for $k = 2$ because we could define the concept of the rank of a submatrix, which is modified in a controlled way when a ZX-type generalized CNOT gate is applied. Is there an analogous concept of rank for subsections of a stabilizer tableau, so that (generic) generalized CNOT gates look like `rank-one updates'? This does not seem readily apparent, even considering the representation of tableaux as symplectic matrices.

\subsection*{Acknowledgments}

\noindent The author acknowledges support from the US Department of Energy under Award No. DE-SC0020264, and thanks Pablo Andr\'es-Mart\'inez for discussions regarding tree tensor network simulation, Richie Yeung for useful information on Clifford circuit synthesis, Benjamin Rodatz and Boldizs\'ar Po\'or for suggesting the multipartite setting and the application to block codes, and the QPL 2025 and 2026 reviewers for helpful comments.

\vspace{2mm}

\noindent This work is dedicated to the late Nuno F. Loureiro, without whose patience and support it would not have been possible. My thanks to the whole Loureiro group for creating such a welcoming environment.

\bibliographystyle{quantum}
\bibliography{refs}

\end{document}